\title[Optimal multiplexing on matrix Lie groups]{On optimal multiplexing of an ensemble of discrete-time constrained control systems on matrix Lie groups}
\author[C. Maheshwari, S. Srikant, and D. Chatterjee]{Chinmay Maheshwari}
\address{EECS\\ UC Berkeley\\ California, USA}
\author{Sukumar Srikant and Debasish Chatterjee}
\address{Systems and Control Engineering\\ IIT Bombay, Powai\\ Mumbai 400076, India\\ \url{http://www.sc.iitb.ac.in/~srikant}\\ \url{http://www.sc.iitb.ac.in/~chatterjee}}
\email{chinmay\_maheshwari@berkeley.edu,\{srikant.sukumar,dchatter\}@iitb.ac.in}
\thanks{This work was partially supported by the grant 17ISROC001 from the Indian Space Research Organization.}
\begin{document}
	\begin{abstract}
		{We study a constrained optimal control problem for an ensemble of control systems. Each sub-system (or plant) evolves on a matrix Lie group, and must satisfy given state and control action constraints pointwise in time. In addition, certain multiplexing requirement is imposed: the controller must be shared between the plants in the sense that at any time instant the control signal may be sent to only one plant.  We provide first-order necessary conditions for optimality in the form of suitable Pontryagin maximum principle in this problem. Two numerical experiments are presented: first, for a system of two satellites; second, for a system of two underwater vehicles  performing energy optimal maneuvers under the preceding family of constraints.}
	\end{abstract} 

	\maketitle
	\section{Introduction}
	\label{sec: Introduction}

This article studies a problem of constrained optimal control of an ensemble of \dt \ control systems that evolve on a class of non-flat manifolds, namely, matrix Lie groups, that arise naturally in models of mechanical systems. These systems are assumed to be controlled over a network that permits the transmission of control signals to at most one plant at each time instant; we may view this stipulation as a constraint dictated by the network. In other words, there is a \emph{multiplexing} (or \emph{scheduling}) scheme that selects one plant from the ensemble at each time instant, and that particular plant is controlled at that instant while the rest of the plants evolve under zero control. Such multiplexers arise naturally in any situation where a central server must cater to a larger number of systems than the number of available processors. For instance, multiplexers (or schedulers) are present in every microprocessor that drives our computers, and they schedule jobs for each core according to priorities. From a control-theoretic perspective, if the control server is incapable of parallel processing but is assigned to control an ensemble of control systems, it must process the control tasks serially and, consequently, employ a multiplexing scheme to dispatch the control signals to different plants in the ensemble \cite{Kumar2018sparseLinear}. A safety-critical application of such a setup is found in medicine, where a fleet of micro-robots are injected into the blood stream of a patient for targeted drug delivery \cite{ref:ChoJinCap-15}; each micro-robot is too small to carry its own control and communication unit, and their control signals must be constructed in a way such that only one of them is controlled at any instant of time.

The importance of the co-design of control-multiplexing schemes can hardly be overstated in the context of networked control \cite{rehbinder2004scheduling}. Typically, a networked control system consists of a collection of sub-systems (or plants) with actuators and sensors, all connected over a shared communication channel; such systems arise in a variety of applications including automobiles, aircraft, spacecraft, the manufacturing and process industry, etc. Different approaches to the co-design of control-multiplexing schemes have been proposed in the literature, and these approaches can be broadly classified based on whether the multiplexing is (a) periodic \cite{gorges2007optimal} or (b) aperiodic \cite{reimann2012novel, LQQOptimalMultiplex, Kumar2018sparseLinear}, and a large body of literature is available today that pertains to both of these types. For instance, the control and scheduling co-design problem was formulated as a Lyapunov-based stabilization problem for switched linear systems in \cite{reimann2012novel} and then transformed into an optimization problem with linear matrix inequality (LMI) constraints; optimal control and scheduling of NCSs that are modeled as \dt \ switched linear systems have been presented in \cite{LQQOptimalMultiplex}, where the authors minimize a quadratic performance criterion via a receding horizon scheme and scheduling strategy and the resulting problem is solved via dynamic programming; recently in \cite{Kumar2018sparseLinear} the authors have implemented sparse optimal scheduling for continuous-time linear systems. In the article at hand we take a step beyond by \emph{not} stipulating the multiplexing algorithm to be of either of the two types (a) and (b); instead, we insist that the multiplexing algorithm is optimal.

Constraints on the states and the control actions are omnipresent in realistic control systems. Consider, for instance, a satellite in outer space that is commanded to undergo a change of orientation to align its sensors to a particular star. During such a maneuver the control actions at one's disposal are limited by the mechanical capabilities of the actuators in the satellite, and in order to ensure the safety of the mechanical components on board (that may fail if the angular velocities exceed a safe limit), the momenta of the satellite during such maneuvers must not exceed given safety thresholds. Here we have control action constraints due to physical limitations and state constraints that must be satisfied for safety. Designing controllers that execute given tasks while satisfying such state and control constraints is a non-trivial and challenging task. (Indeed, most of the available literature on control-multiplexing co-design problems consider the underlying system dynamics to be linear with the notable exception of \cite{hashimoto2017collision}, where the authors propose a scheduling algorithm for model predictive control (MPC) of an ensemble of nonlinear continuous time systems with the constraints on the control actions; constraints on the states of the system have not, however, been considered there.) The problem gets further complicated when some form of optimality is demanded over either the state trajectories, the control action trajectories, or both.

The Pontryagin maximum principle (PMP) is a widely used tool that provides the first order necessary conditions for the optimality of control systems, which takes the form of a set of nonlinear equations that may be solved numerically to obtain the candidate optimal control trajectories. The accuracy of such numerical schemes depends largely on the discretization of the underlying dynamics of the systems. For systems that evolve on non-flat configuration spaces, such discretization procedures are non-trivial, and \dt \ models should preferably be derived using the ideas of discrete mechanics \cite{MR2009697}. Moreover, since control algorithms are applied digitally today, it is highly desirable to directly work with \dt \ control strategies, especially those that ensure a high level of accuracy and fidelity such as discrete mechanics. A series of studies centered around various \dt \ versions of the PMP have recently been conducted with the intention of employing such PMPs as general platforms for constrained state-action trajectory synthesis. \cite{phogat2016discrete} addressed optimal control of \dt \ systems evolving on matrix Lie groups under state-action constraints, and this work has been extended to cater to more general systems evolving on smooth manifolds in \cite{Assif2018}. Frequency constraints on the control action trajectories were included in the list of constraints in \cite{paruchuri2017discrete} for systems on Euclidean spaces, and then extended in \cite{KPPCP} to systems evolving on matrix Lie groups. All these results derive essentially from the by-now classical work \cite{bolt1975method}.

In this article we continue this line of study by expanding the scope of the preceding results by incorporating a new type of constraint. Here we provide the first order necessary conditions for optimally controlling an ensemble of \dt \ control systems evolving on matrix Lie groups while satisfying prescribed state-action constraints and controlled remotely via a single shared control channel. Optimal control problems of such types find standard applications, e.g., in the control of quadcopter fleets \cite{ritz2012cooperative, tang2017multi}, of groups of satellites \cite{scharf2003survey}, etc., and nonstandard applications, e.g., in the control of medical micro-robots \cite{ref:ChoJinCap-15}, where providing the control input simultaneously to every subsystem is neither feasible nor desirable. 

Our specific contributions are summarized below: 
\begin{itemize}[leftmargin=*]
	\item We provide a PMP for constrained optimal control of an ensemble of \dt \ control systems that evolve on matrix Lie groups, where
	\begin{itemize}
		\item constraints on the states at each instant of time are present,
		\item constraints on the control actions at each instant of time are present, and
		\item multiplexing constraints imposed by shared computational/communication resources that are used to command our ensemble of control systems are present, taking the form that only one system is controlled at any given time while the others evolve under zero control.
	\end{itemize}
	\item Our results are designed to work with \dt \ dynamics derived via \emph{discrete mechanics} \cite{MR2009697}, thereby preserving the underlying manifold structure as well as certain important system invariants. This important and desirable feature contributes to greater accuracy and fidelity than otherwise for \dt \ implementation.
\end{itemize}
To the best of our knowledge there is no prior work on the class of constrained problems treated here although it is quite a natural setting. While our results may be employed by numerical algorithms to solve for optimal multiplexed control trajectories via the so-called indirect method \cite{Trelat2012} (an endeavour to be pursued separately), they may also be used to verify the optimality of solutions obtained via third party ``black-box'' solvers.

Our article exposes as follows: In \S \ref{sec: Problem Formulation} we formulate the problem statement and \S \ref{sec: Main Result} contains our main result, a proof of which is given in \S \ref{sec: Proof}. \S \ref{sec: Num Exp} contains a detailed numerical experiment to illustrate the efficacy of our technique. We employed the freely available NLP solver \textsf{CasADi} for two separate cases: for a pair of satellites and a pair of underwater vehicle, sharing a single control channel and performing energy optimal point-to-point ballistic reachability maneuvers under constraints on the states and on the magnitudes of the control actions. The outputs of the solver were verified to be optimal by employing our main result.

	\textbf{Notations}: For a positive integer \(\nu\), the transpose of a vector \(\xvec[] \in \R^\nu\) is denoted by \(\xvec[]\top\), \(\norm{\xvec[]}\) denotes standard Euclidean norm defined by \(\norm{\xvec[]}\) \defas \(\sqrt{\xvec[]\top \xvec[]}\). For any positive integer \(\intK\) we define \([\intK] \defas \{0,1,2,\dots,\intK\}\) and \({[{\intK}]\dual}  \defas \{1,2,\dots,\intK\}\) . For two positive integers \(M,m \) such that \(M \ge m\) the number of ways of choosing \(m\) distinct elements from a set containing \(M\) distinct elements is written as \({M \choose m}\). We denote the cardinality of a finite set \(S\) by \(\card{S}\). For two positive integers \(\nDim[1]\) and \(\nDim[2]\), we define a matrix \(\linTF \in \R^{\nDim[1]\nDim[2] \times \nDim[1]}\) by
	\begin{equation*}
		\linTF_{i,j} \Let \begin{cases}
		1  \quad \text{if } (j-1)\nDim[2] +1 \le i \le j \nDim[2],\\
		0  \quad \text{otherwise}.
		\end{cases}
	\end{equation*}
	Let \(\vect[1] \in \R^{\nDim[1]}\), \(\vect[2] \in \R^{\nDim[2]}\). We define
	\begin{align*}
		\vect[1] \opr \vect[2] \defas \Big \langle \linTF \vect[1], (\underbrace{\vect[2]\top,\vect[2]\top\dots, \vect[2]\top}_{\nDim[1] \text{times}})\top \Big\rangle \in \R.
	\end{align*}
	We denote the direct sum of two vector spaces \(\VS[1]\) and \(\VS[2]\) by \(\VS[1] \DirSum \VS[2]\). For two vectors \(\xvec[1], \xvec[2] \in \R^m\), where \(m\) is some positive integer, we define the Hadamard product\footnote{Consider two vectors \(\xvec[1], \xvec[2] \in \R^n \). The Hadamard product (a.k.a.\ the Schur product) of \(\xvec[1]\) and \(\xvec[2]\) produces the vector \(v \in \R^n\) given by the entry-wise multiplication of \(\xvec[1]\) and \(\xvec[2]\), and the Hadamard product for matrices is defined similarly.} between them by \(\xvec[1] \HadPr \xvec[2]\). For a vector \(\xvec[] \in \R^n\), where \(n\) is a positive integer, we use \(\vecCompare{{\xvec[]}}{0}\) to denote that all the components of \(\xvec[]\) are non-positive.
	
	\section{Problem Formulation}
	\label{sec: Problem Formulation}
We start with the description of a \emph{multiplexed control system}. It is a dynamical system comprising of \(\numPlants\) plants controlled by a server that can transmit a control signal at a given time to at most one among \(\numPlants\) plants via a transmission channel. The situation is as shown in Figure \ref{fig: ServerChannel}. {We assume that the transmission of control signals from the server to the plants is without any delay. Further, we assume that there is no uncertainty in the system. To motivate the discussion, consider an ensemble of satellites that are required to orient some sensors to a distant object in the space. The discrete-time attitude dynamics of each satellite is given by (\cite{phogat2017discrete})
\begin{equation} \label{eq: IntroSatellite}
\begin{aligned}
	\SCRotm[t+1] &= \SCRotm[t]\SCGrpDyn[t](\SCRotm[t],\SCAngM[t]), \\
	\SCAngM[t+1] &=  \SCAngM[t] + \SCstep\SCCont[t],
\end{aligned}
\end{equation}
where \(t\) is an integer, \(\SCstep\) is the discretization step size, \(\SCRotm[t] \Let \SCRotm[](th) \in \R^{3\times3}\) denotes the rotation matrix that encapsulates the orientation information of the satellite at time \(th\), \(\SCAngM \Let \SCAngM[](th)\in \R^3\) represents the angular momentum of the satellite at time \(th\), \(\SCCont \in \R^3 \) is the control input to the system in the form of torque at time \(th\), \(\SCGrpDyn\) is the map depicting the dynamics of the system on the Lie group (which is SO\((3)\) in this case) at time \(th\). For the sake of brevity, this point onwards we shall omit the step size \(\SCstep \) while referring the time instant \(t\SCstep \) and call it simply the time instant \(t\).
}

Consider an ensemble of control systems comprising of \(\numPlants\) plants. For \(\iRange \), fix positive integers \(\ithSysDimState\), \(\ithSysDimControl \) and \(\ithSysDimConstraint \). Motivated by \eqref{eq: IntroSatellite}, we assume more generally that the dynamics of the \(\ith\) plant in the ensemble are split into two parts, one that evolves on a matrix Lie group \(\grpMfdIndex\), while the other evolves on a Euclidean space \(\eucldStateSpace\) \cite{phogat2016discrete}. More precisely, the \dt\ control system corresponding to the \(\ith\) plant evolves on a configuration space \(\ithConfMfd\), and is described by the recursion
\begin{equation}
	\label{eq:ithSysDynamics}
	\begin{cases}
		\grpIndexT[t+1] = \grpIndexT \,\jumpMapIndexT\bigl(\grpIndexT, \stateIndexT\bigr), \\
		\stateIndexT[t+1] = \ithSysDyn\bigl(\grpIndexT, \stateIndexT, \contIndexT\bigr),
	\end{cases}
	\qquad \timeRange,\ \iRange,
\end{equation}
with the following data:
\begin{enumerate}[label=(\alph*), leftmargin=*, widest=b, align=left]
	\item \(\grpIndexT\) and \(\stateIndexT\) are the state variables corresponding to the \(\ith\) plant at the time instant \(t\), with \(\grpIndexT\) residing on the matrix Lie group \(\grpMfdIndex\) and \(\stateIndexT\) residing on the Euclidean space \(\eucldStateSpace\), respectively;
	
	\item \(\contIndexT \ \belongsTo \ \ithAdmControl \ \subsetTo \ \eucldControlSpace\) is the control action injected to the \(\ith\) plant at time \(t\), where \(\ithAdmControl\) is a convex and compact set (containing \(0 \belongsTo \eucldControlSpace\)) of admissible control actions that may be applied to the \(\ith\) plant at any time instant.\footnote{Recall that a set \(S \subsetTo \R^n\) is \textit{convex} if for any two point \(\xvec[], \yvec[] \belongsTo S\), we have \((1-\lam) \xvec[] + \lam \yvec[] \belongsTo S\) for every \(\lam \in [0,1]\). A non-convex set is one that is not convex. A set \(S \subsetTo \R^n\) is \emph{compact} if and only if it is closed and bounded; this is a consequence of the Heine-Borel Theorem (\cite[Theorem 6.1.1.]{RealAnalysis}).}
	
	\item \(\jumpMapIndexT: \ithConfMfd \ra \grpMfdIndex\) is a smooth map describing the part of the dynamics of the \(\ith\) plant on the matrix Lie group \(\grpMfdIndex\);
	
	\item \(\ithSysDyn: \ithDynMfd \ra \eucldStateSpace\) is a smooth map governing the part of the dynamics of the \(\ith\) plant on the Euclidean space \(\eucldStateSpace\);
\end{enumerate}

The task of the multiplexer is to select from the aforementioned ensemble \eqref{eq:ithSysDynamics} of systems, at each time instant \(t\), the index \(i\) of the plant to which the control action at time \(t\) has to be applied. 

\begin{figure}
	\centering
	\includegraphics[width=0.8\linewidth]{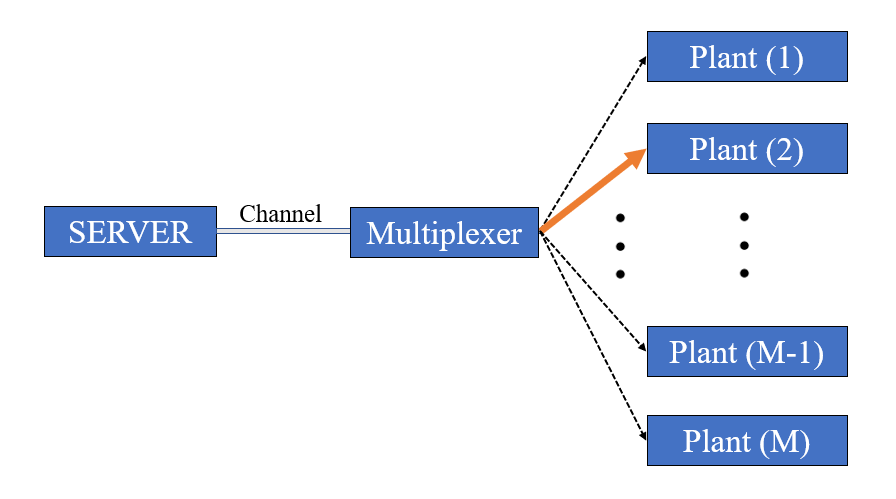}
	\caption{Schematic of a multiplexed control system with a server that sends control signal to the multiplexer, which in turn devolves the control signal to appropriate plant, to minimize a predefined cost incurred in the process. In the above figure the multiplexer chooses Plant(2) out of \numPlants \ plants.}
	\label{fig: ServerChannel}
\end{figure}
We regard the ensemble of systems \eqref{eq:ithSysDynamics} as a joint system in a natural way: we define \(\sysDimControl \defas \nuExpression\) and \(\sysDimState \defas  \nxExpression\) to be the dimensions of the admissible joint control action set and the Euclidean space for the joint system of \(\numPlants\) plants. The admissible joint set of control actions for the joint control system is
\begin{equation}
	\label{eq: AdmissibleActionSet}
	\uhash \defas \bigcup_{k=1}^{\numPlants} \bigg( \zeroI[1] \times \zeroI[2] \times \dots \ithAdmControl[k] \dots \times \zeroI[\numPlants] \bigg),
\end{equation}
where \(\zeroI[i] \belongsTo \eucldControlSpace\). The set in \eqref{eq: AdmissibleActionSet} is a ``star''-shaped admissible control action set due to which the multiplexing constraint is satisfied (that is, at any instant only one out of \(\numPlants \) plants is provided with the control action). Furthermore, the joint control system evolves on the matrix Lie group \({\gTot \defas  \qCollection}\) and the Euclidean space \(\xTot \defas \RPower[\sysDimState]\). 
\begin{remark}
	We equip the cartesian product \(\gTot\) with the direct product group structure. Recall that for two groups \(\grpMfdIndex[1], \grpMfdIndex[2]\) with \(\xvec[1], \yvec[1] \belongsTo \grpMfdIndex[1] \)and \(\xvec[2], \yvec[2] \belongsTo \grpMfdIndex[2]\) we define the group operation on \(\gTot \defas \grpMfdIndex[1] \times \grpMfdIndex[2]\) to be \((\xvec[1], \xvec[2])(\yvec[1], \yvec[2]) = (\xvec[1]\yvec[1], \xvec[2]\yvec[2])\). The identity element in \(\gTot\) is denoted by \(\IdEle[]\) where \(\IdEle[] \defas ( \IdEle[1], \IdEle[2])\). Here  \(\IdEle[1]\) and \(\IdEle[2]\) are the identity elements of \(\grpMfdIndex[1]\) and \(\grpMfdIndex[2]\), respectively. For Lie groups the product and inverse operations are smooth by definition, and a finite product of matrix Lie groups is also a matrix Lie group \cite[Example 5.1.3]{ref:RudSch-13}. 
\end{remark}   

Against the preceding backdrop, we formulate our optimal control problem:
\begin{equation}
\label{eq:ProbFormulation}
\begin{aligned}
	& \minimize_{(\uT)_{t=0}^{\nS-1}} && \mathcal{C}(\qHat,\xHat,\uHat) \Let \sum_{t=0}^{\nS-1} \ithCost\big(\qT, \xT, \uT\big) + \ithCost[\nS]\big(\qT[\nS], \xT[\nS]\big) \\ 
	& \text{subject to} &&
\begin{cases}
	\text{dynamics \eqref{eq:ithSysDynamics} } & \text{for all} \ \timeRange \ \text{and} \ \iRange, \\
	\uT \belongsTo \uhash & \text{for all} \ \timeRange,\\
	\stateConst \big(\grpIndexT, \stateIndexT\big) \leq 0 & \text{for all}  \ \timeRangeFullSt \ \text{and} \ \iRange, \\
	\left(\qT[0],\xT[0]\right)=\big(\ob{\qT[0]},\ob{\xT[0]}\big),
\end{cases} 
\end{aligned}
\end{equation}
with the following data:
\begin{enumerate}[label=(\alph*), leftmargin=*, widest=b, align=left]
	\item For every \(t \in [\nS]\), \(\qT[t] \in \gTot\) and \(\xT[t] \in \xTot \). Similarly, for \(t \in [\nS-1]\) we have \(\uT[t] \in \uhash\);
	
	\item \(\qHat \defas (\qT[0], \qT[1], \dots , \qT[\nS])\),  \(\xHat \defas ( \xT[0], \xT[1], \dots , \xT[\nS] )\), and \(\uHat \defas  ( \uT[0], \uT[1], \dots , \uT[\nS-1] )\) are ordered tuples of elements in \(\gTot\), \(\xTot\) and \(\uhash\), respectively;
	
	\item \(\ithCost: \DynMfd \ra \R\) denotes the cost incurred at each time instant \(\timeRange\);
	
	\item \(\ithCost[\nS]: \ConfMfd \ra \R\) denotes the cost incurred at the final time instant \(t = \nS\);
	
	\item \(\stateConst: \ithConfMfd \ra \eucldConstraintSpace\) denotes state constraints that needs to be satisfied by the \(\ith\) plant for every \(\timeRangeFullSt\);
	
	 \item \((\ob{\qT[0]},\ob{\xT[0]})\) denote the user-defined initial conditions. 
\end{enumerate}

	\section{Main Results}
	\label{sec: Main Result}
	\subsection{Preliminaries}\label{subsec:firstMR}
We begin with several definitions that will be needed to state our main results. We adhere throughout this article to the definitions of smooth manifolds, tangent spaces and cotangent spaces given in \cite[Chapter 1]{ref:RudSch-13}. For any point \(q \) on a smooth manifold \(\mfd \), we denote the tangent space and cotangent space at the point \(q\) by \(\tanLift{q}{\mfd} \) and \(\cotanLift{q}{\mfd} \)  respectively.

\begin{definition}[{\cite[p.\ 124]{marsden1995introduction}}]
	Let \(\fmap : \mfd \ra \R\) be a smooth function defined on a smooth manifold \(\mfd\). The \emph{directional (Lie) derivative} of \(\fmap\) at a point \(q \in \mfd\) along a vector \(\tanVec \in \tanLift{\basePt}{\mfd}\) is the map
	\[
		\tanLift{\basePt}{\mfd} \ni \tanVec \mapsto \mathcal{D} \fmap(\basePt) \tanVec \defas \frac{d}{dt} \Big|_{t=0}  \fmap \left(\curveMfd\left(t\right)\right) \in \R,
	\]
	where \(\R \ni t \mapsto \curveMfd(t) \in \mfd\) is any smooth map satisfying \(\curveMfd(0) = \basePt\) and \(\left.\frac{d}{dt} \right|_{t=0} \curveMfd(t) = \tanVec\).
\end{definition}

\begin{definition}[{{\cite[p.\ 173]{marsden1995introduction}}}]
	Let \(\leftAction: \grpG \times \grpG \ra \grpG\) be the left action defined on a Lie group \(\grpG\), i.e., \(\grpG \ni \gp \mapsto \leftAction_\grp(\gp)\defas \grp \gp \in \grpG\)  for any \(\grp \belongsTo \grpG\). The \textit{tangent lift} \(\tanLift{}{\leftAction}:\grpG \times \tanLift{}{\grpG} \ra \tanLift{}{\grpG}\) of \(\leftAction\) is the action defined, for \(\gOne \belongsTo \grpG \) and \(\TgG \belongsTo \tanLift{\gOne}{\grpG}\), by 
	\[
		\left(\grp,\left(\gOne,\TgG \right)\right) \mapsto \tanLift{}{\leftAction_{\grp}}\left(\gOne,\TgG \right) = \left(\leftAction_{\grp}(\gOne),\tanLift{\gOne}{\leftAction_{\grp}}(\TgG) \right).
	\]
	The \textit{cotangent lift} \(\cotanLift{}{\leftAction}: \grpG \times \cotanLift{}{\grpG} \ra \cotanLift{}{\grpG}\) of \(\leftAction\) is, similarly, the action defined, for \(\gOne \belongsTo \grp\) and \(\dTgG \belongsTo \cotanLift{\gOne}{\grpG}\), by
	\[
		\left(\grp,\left(\gOne,\dTgG \right)\right) \mapsto \cotanLift{}{\leftAction_{\grp}}\left(\gOne,\dTgG \right) = \left(\leftAction_{\grp}(\gOne),\cotanLift{\leftAction_{\grp}(\gOne)}{\leftAction_{\grp^{-1}}} (\dTgG) \right).
	\]
\end{definition}

\begin{definition}[{\cite[p.\ 245]{marsden1995introduction}}]
	Let \(\leftAction: \grpG \times  \grpG \ra  \grpG\) be the left action defined on a Lie group \(\grpG\) with the Lie algebra \(\lieAlg\). A vector field \(\grpG \ni \grpEle \mapsto \vf[](\grpEle) \in \tanLift{\grpEle}{\grpG}\) on \(\grpG\) is called \textit{left invariant} if for any \(\grp, \grpOne \in \grpG\) we have 
	\begin{align*}
	(\tanLift{\grpOne}{\leftAction_\grp})\vf[](\grpOne) = \vf[](\grp \grpOne).
	\end{align*}
	For each ${\XI \in \lieAlg}$, we define the canonical left invariant vector field ${X_{\XI}}$ on ${\grpG}$ by ${\vf[{\XI}](\grp) \Let \tanLift{\IdEle[]}{\leftAction_\grp(\XI)}}$ where $\IdEle[]$ is the identity element of the group \(\grpG\). 
\end{definition}

If \(\vf[{\XI}]\) is the canonical left invariant vector field corresponding to \(\XI \belongsTo \lieAlg\), then there is an unique integral curve \(\intCurve_{\XI}: \R \ra \grpG\) of \(\vf[{\XI}]\) starting at \(\IdEle[]\) such that \(\intCurve_\XI(0) = \IdEle[]\) and \(\frac{d}{dt}\intCurve_{\XI}(t) = \vf[{\XI}](\intCurve_{\XI}(t))\).

\begin{definition}[{\cite[p.\ 248]{marsden1995introduction}}]
	The \textit{exponential map} \(\exp : \lieAlg \ra \grpG\) on the Lie algebra \(\lieAlg\) (of the Lie group \(\grpG\)) \ is defined by
	\begin{align*}
		\exp(\XI) = \intCurve_{\XI}(1).
	\end{align*} 
	In case of matrix Lie groups, this object is the standard matrix exponential.
\end{definition}

\begin{definition}[{{\cite[p.\ 311]{marsden1995introduction}}}]
	\label{def: AdjointDef}
	The \textit{adjoint action} on the Lie algebra \(\lieAlg\) (of the Lie group \(\grpG\)) is the map 
	\[
		\lieg \times \lieAlg \ni \left(\grp,\lieAlgEle \right) \mapsto \ad{\grp}\lieAlgEle \defas \left.\frac{d}{ds}\right|_{s=0} \grp \exp(s \lieAlgEle)\grp^{-1}  \belongsTo \lieAlg.
	\]
		The \textit{coadjoint action }of \(\lieg\) on the dual of the Lie algebra, \(\dLieAlg\), is the dual of the adjoint action
	\[
		\lieg \times \dLieAlg \ni \left(\grp,\dLieAlgEle \right) \mapsto \coAd{\grp^{-1}} \dLieAlgEle \in  \dLieAlg,
	\]
	defined by
	\[
		\big\langle{\coAd{\grp^{-1}}\dLieAlgEle},{\lieAlgEle} \big\rangle = \big\langle{\dLieAlgEle},{\ad{\grp^{-1}} \lieAlgEle} \big\rangle \quad \text{for all} \ \lieAlgEle \in \lieAlg \text{ and } \dLieAlgEle \in \dLieAlg.
	\]
\end{definition}

\begin{definition}[{\cite[p.\ 29]{bolt1975method}}]
	The \emph{support cone} \(\suppCone{\convSet}{\apex}\) of a convex set \(\convSet \subsetTo \RPower[m]\) with apex at \(\apex \in \convSet\) is defined to be 
	\begin{align*}
		\suppCone{\convSet}{\apex} \defas \cls\bigg( \bigcup_{\alpha>0} \bigl\{\apex + \alpha(\coneEle-\apex) \big| \coneEle \belongsTo \convSet \bigr\} \bigg).
	\end{align*} 
\end{definition}

The following constitute key assumptions for our results:
\begin{assumption}[cf.\ \cite{phogat2016discrete}]
	\label{ass: Existence}
	\mbox{}
	\begin{enumerate}[label=(A-\roman*), leftmargin=*, widest=b, align=left]
		\item \label{asm:1} For every \(\iRange \), the maps \(\jumpMapIndexT\), \(\ithSysDyn\), \(\stateConst\), \(\ithCost\), and \(\ithCost[\nS]\) defined in \secref{sec: Problem Formulation} are smooth for all \(\timeRangeFull\).
		
		\item \label{asm:2} For every \(\iRange\), there exists  an open set \(\OpenSet[i] \subsetTo \ \ithLieAlg\), where  \(\ithLieAlg\) is the Lie algebra corresponding to the Lie group \(\grpMfdIndex\), such that
		
		\begin{enumerate}[label = (\alph*)]
			\item  the exponential map of the \(\ith\) plant \(\ithExpMap: \OpenSet[i] \ra \ithExpMap(\OpenSet[i]) \subsetTo \ \grpMfdIndex \) is a smooth diffeomorphism, and
			
			\item the integration step \(\jumpMapIndexT \ \belongsTo \ \ithExpMap(\OpenSet[i])\) for all ${\timeRangeFull}$. 
		\end{enumerate}
	
	\item \label{asm:3} For every $\iRange$, the set \(\ithAdmControl\) is nonempty, convex, and compact.
	\end{enumerate} 
\end{assumption}

\begin{remark}
	A few words about Assumption \ref{ass: Existence} are in order. Our main result utilizes the \dt{} PMP conditions developed in \cite{phogat2016discrete} at its core. In \cite{phogat2016discrete} the \dt{} PMP conditions were obtained by working with a local conical approximation of the feasible region in a small neighborhood of an optimal point via Boltyanskii's method of tents \cite{bolt1975method}. \ref{asm:1} ensures the existence of such local conical approximation of the feasible set. \ref{asm:2} gives the local representation of admissible trajectories in the Lie algebra, which is a vector space. \ref{asm:3} leads to a pointwise non-positivity condition on the gradient of the Hamiltonian (defined in \eqref{eq: Hamiltonian Form} below) over the set of feasible control actions. This set of assumptions are mild and standard in the literature.
\end{remark}

Before heading towards Theorem \ref{th: 1}, we define the following set
\begin{equation}
	\label{eqn: ustar2}
	\begin{aligned}
		\ustar \defas \ithAdmControl[1] \times \ithAdmControl[2] \times \dots \times \ithAdmControl[\numPlants] ,
	\end{aligned}
\end{equation}
and the map
\begin{align}
	\label{eq: constraintEq}
	\ustar \ni \uT[] \mapsto \auxDyn \defas \sum_{i = 1}^{\numPlants-1} \sum_{j = i+1}^{\numPlants} \bigg( \bigl\|\uSig[i]\bigr\|^2 \bigl\|\uSig[j]\bigr\|^2 \begin{pmatrix} 1 \\ 1 \end{pmatrix} + \uSig[i] \opr \uSig[j] \begin{pmatrix} 1 \\ -1 \end{pmatrix}   \bigg) \in \RPower[2],
\end{align}
where \(\uSig[i] \in \ithAdmControl[i]\) for all \(\iRange \). We relegate some important properties of the map \(z\) to Appendix \ref{app: app_z} that will be utilized in \S\ref{sec: Proof}.  

Recall from \S \ref{sec: Problem Formulation} that for each \(\iRange \) the \(\ith \) plant evolves on the configuration space \(\ithConfMfd\), and the joint control system evolves on \(\gTot \times \RPower[\sysDimState] \). We shall denote the Lie algebra corresponding to the matrix Lie group \(\grpMfdIndex\) by \(\ithLieAlg\) and the corresponding dual Lie algebra by \(\dIthLieAlg \). Likewise, we shall denote the Lie algebra corresponding to the joint matrix Lie group \(\gTot\) by \(\liea\) and the corresponding dual Lie algebra by \(\dliea\). In addition the following functions will be employed in Theorem \ref{th: 1} below:
\begin{enumerate}[label = (\roman*) ,leftmargin=*, align = left, widest = iii]
	\item \(\kappaMap : \gTot \ra \grpMfdIndex\) is the projection map that gives the element corresponding to the \(\ith\) group from the product matrix Lie group \(\gTot\). As discussed above, \(\gTot\) has the structure of product Lie group.
			
	\item \(\ithProjMapPi : \ithProjMapPiDef\)  is a projection map from the Lie algebra \(\ithProjMapPi\) of the joint matrix Lie group \(\gTot\) to the Lie algebra of the matrix Lie group \(\grpMfdIndex\). In fact, this is the tangent map \(\tanLift{\IdEle[]}{\kappaMap}\) associated with \(\kappaMap\) at the identity element of \(\gTot\), and the Lie algebra of product Lie groups is the direct sum of Lie algebras of the individual Lie groups \cite[Chapter 5]{ref:RudSch-13}. Therefore, \(\ithProjMapPi\)  is well-defined.

	\item \(\lambdaMap : \dliea \ra \dIthLieAlg\) is a map from the dual of the Lie algebra of \(\gTot\) to the dual of the Lie algebra of \(\grpMfdIndex\). Well-posedness of this map is immediate as the dual of the Lie algebra of a matrix Lie group is a vector space. From \cite[\S 20]{halmos2017finite} we know that the dual of a direct sum of vector spaces is isomorphic to the direct sum of the individual dual vector spaces. Therefore, \(\lambdaMap\) is well-defined.
	
	\item The map \([\nS-1] \ni t \mapsto \multiplexFunc(t) \in [\numPlants]\dual\) is the multiplexer function that decides the branch of the ``star''-shaped admissible joint control action set where the control action resides at each time instant.
	
	\item \(\PhiI : \grpMfdIndex \times \grpMfdIndex \ra \grpMfdIndex\) is the left action on the Lie group \(\grpMfdIndex\).
\end{enumerate}

\subsection{Main Result}

The following is our main result:
\begin{theorem}\label{th: 1}
	Let \(\uOptSeq\) be an optimal control sequence that solves \eqref{eq:ProbFormulation} and let \(\stateTraj\) be the corresponding state trajectory. For \(\NU \in \{-1,0\}\) and \(\auxAdj \in \RPower[2]\), we define the Hamiltonian 
	\begin{equation}
	\begin{aligned}\label{eq: Hamiltonian Form}
		& \hamdef \ni  \hamvar \mapsto  \\ 
		& \quad \ham{\left(\tau,\dualGrp,\dualEucld,\qT[],\xT[], \uT[]\right)} \coloneqq \NU\ithCost[\tau](\qT[],\xT[],\uT[]) + \sum_{i=1}^{\numPlants} \bigg( \Big \langle \iththeta, \ithExpMap^{-1}\big(\jumpMapIndexT[\tau](\grpIndexT[],\stateIndexT[])\big)\Big \rangle_{\ithLieAlg[i]}  \\
		& \quad\quad + \Big\langle \ithxi,\ithSysDyn[\tau](\grpIndexT[],\stateIndexT[],\contIndexT[]) \Big\rangle  \bigg) + \big \langle \auxAdj,  \auxDyn[{\uT[]}] \big \rangle \in \R,
	\end{aligned}
	\end{equation}
	where \(\dualGrp \Let \big(\iththeta[i]\big)_{i=1}^{\numPlants}\), \(\dualEucld \Let \big(\ithxi[i]\big)_{i=1}^{\numPlants}\), \(\qT[] \Let \big(\grpIndexT[]\big)_{i=1}^{\numPlants}\) , \(\xT[] \Let \big(\stateIndexT[]\big)_{i=1}^{\numPlants}\), and \(\ithxi \belongsTo {\big(\eucldStateSpace\big)}\dual\) and \(\iththeta  \belongsTo \ithLieAlg\dual\). For \(\timeRange\) we define the transformation  
	\begin{align*}
		\ithLieAlg\dual \ni \iththeta[i]_t \mapsto \RhoTI \Let \Big(\mathcal{D}\ithExpMap^{-1}\big((\varQ{t})^{-1}\varQ{t+1}\big) \circ \tanLift{\ithId}{\PhiI_{(\varQ{t})^{-1}\varQ{t+1}}}\Big)\dual\big(\iththeta[i]_t\big) \in \ithLieAlg\dual \quad \text{for }\iRange,
	\end{align*}
	and denote the extremal lift of the state-action trajectory \((\opQT, \opXT, \opUT)\) under the optimal control \(\opT{\uT[]}\) at each time instant \(t\) 
	\begin{align*}
	\optimalHamState \defas (t, \dualGrp, \dualEucld, \opQT, \opXT, \opUT).
	\end{align*}
	Then there exist an adjoint trajectory \(\big(\grpTheta, \grpXi\big)_{t=0}^{\nS-1} \subsetTo \grpConfMfdDual \) and covectors \(\bigl(\ithMu\bigr)_{t=1}^{\nS} \ \subsetTo \ \big(\eucldConstraintSpace\big)\dual\) for \(\iRange \), such that the following conditions hold: 
	\begin{enumerate}[leftmargin=*, label={\rm (MP-\roman*)}, widest=b, align=left]
		\item \label{Non-triviality cond} non-triviality: the adjoint variables \((\grpTheta, \grpXi)\) for all \(\timeRange \), the 
		covectors \(\ithMu\) \(( \)for all \(\timeRangeFullSt\) and \(\iRange\)\()\), the scalar \(\NU\), and the vector \(\auxAdj\) do not vanish simultaneously;
		\item \label{stateNDadjointDynamics} state and adjoint system dynamics for all \iRange:
 			\begin{align*}
				\text{states} & \begin{cases}
				\varQ{t+1}  = \varQ{t} \ithExpMap\Big(\ithProjMapPi\big(\mathcal{D}_{\dualGrp}\ham(\optimalHamState)\big)\Big) \\
				\opXT[t+1] = \mathcal{D}_{\dualEucld}\ham(\optimalHamState)
				\end{cases}  \\
				\text{adjoints} & \begin{cases}
				\RhoTI[t-1] = \coAd{\ithExpMap\big(-\ithProjMapPi(\mathcal{D}_{\dualGrp}\ham(\optimalHamState))\big)}{\RhoTI} + \\ \qquad \cotanLift{ \ithId}{\PhiI_{{\kappaMap(\opT{Q})}}}\bigg(\lambdaMap\Big(\mathcal{D}_\jointGrp\ham(\optimalHamState)+ \mathcal{D}_\jointGrp \Big(\sum_{i=1}^{\numPlants}\big\langle \ithMu,\stateConst(\varQ{t}, \varX{t})\big\rangle\Big)\Big)\bigg)\\
				\dualEucld_{t-1} = \mathcal{D}_\jointEucld\ham(\optimalHamState) +  \mathcal{D}_\jointEucld\Big(\sum_{i=1}^{\numPlants} \big\langle \ithMu, \stateConst(\varQ{t}, \varX{t})\big\rangle \Big); 
				\end{cases}
			\end{align*}
		\item \label{transversality} transversality:
			\begin{align*}
			\RhoTI[\nS-1] &= \cotanLift{ \ithId}\PhiI_{\kappaI(\opQT[\nS])}\bigg(\lambdaMap\Big(\NU \mathcal{D}_\jointGrp \ithCost[\nS](\varQ{\nS},\varX{\nS}) + \mathcal{D}_\jointGrp  \Big(\text{$\textstyle \sum_{i=1}^{\numPlants}$}\big\langle \ithMu[\nS], \stateConst[\nS](\varQ{\nS}, \varX{\nS})\big\rangle\Big)\Big) \bigg)\\ 
			\dualEucld_{\nS-1} &=\NU \mathcal{D}_\jointEucld\ithCost[\nS](\varQ{\nS},\varX{\nS}) + \mathcal{D}_\jointEucld\Big(\text{$\textstyle \sum_{i=1}^{\numPlants}$} \big\langle \ithMu[\nS], \stateConst[\nS](\varQ{\nS}, \varX{\nS})\big\rangle \Big); 
			\end{align*}
		\item \label{Ham NosPos Grad Cond}Hamiltonian maximization: \label{NonPosCond}
		\begin{align*}
			&\Big\langle \mathcal{D}_\jointCntrl\ham(\optimalHamState), \jointCntrl - \opUT \Big\rangle \leq 0 \quad \text{for all} \ \jointCntrl \belongsTo \suppCone{\ustar}{\opUT},
		\end{align*}
		where \(\suppCone{\ustar}{\opUT}\) is the support cone of \(\ustar\) with apex at \(\opUT\);
		\item \label{Comp Slackness Cond} complementary slackness:
		\begin{align*}
			 \ithMu \HadPr \stateConst(\varQ{t}, \varX{t})  = 0 \in \RPower[{\ithSysDimConstraint}] \quad \text{for all} \ \timeRangeFullSt \ \text{and} \ \iRange; 
		\end{align*}
		\item \label{NonPos Cond}non-positivity:
		\begin{align*}
			\vecCompare{\ithMu}{0} \quad \text{for all} \ \timeRangeFullSt \ \text{and} \ \iRange;
		\end{align*}
\item \label{MultiplexConst}multiplexing constraints: for the function \(z\) defined in \eqref{eq: constraintEq},
\begin{align*}
	\sum_{t=0}^{\nS-1} \auxDyn[\op{\uT}] = \begin{pmatrix}
	0 \\ 0
	\end{pmatrix}.
\end{align*}
	\end{enumerate}
\end{theorem}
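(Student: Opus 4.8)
The plan is to reduce \eqref{eq:ProbFormulation}, whose only nonstandard feature is the non-convex ``star''-shaped control constraint $\uT \in \uhash$, to a problem with a \emph{convex} pointwise control constraint plus one aggregated equality constraint, and then to invoke the \dt{} PMP of \cite{phogat2016discrete}.

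First I would establish the constraint equivalence that underlies the whole construction: for sequences with $\uT \in \ustar$ for every $t$ (the set $\ustar$ from \eqref{eqn: ustar2} is convex and compact by \ref{asm:3}), the multiplexing requirement ``$\uT \in \uhash$ for all $t$'' is equivalent to the single equality $\sum_{t=0}^{\nS-1} z(\uT) = 0$, with $z$ as in \eqref{eq: constraintEq}. This rests on the properties of $z$ collected in Appendix~\ref{app: app_z}: adding the two coordinates of $z(\uT)$ yields $2\sum_{1\le i<j\le \numPlants}\bigl\|\uSig[i]\bigr\|^2\bigl\|\uSig[j]\bigr\|^2\ge 0$, so if $\sum_t z(\uT)=0$ then both coordinates vanish, hence their sum vanishes, and non-negativity forces each summand $\bigl\|\uSig[i]\bigr\|^2\bigl\|\uSig[j]\bigr\|^2$ to be zero for every $t$ and every $i<j$; thus at most one block of $\uT$ is nonzero, i.e.\ $\uT \in \uhash$. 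Conversely, $\uT\in\uhash$ makes every cross term in \eqref{eq: constraintEq} vanish, so $z(\uT)=0$.

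Next I would augment the dynamics by a decoupled auxiliary Euclidean state $y_t\in\RPower[2]$ governed by $y_{t+1}=y_t+z(\uT)$ with $y_0=0$, and impose the terminal constraint $y_{\nS}=0$. By the equivalence above this converts \eqref{eq:ProbFormulation} into an equivalent problem on the extended configuration space $\gTot \times (\RPower[\sysDimState]\DirSum\RPower[2])$ whose \emph{pointwise} control constraint $\uT\in\ustar$ is convex and compact, retaining the running state inequality constraints $\stateConst\le 0$, the fixed initial condition, and the cost. This extended problem satisfies Assumption~\ref{ass: Existence} (\ref{asm:1}--\ref{asm:3} are inherited, $z$ being smooth) and is exactly of the class treated in \cite{phogat2016discrete}, so its conclusions apply verbatim.

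Finally I would apply the PMP of \cite{phogat2016discrete} to the extended system and read off the seven conditions. The adjoint of the decoupled coordinate $y$ is constant---call it $\auxAdj\in\RPower[2]$---because $y$ enters neither the cost nor the remaining dynamics; its transversality against the free multiplier for $y_{\nS}=0$ reproduces the endpoint identity $\sum_t z(\opUT)=0$ of \ref{MultiplexConst}, while its constant value contributes the term $\langle \auxAdj, z(\uT)\rangle$ to the Hamiltonian \eqref{eq: Hamiltonian Form}. Since $\ustar$ is convex, the tent (local conical approximation) at $\opUT$ is its support cone $\suppCone{\ustar}{\opUT}$, and Boltyanskii's method of tents \cite{bolt1975method} yields the gradient inequality \ref{Ham NosPos Grad Cond}; non-triviality \ref{Non-triviality cond}, the group/Euclidean adjoint recursions and transversality \ref{stateNDadjointDynamics}--\ref{transversality}, and complementary slackness and non-positivity \ref{Comp Slackness Cond}--\ref{NonPos Cond} then follow from \cite{phogat2016discrete} after the standard Lie-theoretic bookkeeping with $\kappaMap$, $\ithProjMapPi$, $\lambdaMap$, the tangent and cotangent lifts of $\PhiI$, and the coadjoint action. \textbf{The main obstacle} is the first step: proving the constraint equivalence rigorously (the role of Appendix~\ref{app: app_z}) and verifying that the augmented system genuinely meets Assumption~\ref{ass: Existence} so that \cite{phogat2016discrete} is applicable; once that reduction is secured, deriving the explicit Lie-algebraic adjoint equations is routine though notation-heavy.
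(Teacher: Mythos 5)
Your proposal is correct and takes essentially the same route as the paper's own proof: the same characterization of the star-shaped set as \(\{U \in \ustar \,:\, z(U)=0\}\) together with the non-negativity/summation argument forcing \(z(U_t)=0\) at every instant (the paper's Appendix~\ref{app: app_z}), the same auxiliary state \(y_{t+1}=y_t+z(U_t)\) with \(y_0=y_N=0\) yielding an equivalent problem over the convex control set, and the same invocation of the discrete-time PMP of \cite{phogat2016discrete} followed by the direct-product Lie-theoretic splitting (the paper's Appendix~\ref{app: DiPr}). The only cosmetic difference is that the paper obtains condition \ref{MultiplexConst} directly from feasibility of the terminal constraint \(y_N=0\) rather than from a transversality argument, which changes nothing of substance.
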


\begin{remark}
	As convincingly argued in \cite{MR2009697}, the \dt{} dynamics \eqref{eq:ithSysDynamics} should be derived following the ideas of discrete mechanics to ensure greater numerical fidelity and accuracy; this particular technique ensures that the discretization does not violate the underlying manifold structure under time-discretization and also preserves important system invariants for conservative systems; consequently, it leads to greater accuracy than otherwise. Discrete mechanics is steadily becoming a popular tool to discretize the dynamics of physical systems; for instance, we refer the reader to \cite{Kobilarov} for examples of discretized dynamics of non-holonomic systems with symmetry, \cite{phogat2017discrete} for examples of discretized spacecraft attitude dynamics, \cite{InvertePendulum} for examples of discretized wheeled inverted pendula \footnote{Indeed, in \cite{Klauss} the authors implement a discrete mechanics based controller on a wheeled inverted pendulum. Video recording of one of the experiments conducted is available at \url{https://www.youtube.com/watch?v=Vw7vco-Rdrw&feature=youtu.be}}, \cite{nair2018discrete} for examples of discretized dynamics of interconnected mechanical systems, and \cite{nordkvist2010lie} for examples of discretized dynamics of rigid bodies evolving on the Lie group SE$(3)$.
\end{remark}

\begin{remark} 
	\ref{Ham NosPos Grad Cond} suggests that at the optimal point the Hamiltonian \eqref{eq: Hamiltonian Form} is non-increasing with respect to the control action along all the directions permissible by the support cone of set \(\ustar\) at \(\op{U}_t\). This behavior of the Hamiltonian function in the \dt{} context is weaker than the Hamiltonian maximization condition in continuous-time PMP (\cite[Theorem 22.26]{clark}, \cite[Chapter  4]{liberzon2011calculus}): in the continuous-time versions, the Hamiltonian is maximized over the entire admissible control action set at the optimal value of the control action. However, we have retained the name ``Hamiltonian maximization'' for \ref{Ham NosPos Grad Cond} to point out its connection to the standard continuous-time PMP.
\end{remark}

\begin{remark}
	The \dt{} PMP (Theorem \ref{th: 1}) for a system evolving on a matrix Lie group can be used, under certain conditions, to find explicit expressions of the control action in terms of the state and adjoint variables. One such condition is concavity of the Hamiltonian with respect to the control action as discussed in \cite{phogat2016discrete}. In most realistic cases, however, analytical solutions are difficult to arrive at. However, numerical algorithms can be deployed with the conditions of the PMP and an initial guess to find optimal state-action trajectories. For instance, in point to point ballistic reachability maneuvers, the discrete-time PMP yields a two point boundary value problem (TPBVP) that can be solved numerically via indirect methods such as single/multiple shooting, etc., in addition to direct optimization solvers. Indirect methods, as argued in \cite{Trelat2012}, are more accurate compared to direct method due to the extra information about the system dynamics provided by the necessary conditions of the PMP. Moreover, for systems evolving on high dimensional spaces, the indirect method turns out to be typically more efficient compared to the direct method. However, solving the multiplexed optimal control via indirect methods is non-trivial since it includes solving a combinatorial search problem; the development of such numerical methods is not the agenda of the article at hand.
\end{remark}

\begin{remark}
	In the preceding discussion we limited our scope to the situation where the server can provide control signals to at most one plant in the system at any time instant. However, this assumption can be relaxed to requiring that at most \(\mP\) plants out of \(\numPlants\) plants in the ensemble at any time instant may be controlled. The first order necessary conditions for optimality in such a situation can also be obtained in fashion similar to that of the case of \(\mP = 1\). Indeed, we define
	\begin{equation} 
	\begin{aligned}\label{eq: constraintEq2}
		& \ustar \ni \jointCntrl \mapsto \multiAuxDyn[\jointCntrl] \Let \\
		& \quad\sum_{\indexSet[] \in \textsf{Com}(\numPlants,\mP+1)} \bigl\|\uSig[{\indexSet[1]}]\bigr\|^2  \bigl\|\uSig[{\indexSet[2]}]\bigr\|^2 \dots  \bigl\|\uSig[{\indexSet[\mP+1]}]\bigr\|^2 \begin{pmatrix}
		1 \\1 
		\end{pmatrix} + \uSig[{\indexSet[1]}]\opr \uSig[{\indexSet[2]}] \dots \opr \uSig[{\indexSet[\mP+1]}] \begin{pmatrix}
		1 \\ -1
		\end{pmatrix} \in \R^2,
	\end{aligned}
\end{equation}
	where \(\uSig[\indexSet] \in \ithAdmControl[{\indexSet}]\) and \(\combi[\numPlants,\mP+1]\) denotes a set of all the combinations of  elements of the set \([\numPlants]^{\ast}\) containing \((\mP+1)\) distinct elements in them; obviously, \(\card{{\combi[\numPlants,\mP+1]}} = {\numPlants \choose \mP+1}\). Replacing the function \(\auxDyn[\jointCntrl]\) with \(\multiAuxDyn[\jointCntrl]\) in the definition of the Hamiltonian \eqref{eq: Hamiltonian Form} and appealing to Theorem \ref{th: 1}, a set of first order necessary conditions for optimality in this new situation may be easily obtained.
\end{remark}

Apart from the optimal control problem formulated in \eqref{eq:ProbFormulation} there is one special case that frequently arises in optimal control literature --- that of, optimal point-to-point ballistic reachability maneuvers. The precise problem statement is as follows: 
\begin{equation}
\label{eq:ProbFormulation_Corollary}
\begin{aligned}
\minimize_{(\uT)_{t=0}^{\nS-1}} &&& \mathcal{C}(\qHat, \xHat, \uHat) := \bigg(\sum_{t=0}^{\nS-1} \ithCost(\qT[t], \xT[t], \uT[t]) + \ithCost[\nS](\qT[\nS], \xT[\nS])\bigg) \\ 
\text{subject to} &&&
\begin{cases}
\text{dynamics \eqref{eq:ithSysDynamics} } &
\text{for all} \  \timeRange \ \text{and} \ \iRange, \\
\uT \belongsTo \uhash & \text{for all} \ \timeRange,\\
\stateConst \big(\grpIndexT, \stateIndexT\big) \leq 0 & \text{for all} \  \timeRangeSt \ \text{and} \ \iRange, \\

\left(\qT[0],\xT[0]\right)=\big(\ob{\qT[0]},\ob{\xT[0]}\big),\\
\left(\qT[\nS],\xT[\nS]\right)=\big(\ob{\qT[\nS]},\ob{\xT[\nS]}\big).\\
\end{cases} 
\end{aligned}
\end{equation}

\begin{corollary}\label{Corollary1}
	Let \(\uOptSeq\) be an optimal control sequence that solves \eqref{eq:ProbFormulation_Corollary} and let \(\stateTraj\) be the corresponding state trajectory. We define the Hamiltonian, for \(\NU \in \{-1,0\}\) and \(\auxAdj \in \RPower[2]\), by
	\begin{equation}
	\begin{aligned}\label{eq: Hamiltonian Form3}
	& \hamdef \ni  \hamvar \mapsto  \\ 
	& \ham{\left(\tau,\dualGrp,\dualEucld,\jointGrp,\jointEucld, \jointCntrl\right)} \coloneqq \NU\ithCost[\tau](\qT[],\xT[],\uT[]) + \sum_{i=1}^{\numPlants} \bigg\{ \Big \langle \iththeta, \ithExpMap^{-1}\big(\jumpMapIndexT[\tau](\grpIndexT[],\stateIndexT[])\big)\Big \rangle_{\ithLieAlg[i]}  \\ & + \Big\langle \ithxi,\ithSysDyn[\tau](\grpIndexT[],\stateIndexT[],\contIndexT[]) \Big\rangle  \bigg\} + \big \langle \auxAdj,  \auxDyn[{\uT[]}]\big \rangle \in \R, \\
	\end{aligned}
	\end{equation}
	where \(\dualGrp \defas \big(\iththeta[i]\big)_{i=1}^{\numPlants}, \dualEucld \coloneqq \big(\ithxi[i]\big)_{i=1}^{\numPlants}, \qT[] = \big(\grpIndexT[]\big)_{i=1}^{\numPlants} , \xT[] = \big(\stateIndexT[]\big)_{i=1}^{\numPlants}\), and \(\ithxi \belongsTo \big(\eucldStateSpace\big)\dual\) and \(\iththeta \belongsTo \ithLieAlg\dual\).  For \(\timeRange\) we define the transformation  
	\begin{align*}
	\ithLieAlg\dual \ni \iththeta[i]_t \mapsto \RhoTI \Let \Big(\mathcal{D}\ithExpMap^{-1}\big((\varQ{t})^{-1}\varQ{t+1}\big) \circ \tanLift{\ithId}{\PhiI_{(\varQ{t})^{-1}\varQ{t+1}}}\Big)\dual\big(\iththeta[i]_t\big)\in \ithLieAlg\dual \quad \text{for }\iRange,
	\end{align*}
	and denote the extremal lift of the state-action trajectory \((\opQT, \opXT, \opUT)\) under the optimal control \(\opUT\) at each time instant \(t\) by 
	\begin{align*}
	\optimalHamState \defas (t, \dualGrp, \dualEucld, \opQT, \opXT, \opUT).
	\end{align*} 
Then there exist an adjoint trajectory \(\big(\grpTheta, \grpXi\big)_{t=0}^{\nS-1} \subsetTo  \grpConfMfdDual\), covectors \(\big(\ithMu\big)_{t=1}^{\nS-1} \ \subsetTo \ \big(\eucldConstraintSpace\big)\dual\) for \(\iRange\),	such that the following conditions hold: 
	\begin{enumerate}[leftmargin=*, label={\rm (CL-\roman*)}, widest=iii, align=left]
		\item \label{Non-triviality conds} \ref{Non-triviality cond} holds;
		\item \label{StateAdjointHold}\ref{stateNDadjointDynamics} holds;
		\item \label{NonPosCondHold} \ref{Ham NosPos Grad Cond} holds; 
		\item \label{Comp Slackness CondCor} complementary slackness:
			\[
				\ithMu \HadPr \stateConst(\varQ{t}, \varX{t}) = 0 \in \RPower[{\ithSysDimConstraint}] \quad \text{for all}  \ \timeRangeSt \ \text{and} \ \iRange;
			\]
		\item \label{NonPos CondCor}non-positivity:
			\[
				\vecCompare{\ithMu}{0}  \quad \text{for all} \ \timeRangeSt \  \text{and} \ \iRange;
			\]
		\item \label{multiplex_Again}\ref{MultiplexConst} holds.
	\end{enumerate}
\end{corollary}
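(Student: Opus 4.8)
The plan is to obtain Corollary~\ref{Corollary1} as a direct consequence of Theorem~\ref{th: 1} by absorbing the terminal equality constraint of \eqref{eq:ProbFormulation_Corollary} into the pointwise state-constraint machinery already handled by the theorem. Problem \eqref{eq:ProbFormulation_Corollary} differs from \eqref{eq:ProbFormulation} only in that it adds the requirement \((\qT[\nS],\xT[\nS]) = (\ob{\qT[\nS]},\ob{\xT[\nS]})\) and imposes the running state constraints only for \(t \in \{1,\dots,\nS-1\}\). I would therefore recast the fixed endpoint as a \emph{terminal} state constraint in two-sided inequality form: for each plant \(i\) introduce a smooth Euclidean-valued map \(h^i\) whose group block is \(\ithExpMap^{-1}\) of the displacement of \(\grpIndexT[\nS]\) from its target (a local diffeomorphism near the target by Assumption~\ref{asm:2}) and whose Euclidean block is \(\stateIndexT[\nS] - (\text{target})\), and impose both \(h^i \le 0\) and \(-h^i \le 0\) at \(t=\nS\). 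Assigning these as the terminal constraint function, while retaining the original constraints for \(t \le \nS-1\), exhibits \eqref{eq:ProbFormulation_Corollary} as a bona fide instance of \eqref{eq:ProbFormulation}; Assumption~\ref{ass: Existence} is preserved since \(h^i\) is smooth and the remaining data are unchanged.

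Next I would apply Theorem~\ref{th: 1} to this augmented problem and read off the conclusions. On the horizon \(t \in \{0,\dots,\nS-1\}\) the augmented terminal constraint does not enter the Hamiltonian, so \eqref{eq: Hamiltonian Form} reduces verbatim to \eqref{eq: Hamiltonian Form3}; consequently \ref{Non-triviality cond}, \ref{stateNDadjointDynamics}, \ref{Ham NosPos Grad Cond}, and \ref{MultiplexConst} transfer unchanged, yielding \ref{Non-triviality conds}, \ref{StateAdjointHold}, \ref{NonPosCondHold}, and \ref{multiplex_Again}. The original state-constraint multipliers \(\ithMu\) for \(t \in \{1,\dots,\nS-1\}\) furnish precisely the complementary-slackness and non-positivity conditions \ref{Comp Slackness CondCor} and \ref{NonPos CondCor}.

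The crux is the fate of the transversality condition \ref{transversality}, and this is where a fixed (rather than free) endpoint manifests. Encoding the equality as the pair \(h^i \le 0,\ -h^i \le 0\) produces two terminal multipliers, each non-positive by \ref{NonPos Cond}; since both constraints are active at the feasible optimal endpoint, complementary slackness \ref{Comp Slackness Cond} holds automatically and restricts neither. Their signed combination thus ranges over an \emph{arbitrary} covector, and because the differential of \(h^i\) at the optimal terminal point is an isomorphism (the exponential chart is a local diffeomorphism by Assumption~\ref{asm:2} and the Euclidean block is the identity), the term \(\mathcal{D}\big(\sum_i \langle \ithMu[\nS], \stateConst[\nS]\rangle\big)\) sweeps out the entire terminal cotangent space. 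Hence \ref{transversality} places no restriction on \(\RhoTI[\nS-1]\) and \(\dualEucld_{\nS-1}\); it is vacuous and is dropped, the terminal adjoints being instead pinned down by the boundary data \((\qT[\nS],\xT[\nS]) = (\ob{\qT[\nS]},\ob{\xT[\nS]})\) in the resulting two-point boundary value problem. (This same nondegeneracy of \(\mathcal{D}h^i\) also shows a nonzero terminal multiplier cannot occur in isolation, so \ref{Non-triviality cond} descends to the corollary's smaller variable set.) Collecting what survives gives exactly \ref{Non-triviality conds}--\ref{multiplex_Again}. \textbf{The main obstacle} is justifying that the differential of the group-chart encoding is onto at the optimal terminal point---a terminal full-rank/constraint-qualification check---so that the free terminal multiplier genuinely annihilates the transversality relation; an alternative, if one prefers not to invoke this rank condition, is to re-run the tent-method argument of \S\ref{sec: Proof} with the terminal tent forced to the zero variation, which removes the transversality cone directly.
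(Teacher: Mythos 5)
Your proposal is correct and is essentially the paper's own argument: the paper likewise exhibits \eqref{eq:ProbFormulation_Corollary} as an instance of \eqref{eq:ProbFormulation} by encoding the fixed endpoint as terminal state-inequality constraints (a terminal submanifold collapsed to a singleton, following \cite[Appendix A1--A2]{phogat2016discrete}), applies Theorem \ref{th: 1}, and drops \ref{transversality} as trivially satisfied. The details you supply---the two-sided inequality encoding, the resulting free terminal multiplier sweeping out the whole terminal cotangent space, and the full-rank condition that \ref{asm:2} furnishes near the target---are precisely what the paper delegates to the cited appendix.
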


	\section{Proofs of Theorem \ref{th: 1} and Corollary \ref{Corollary1}}
	\label{sec: Proof}
\subsection{Proof of Theorem \ref{th: 1}}
We start by transforming the problem \eqref{eq:ProbFormulation} into one that can be solved using an existing PMP on matrix Lie groups \cite{phogat2016discrete}. 

Recall from \S\ref{sec: Problem Formulation} that \(\grpIndexT[t] \in \grpMfdIndex , \stateIndexT[t] \in \eucldStateSpace\) are the group and Euclidean state variables, respectively, corresponding to the \(\ith \) plant in the ensemble at the time instant \(t\).  For \(\timeRangeFull \) we define \(\qT[t] \Let \big(\grpIndexT[t] \big)_{i=1}^{\numPlants}\) and \(\xT[t] \defas \big(\stateIndexT[t]\big)_{i=1}^{\numPlants} \).  In view of the direct product group structure and \eqref{eq:ithSysDynamics}, for the part of the dynamics evolving on the matrix Lie group \(\gTot \), we write
\begin{align*}
	\qT[t+1] & = \big(\grpIndexT \jumpMapIndexT(\grpIndexT, \stateIndexT)\big)_{i=1}^{\numPlants} \quad \text{(the finite sequence)} \\
	&= \qT \jumpTot(\qT, \xT),
	\end{align*}
	where \(\jumpTot( \qT, \xT) \defas \big(\jumpMapIndexT(\grpIndexT, \stateIndexT)	\big)_{i=1}^{\numPlants}\), and for part of the dynamics evolving on the Euclidean space, we write
	\begin{align*}
		\xT[t+1] = \dynTot(\qT, \xT, \uT),
	\end{align*}	
	where \( \uT[t] \defas \big(\contIndexT \big)_{i=1}^{\numPlants}\) and \(\dynTot(\qT, \xT, \uT) \defas \big(\ithSysDyn(\grpIndexT, \stateIndexT, \contIndexT)\big)_{i=1}^{\numPlants}\). Thus, the dynamics of the joint control system can be concisely written as 
\begin{equation}\label{eq: jointCntrlSys}
	\begin{aligned}
		\qT[t+1] &= \qT[t]\jumpTot[t](\qT[t], \xT[t]), \\
		\xT[t+1] &= \dynTot[t](\qT, \xT, \uT).
	\end{aligned}
\end{equation}

 Next we reconfigure the multiplexing constraint. Recall that the multiplexing constraint is  naturally implied by the ``star''-shaped admissible action set \(\uhash\) (defined in \eqref{eq: AdmissibleActionSet}). We claim that
 \begin{align}\label{eqn: ustar}
  	\uhash = \Big\{\uT[] \in \ustar \ \defas \big( \ithAdmControl[1] \times \ithAdmControl[2] \times \dots \times \ithAdmControl[\numPlants] \big) \Big\lvert \ \ \auxDyn[{\uT[]}]= (0, 0)^\top  \Big\},
 \end{align}
where \(\ustar \ni \uT[] \mapsto \auxDyn\) is the map defined in \eqref{eq: constraintEq}; this equality of sets is immediate from Lemma \ref{rem: IFFRemark}. 
We introduce an auxiliary variable \(\aux[] \in \R^2\) and the dynamical system
\begin{align}\label{additional dynamics}
\aux[t+1] = \aux[t] + \auxDyn[\uT], \quad \aux[0] = (0,0)^\top,
\end{align}
\(\timeRange \), where \(\uT \in \ustar\) for all \(\timeRange \). It is immediate from \eqref{additional dynamics} and the additional terminal constraint \(\aux[N] = (0,0)^\top \) that \(\sum_{t=0}^{\nS-1}\auxDyn[\uT] = (0,0)^\top \), which, along with Lemma \ref{lem: zProp} guarantees that \(\auxDyn[\uT] = (0, 0)^\top\) for all \(\timeRange\). 

We claim that the following optimal control problem is equivalent to \eqref{eq:ProbFormulation}:
\begin{equation}
\label{eq:ProbFormulation_new}
\begin{aligned}
\minimize_{\left(\uT\right)_{t=0}^{\nS-1}} &&& \mathcal{C}(\qHat,\xHat,\uHat) \defas \sum_{t=0}^{\nS-1} \ithCost\big(\qT, \xT, \uT\big) + \ithCost[\nS]\big(\qT[\nS], \xT[\nS]\big) \\ 
\text{subject to} &&&
\begin{cases}
\text{dynamics \eqref{eq: jointCntrlSys} } &
\text{for all} \
\timeRange,  \\
\text{dynamics \eqref{additional dynamics}} &
\text{for all} \ \timeRange,\\
\uT \belongsTo \ustar & \text{for all} \ \timeRange, \\
\stateConst \bigl(\grpIndexT, \stateIndexT\bigr) \leq 0 & \text{for all} \
\timeRangeFullSt\ \text{and} \ \iRange, \\
\left(\qT[0],\xT[0], \aux[0]\right)=\bigl(\ob{\qT[0]},\ob{\xT[0]},(0,0)^\top\bigr), \\ \aux[N] = (0,0)^\top,
\end{cases} 
\end{aligned}
\end{equation}
where \(\qHat \defas \big(\qT\big)_{t=0}^{\nS}, \xHat \defas \big(\xT\big)_{t=0}^{\nS}\) and \(\uHat \defas \big(\uT\big)_{t=0}^{\nS-1}\). We shall establish this equivalence of the two problems \eqref{eq:ProbFormulation} and \eqref{eq:ProbFormulation_new} below. For the moment we observe that the first order necessary conditions for solution of \eqref{eq:ProbFormulation_new} are given by Theorem \ref{thm:DMP} in Appendix \ref{app: DPMP}. Indeed, by augmenting the Euclidean state variable \(\xT\) with the variable \(\aux[t]\) and correspondingly the adjoint variable \(\dualEucld_t\) with \(\auxAdj_t\),  appealing to Theorem \ref{thm:DMP} for the resulting optimal control problem formulated in \eqref{eq:ProbFormulation_new}, we get the first order necessary conditions for optimality presented in Theorem \ref{prop: ApplyDMP} below. Towards the end of this subsection we shall establish a connection between Theorem \ref{th: 1} and Theorem \ref{prop: ApplyDMP}, in particular that the necessary conditions of Theorem \ref{prop: ApplyDMP} imply those in Theorem \ref{th: 1}. Recall from \S\ref{sec: Problem Formulation} that the joint control system evolves on \(\gTot\times\RPower[\sysDimState] \), and that the Lie algebra of the matrix Lie group \(\gTot \) is denoted by \(\liea \).
\begin{theorem}\label{prop: ApplyDMP}
	Let \(\uOptSeq\) be an optimal control trajectory that solves \eqref{eq:ProbFormulation_new} and let \(\stateTraj\) be the corresponding state trajectory. We define, for \(\NU \in \{-1,0\}\),  the Hamiltonian by
	\begin{equation}
	\begin{aligned}\label{eq: Hamiltonian Form22}
	& [\nS-1] \times \liea^* \times \big(\R^{\sysDimState}\big)^* \times \big(\R^2\big)\dual \times\gTot \times \R^{\sysDimState} \times \big(\R^2\big)\times \R^{\sysDimControl} \ni  \big( \tau,\dualGrp,\dualEucld, \auxAdj, \qT[], \xT[],\auxVAR, \uT[] \big) \\ 
	& \quad \mapsto  \hamkp{\left( \tau,\dualGrp,\dualEucld, \auxAdj, \jointGrp, \jointEucld, \auxVAR, U \right)} \coloneqq \NU\ithCost[\tau](\qT[],\xT[],\uT[]) +  \big \langle \dualGrp, \ithExpMap[{\gTot}]\big(\jumpTot[\tau](\qT[], \xT[])\big)\big \rangle_{\liea}\\
		& \qquad + \big\langle \dualEucld, \dynTot[\tau](\qT[], \xT[], \uT[])\big \rangle  + \big \langle \auxAdj,  \auxVAR + \auxDyn[{\uT[]}]  \big \rangle \in \R.
	\end{aligned}
	\end{equation}
	For \(\timeRange\) we define the transformation
	\begin{align*}
	\liea^*\ni \dualGrp_t \mapsto \RhoTIFull \Let \Big(\mathcal{D}\ithExpMap[{\gTot}]\big((\op{\jointGrp}_t)^{-1}\op{\jointGrp}_{t+1}\big) \circ \tanLift{\IdEle[]}\leftAction_{(\op{\jointGrp}_{t})^{-1}\op{\jointGrp}_{t+1}}\Big)\dual\big(\dualGrp_t\big) \in \liea^*. 
	\end{align*}
	We denote the extremal lift of the state-action trajectory \((\opT{\jointGrp}, \opT{\jointEucld}, \opT{\jointCntrl}) \) under the optimal control \(\opT{\jointCntrl}\) at each time instant \(t\) by
	\begin{align*}
	\optimalHamState \defas (t, \dualGrp, \dualEucld, \opT{\jointGrp}, \opT{\jointEucld}, \opT{\jointCntrl}).
	\end{align*}  
	Then there exist an adjoint trajectory \(\big(\grpTheta, \grpXi\big)_{t=0}^{\nS-1} \subsetTo \grpConfMfdDual\), and a trajectory \(\big(\ithMu\big)_{t=1}^{\nS} \subsetTo \big(\eucldConstraintSpace\big)\dual\) for \(\iRange\) such that the following conditions hold: 
	\begin{enumerate}[leftmargin=*, label={\rm (JMP-\roman*)}, widest=iii, align=left]
		\item \label{Non-triviality cond22} non-triviality:
		the adjoint variables \((\grpTheta, \grpXi)\) for all \(\timeRange\), the covectors \(\ithMu\) \((\)for all \(\timeRangeFullSt\) and \(\iRange )\), the scalar \(\NU\), and the vector \(\auxAdj\) do not vanish simultaneously;
		
		\item \label{stateNDadjointDynamics22} state and adjoint system dynamics:
		\begin{align*}
			\text{states} & \begin{cases}
		\op{\jointGrp}_{t+1}  =\op{\jointGrp}_{t}  \exp_{\gTot}\big(\mathcal{D}_{\dualGrp}\hamkp(\optimalHamState)\big) \\
		\op{\jointEucld}_{t+1} = \mathcal{D}_{\dualEucld}\hamkp(\optimalHamState)\\
		\op{\auxVAR}_{t+1} = \mathcal{D}_{\auxAdj}\hamkp(\optimalHamState)
		\end{cases}  \\
			\text{adjoints} & \begin{cases}
		\RhoTIFull[{t-1}] =  \coAd{\ithExpMap[{\gTot}]\big(-\mathcal{D}_{\dualGrp}\hamkp(\optimalHamState)\big)}	\RhoTIFull[{t}] + \\ \hspace{2cm} \cotanLift{\IdEle[]}{\leftAction_{{\opT{\jointGrp}}}}\bigg(\mathcal{D}_{\jointGrp}\hamkp(\optimalHamState)+ \mathcal{D}_\jointGrp\Big(\sum_{i=1}^{\numPlants} \big\langle \ithMu,\stateConst(\varQ{t}, \varX{t})\big\rangle\Big)\bigg)\\ 
		\dualEucld_{t-1} = \mathcal{D}_{\jointEucld}\hamkp(\optimalHamState) +  \mathcal{D}_\jointEucld\Big(\sum_{i=1}^{i=\numPlants} \big\langle \ithMu, \stateConst(\varQ{t}, \varX{t})\big\rangle \Big)\\
		\auxAdj_{t-1} = \mathcal{D}_{\auxVAR} \hamkp(\optimalHamState) +\mathcal{D}_{\auxVAR} \Big(\sum_{i=1}^{i=\numPlants} \big\langle \ithMu, \stateConst(\varQ{t}, \varX{t})\big\rangle\Big),
		\end{cases} 
		\end{align*}
		where \(\leftAction\) is the left action on the Lie group \(\gTot\);
		
		\item \label{transversality22} transversality:
		\begin{align*}
		\Theta_{\nS-1} &= \cotanLift{ \IdEle[]}{\Phi_{\op{\jointGrp}_\nS}}\bigg(\NU \mathcal{D}_\jointGrp  \ithCost[\nS](\op{\jointGrp}_{\nS}, \op{\jointEucld}_{\nS}) +  \mathcal{D}_\jointGrp  \Big(\text{$\textstyle \sum_{i = 1}^{\numPlants}$}\big\langle \ithMu[\nS], \stateConst[\nS](\grpIndexT[\nS-1], \varX{\nS})\big\rangle\Big)\bigg)\\ 
		\dualEucld_{\nS-1} &= \mathcal{D}_{\jointEucld}\hamkp(\optimalHamState[\nS]) + \mathcal{D}_{\jointEucld}\Big( \text{$\textstyle \sum_{i = 1}^{\numPlants}$}\big\langle \ithMu[\nS], \stateConst[\nS](\grpIndexT[\nS], \varX{\nS})\big\rangle \Big); 
		\end{align*}
		
		\item \label{Ham NosPos Grad Cond22}Hamiltonian maximization: \label{NonPosCond22}
		\begin{align*}
		&\Big\langle \mathcal{D}_{\uT[]}\hamkp(\optimalHamState), \uT[] - \op{\uT} \Big\rangle \leq 0 \quad \text{for all} \ \uT[] \belongsTo \suppCone{\ustar}{\op{\uT}},
		\end{align*}
		where \(\suppCone{\ustar}{\op{\uT}}\) is the support cone of \(\ustar\) with the apex at \(\op{\uT}\);
		
		\item \label{Comp Slackness Cond22} complementary slackness:
		\begin{align*}
		 \ithMu \HadPr \stateConst(\varQ{t}, \varX{t})  = 0 \in \RPower[{\ithSysDimConstraint}] \quad \text{for all} \ \timeRangeFullSt \ \text{and} \ \iRange; 
		\end{align*}
		\item \label{NonPos Cond22}non-positivity:
		\begin{align*}
		\vecCompare{\ithMu}{0}   \quad \text{for all} \ \timeRangeFullSt \ \text{and} \ \iRange;
		\end{align*}
		\item \label{MultiplexConst22}multiplexing constraint:
		\begin{align*}
		\sum_{t=0}^{\nS-1} \auxDyn[\op{\uT}] = \begin{pmatrix}
		0 \\ 0
		\end{pmatrix}.
		\end{align*}
	\end{enumerate}
\end{theorem}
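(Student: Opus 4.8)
\emph{Proof of Theorem \ref{prop: ApplyDMP} --- proposal.}
The plan is to read \eqref{eq:ProbFormulation_new} as a single \dt{} constrained optimal control problem on the joint matrix Lie group \(\gTot\), whose Euclidean factor has been enlarged to accommodate the auxiliary state, and then to invoke the \dt{} PMP of \cite{phogat2016discrete} (recorded as Theorem \ref{thm:DMP} in Appendix \ref{app: DPMP}) without modification. To set this up I would bundle the pair \((\xT[],\auxVAR)\) into one Euclidean state on \(\RPower[\sysDimState]\times\R^2\), with its dynamics obtained by stacking the Euclidean recursion of \eqref{eq: jointCntrlSys} on top of the auxiliary recursion \eqref{additional dynamics}; the group factor retains the recursion \(\qT[t+1]=\qT[t]\jumpTot[t](\qT[t],\xT[t])\), the admissible control set is \(\ustar\), and the state constraints \(\stateConst\) are carried over unchanged, observing that they do not depend on the auxiliary state. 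With these identifications, \eqref{eq:ProbFormulation_new} is literally an instance of the template to which Theorem \ref{thm:DMP} applies.

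The first task is to verify the hypotheses of Theorem \ref{thm:DMP} for the augmented system, and these follow from Assumption \ref{ass: Existence}. Smoothness of the stacked dynamics, of the running and terminal costs, and of the constraint maps is immediate from \ref{asm:1}, once we note that the appended Euclidean recursion \((\auxVAR,\uT[])\mapsto \auxVAR + \auxDyn[{\uT[]}]\) is smooth, being polynomial in the control by \eqref{eq: constraintEq}. The requirement that the group exponential be a diffeomorphism onto an open set containing the integration step follows from \ref{asm:2}: because \(\gTot\) carries the direct product group structure, its exponential \(\ithExpMap[{\gTot}]\) factors as the componentwise product of the plant exponential maps \(\ithExpMap\), each a diffeomorphism on \(\OpenSet[i]\), and \(\jumpTot[t]\) lands in the corresponding image componentwise. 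Finally, the admissible set \(\ustar\) is nonempty, convex, and compact as a finite Cartesian product of such sets by \ref{asm:3}. Hence every hypothesis of Theorem \ref{thm:DMP} is in force.

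Applying Theorem \ref{thm:DMP} to the augmented system now yields the necessary conditions essentially by transcription. The abstract Hamiltonian specializes to \eqref{eq: Hamiltonian Form22}: the augmented Euclidean adjoint splits as \((\dualEucld,\auxAdj)\), and the contribution of the appended dynamics is precisely the pairing \(\big\langle \auxAdj, \auxVAR + \auxDyn[{\uT[]}]\big\rangle\). Conditions \ref{Non-triviality cond22} through \ref{NonPos Cond22} are then read off from the corresponding conclusions of Theorem \ref{thm:DMP}, with the group components of the state and adjoint dynamics and of the transversality relation expressed through the left action \(\leftAction\) on \(\gTot\) and the lift \(\RhoTIFull\), exactly as displayed in the statement.

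The only genuinely new bookkeeping concerns the auxiliary block, and I would finish with it. The adjoint recursion for the auxiliary state is \(\auxAdj_{t-1}=\mathcal{D}_{\auxVAR}\hamkp(\optimalHamState) + \mathcal{D}_{\auxVAR}\big(\sum_{i=1}^{\numPlants}\langle \ithMu, \stateConst(\varQ{t},\varX{t})\rangle\big)\); since \(\hamkp\) depends on \(\auxVAR\) only through the linear term \(\langle\auxAdj,\auxVAR\rangle\) and the state constraints are independent of \(\auxVAR\), this collapses to \(\auxAdj_{t-1}=\auxAdj_t\), so that \(\auxAdj\) is constant along the optimal trajectory --- which is exactly why a single vector \(\auxAdj\) suffices in the statement. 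The multiplexing condition \ref{MultiplexConst22} is then obtained by telescoping the auxiliary recursion \eqref{additional dynamics} from \(t=0\) to \(t=\nS-1\) and using the boundary data \(\aux[0] = \aux[\nS] = (0,0)^\top\) prescribed in \eqref{eq:ProbFormulation_new}, which gives \(\sum_{t=0}^{\nS-1}\auxDyn[\op{\uT}]=(0,0)^\top\). I do not expect a serious obstacle: since Theorem \ref{thm:DMP} performs all the analytic work, the effort reduces to the correct identification of the augmented state and adjoint and to the faithful transcription of the abstract conditions; the only points demanding care are the product-group factorization of \(\ithExpMap[{\gTot}]\) and the observation that \(\auxAdj\) is constant.
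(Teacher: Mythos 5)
Your overall strategy --- augmenting the Euclidean state with the auxiliary variable \(\auxVAR\), verifying Assumption \ref{ass: Existence} for the augmented system (smoothness of the polynomial map \(z\), factorization of \(\ithExpMap[{\gTot}]\) via the product group structure, convexity and compactness of \(\ustar\)), appealing to the \dt{} PMP, observing that \(\auxAdj\) is constant, and telescoping \eqref{additional dynamics} to get \ref{MultiplexConst22} --- is exactly the paper's route. However, there is one genuine gap: your claim that \eqref{eq:ProbFormulation_new} is ``literally an instance'' of the template \eqref{eq:sopt} to which Theorem \ref{thm:DMP} applies ``without modification'' is false, because \eqref{eq:sopt} prescribes only an \emph{initial} condition and leaves the terminal state free, whereas \eqref{eq:ProbFormulation_new} carries the terminal equality constraint \(\aux[\nS]=(0,0)^\top\). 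This is not a cosmetic discrepancy. An optimal control for the endpoint-constrained problem need not satisfy the necessary conditions of the free-endpoint problem at all, so the invocation is unsound as stated; and even formally, a literal transcription of the transversality condition of Theorem \ref{thm:DMP} to the augmented Euclidean adjoint \((\dualEucld,\auxAdj)\) would yield \(\auxAdj_{\nS-1} = \NU\,\mathcal{D}_{\auxVAR}\ithCost[\nS] + \sum_i \mathcal{D}_{\auxVAR}\langle\ithMu[\nS],\stateConst[\nS]\rangle = 0\), since neither the terminal cost nor the state constraints depend on \(\auxVAR\). Combined with your own (correct) observation that \(\auxAdj\) is constant, this forces \(\auxAdj\equiv 0\), which deletes the multiplexing term \(\langle\auxAdj,\auxDyn[{\uT[]}]\rangle\) from the Hamiltonian and contradicts the structure of Theorem \ref{prop: ApplyDMP}, where \ref{transversality22} deliberately imposes nothing on \(\auxAdj_{\nS-1}\) and \(\auxAdj\) appears as a genuine multiplier in the non-triviality condition \ref{Non-triviality cond22}.

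The fix is the device the paper uses when proving Corollary \ref{Corollary1}: the fixed-endpoint condition on the auxiliary state must be handled by the endpoint-constrained extension of the \dt{} PMP in \cite[Appendix A1-A2]{phogat2016discrete}, in which terminal equality constraints are encoded as state constraints (or the endpoint manifold is collapsed to a point). Under that extension the transversality condition associated with the constrained coordinate disappears --- the corresponding adjoint's terminal value is unrestricted --- which is precisely what permits \(\auxAdj\neq 0\) and makes the multiplexing term survive in \eqref{eq: Hamiltonian Form22} and in conditions \ref{stateNDadjointDynamics22}--\ref{Ham NosPos Grad Cond22}. Your use of the boundary data \(\aux[0]=\aux[\nS]=(0,0)^\top\) to derive \ref{MultiplexConst22} is correct, but it is inconsistent to invoke that constraint there while ignoring it when you apply Theorem \ref{thm:DMP}; the constraint must enter the application of the maximum principle itself.
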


The condition \ref{MultiplexConst22}, in Theorem \ref{prop: ApplyDMP} is due to the constraint \(\aux[0] = \aux[\nS] = (0,0)\top\).
%

\begin{remark}\label{rem: ChiEvo}
	Note that the adjoint variable \(\auxAdj_t\) remains constant with respect to time. This follows immediately from the evolution of the adjoint \(\auxAdj_t\) described in \ref{stateNDadjointDynamics22}. Indeed, for \(\timeRange\) we have
	\begin{equation}
	\begin{aligned}
		\auxAdj_{t-1} &= \mathcal{D}_{\auxVAR} \hamkp(\optimalHamState) +\mathcal{D}_{\auxVAR} \bigg(\sum_{i=1}^{i=\numPlants} \big\langle \ithMu, \stateConst(\varQ{t}, \varX{t})\big\rangle\bigg) \\
		           &= \auxAdj_t,
	\end{aligned}  
	\end{equation}
	which shows that \(\auxAdj \defas \auxAdj_0 = \auxAdj_1 = \dots = \auxAdj_{\nS-1}\). 
\end{remark}
In view of Remark \ref{rem: ChiEvo} we can rewrite the Hamiltonian \eqref{eq: Hamiltonian Form22} as the one mentioned in \eqref{eq: Hamiltonian Form}. Note that we have removed the dependence of the variable \(\auxVAR\) on the Hamiltonian \eqref{eq: Hamiltonian Form} because the variable \(\auxVAR\) is not used in obtaining the necessary conditions presented in Theorem \ref{th: 1}, and is therefore redundant.

We return to the topic of establishing a connection between Theorem \ref{th: 1} and Theorem \ref{prop: ApplyDMP} and to this end we utilize some properties of direct product of matrix Lie groups presented in Appendix \ref{app: DiPr}. This connection between the necessary conditions in the two theorems is established by observing the following:  
	\begin{enumerate}[leftmargin=*, label={\rm ({Eqv}-\roman*)}, widest=iii, align=left]
		\item \ref{Non-triviality cond} is identical to \ref{Non-triviality cond22}.
		
		\item \label{proof: adj} 
		Using Lemma \ref{lem: expdist} and \ref{stateNDadjointDynamics22} we see that the evolution on each of the matrix Lie groups \(\grpMfdIndex\) is as described in \ref{stateNDadjointDynamics}. The evolution of the state variable on the Euclidean space in \ref{stateNDadjointDynamics} is the same as that in \ref{stateNDadjointDynamics22}. Under the natural identification of the (dual) Lie algebra of the joint matrix Lie group \(\gTot\) with the direct sum of the (dual) Lie algebras of the constituent Lie groups, for \(\RhoTIFull \in \liea\dual\) we can find \(\RhoTI \in \ithLieAlg\dual\) for \(\iRange\), such that \(\RhoTIFull = \big(\RhoTI\big)_{i=1}^{\numPlants}\) in the notation of Theorem \ref{th: 1}. 
	Moreover, from Lemma \ref{lem: expdist} and Lemma \ref{Prop: coadjoint} we have 
		\begin{align}\label{eq: Adjoint Lift}
			\coAd{\ithExpMap[{\gTot}](-\mathcal{D}_{\dualGrp}\ham(\optimalHamState))} \RhoTIFull = \Big(\coAd{\exp_{\grpMfdIndex}(-\ithProjMapPi(\mathcal{D}_{\dualGrp}\ham(\optimalHamState)))} \RhoTI \Big)_{i=1}^{\numPlants},
		\end{align}
		where $\ithProjMapPi$ is the map defined towards the end of \S \ref{subsec:firstMR}. From Lemma \ref{lem: TanMapDist} we can conclude that the cotangent maps also split in the preceding fashion. In other words, the adjoint dynamics of \(\RhoTIFull\) in \ref{stateNDadjointDynamics} is equivalent to that of \ref{stateNDadjointDynamics22}. Furthermore, the dynamics of the adjoint variable \(\dualEucld_t\) in \ref{stateNDadjointDynamics} is the same as that in \ref{stateNDadjointDynamics22}.
		
		\item Using arguments similar to the ones presented in \ref{proof: adj} above, the equivalence between \ref{transversality} and \ref{transversality22} can be established.
		
		\item \ref{Ham NosPos Grad Cond}-\ref{MultiplexConst} are identical to \ref{Ham NosPos Grad Cond22}-\ref{MultiplexConst22}. 
	\end{enumerate}
	 
	 To complete the proof we must ensure that the admissible processes in the problems  \eqref{eq:ProbFormulation} and \eqref{eq:ProbFormulation_new} are in bijective correspondence, a fact that is almost immediate. Indeed, note that for any trajectory \(\big(\qT, \xT, \uT\big)_{t=0}^{\nS}\) that is admissible in \eqref{eq:ProbFormulation}, the trajectory \(\big(\qT, \xT, \uT, 0\big)_{t=0}^{\nS}\) is admissible in \eqref{eq:ProbFormulation_new}. Similarly, for any trajectory \(\big(\qTnew, \xTnew, \uTnew, \wT\big)_{t=0}^{\nS}\) that is admissible in \eqref{eq:ProbFormulation_new}, the trajectory \(\big(\qTnew, \xTnew, \uTnew\big)_{t=0}^{\nS} \) is admissible in \eqref{eq:ProbFormulation} because any admissible trajectory of \eqref{eq:ProbFormulation_new} has \(\wT =0 \) for all \(\timeRange \), which forces the control action \(\uTnew \in \uhash\) for all \(\timeRange\).   Thus, the admissible sets of to \eqref{eq:ProbFormulation} and \eqref{eq:ProbFormulation_new} are in bijective correspondence. 
	 \qed
	 \begin{remark}\label{rem: Convexify}
	 	Note that one cannot directly lift the result Theorem \ref{thm:DMP} to solve the optimal control problem \eqref{eq:ProbFormulation} since one of the requirements of Theorem \ref{thm:DMP} is the convexity of the admissible control action set. It is because of this reason that we first modified the original problem \eqref{eq:ProbFormulation} to \eqref{eq:ProbFormulation_new}. We enumerate some important properties of the two characterizations of the admissible joint control action set: 
	 		\begin{enumerate}[label=(\alph*), leftmargin=*, widest=b, align=left]
	 			\item \(\uhash\) is a non-convex set while \(\ustar\) is a convex set. Convexity of \(\ustar\) is attributed to the convexity of \(\ithAdmControl[i]\) for all \(\iRange\) as per Assumption \ref{ass: Existence} and the fact that finite Cartesian products of convex sets yield in a convex sets \cite[Chapter 1]{fundamentalsCA}. 
	 			\item The auxiliary dynamics \eqref{additional dynamics} along with the additional terminal constraint \(\aux[N]= (0,0)^\top \) encodes the multiplexer constraint; i.e., at any instant the control action resides on one of the branches of the ``star''-shaped set \uhash. (refer Appendix \ref{app: app_z})  
	 		\end{enumerate}
	 \end{remark}
	 \begin{remark}
	 	Note that we implemented the multiplexer constraint by introducing an auxiliary variable. In \cite{KPPCP} and \cite{paruchuri2017discrete} authors have implemented frequency constraints on the control action in the similar fashion. The auxiliary system dynamics in such cases is a manifestation of the constraints enforced on the discrete Fourier transform of the control trajectories \cite{paruchuri2017discrete}. 
	 \end{remark}
\subsection{Proof of Corollary \ref{Corollary1}} 
 Observe that the optimal control problem \eqref{eq:ProbFormulation_Corollary} is a special case of \eqref{eq:ProbFormulation} in the sense that we can view the constraints imposed on the final states as state-inequality constraints as presented in \cite[Appendix A1-A2]{phogat2016discrete}. The proof now follows by extending Theorem \ref{th: 1} to the case where the final states of the joint system are constrained to lie on a submanifold, say \(\mfdFinal\). This idea can be formalized by representing the end point constraints \((\qT[N], \xT[N]) \in \mfdFinal\) as a state inequality constraint as presented in \cite[Appendix A1]{phogat2016discrete}. Point-to-point ballistic reachability maneuvers are then special cases of this problem in the sense that manifold \(\mfdFinal\) is reduced to a singleton set \cite[Appendix A2]{phogat2016discrete}. The end point constraints in such cases imply that the transversality conditions are trivially satisfied; consequently, transversality conditions do not appear in the statement of Corollary \ref{Corollary1} \qed 

	\section{Numerical Experiment}
	\label{sec: Num Exp}
	\subsection{Application to the attitude control of satellite}
Satellites in outer space are often commanded to perform orientation maneuvers about specified axes to point star sensors at some specific coordinates in deep space, pointing cameras in a specific desired direction for imaging purposes, to position solar panels for effective tracking of the sun for optimal energy harvesting, etc.; see, e.g., \cite{scrivener1994survey}. Throughout the duration of such commanded maneuvers there are strict limitations needed so that the motion of any satellite should strictly stay within certain limits, e.g., of momenta, speed, etc., to avoid mechanical failures. 
For an illustration of our results we consider a system of two satellites performing single-axis energy optimal maneuvers without violating pre-specified constraints on their control actions and angular momenta. Moreover, the control action commands are sent to these satellites via a single shared server, which imposes the limitation that the control signal can be dispatched to only one among the two satellites at any time instant. 
 
The configuration space of a satellite undergoing single-axis attitude motion is \(\SATmfd\), which is isomorphic to \(\ism\). We have deliberately not chosen a general rigid body orientation maneuver, with the configuration manifold \(\SATfull\), for the illustration at hand in order to get a better visualization of the results in the form of figures, while at the same time the coordinate-free nature of the controller is clearly amplifed. We borrow the discrete-time model of a satellite obtained using discrete mechanics from \cite{phogat2017discrete}:
\[
\begin{cases}
	\Rot{t+1}{i} = \Rot{t}{i} \SATjump\bigl(\Mom{t}{i}\bigr),\\
	\Mom{t+1}{i} = \Mom{t}{i} + h \torq{t}{i},
\end{cases} \quad \quad \text{for} \ \  i \in \{1, 2\},
\]
where \(\SATjump(\SATom) \defas \begin{pmatrix} \sqrt{1-h^2 \SATom^2} & - h \SATom \\  h \SATom & \sqrt{1-h^2 {\SATom^2}} \end{pmatrix}\), \(\Mom{t}{i} \in \R\) is the angular momentum, \(\Rot{t}{i} \in \so(2)\) is the rotation matrix, and \(\torq{t}{i} \in \R\) is the control action applied about the axis of rotation of the \(\ith\) satellite at time instant \(t\). Thus, the configuration space for this joint system of two satellites is \(\R^2 \times \so(2) \times \so(2) \simeq \R^2 \times S^1\times S^1\).

Fix \(\torqConstraint, \MomConstraint> 0\) for \(i \in \{1,2\}\). At each time instant \(t\) we enforce box constraints on the control action of the form \(\bigl|{\torq{t}{i}}\bigr| \leq \torqConstraint\) and on the angular momentum of the form \(\frac{1}{2}\big((\Mom{t}{i})^2 - \MomConstraint^2 \big) \leq 0\) of the \(\ith\) satellite. We formulate the problem as follows
\begin{equation}
\label{eq:exmpl}
\begin{aligned}
	& \minimize_{} && \mathscr{J} \left(\SATcont\right) \defas \sum_{i=1}^{2}\sum_{t=0}^{\nS-1} \frac{(\torq{t}{i})^2}{2} \\
	& \text{subject to} &&
\begin{cases}
	\left.
	\begin{aligned}
		& \left.
		\begin{aligned}
			& \Rot{t+1}{i} = \Rot{t}{i} \SATjump(\Mom{t}{i})\\
			& \Mom{t+1}{i} = \Mom{t}{i} + h \torq{t}{i} \\
			& \bigl|{\torq{t}{i}}\bigr| \leq \torqConstraint
		\end{aligned}\right\} & \text{for\;} \timeRange, \\
		& \tfrac{1}{2}\big((\Mom{t}{i})^2-\MomConstraint^2 \big)\leq 0 & \text{for\;} \timeRangeSt,\\
		& \bigl(\Rot{0}{i},\Mom{0}{i}, \aux[0] \bigr) = \Bigl(\Rot{\text{In}}{i}, \Mom{\text{In}}{i}, (0,0)\top \Bigr),\\  
		& \bigl(\Rot{\nS}{i},\Mom{\nS}{i}, \aux[\nS]\bigr) = \Bigl(\Rot{\text{Fi}}{i}, \Mom{\text{Fi}}{i}, (0,0)\top\Bigr),
	\end{aligned}\right\} & \text{for\;} i \in \{1, 2\},\\
	\aux[t+1] = \aux[t] + \begin{pmatrix}
		\torq{t}{1}\torq{t}{2}\big(\torq{t}{1}\torq{t}{2}+1\big) \\
		\torq{t}{1}\torq{t}{2}\big(\torq{t}{1}\torq{t}{2}-1\big)
		\end{pmatrix} \ \ \ \ \ \text{for\;} \timeRange,
\end{cases} 
\end{aligned}
\end{equation}
where the subscripts ``$\text{In}$'' and ``$\text{Fi}$'' denote the corresponding values at the initial and the final instants, respectively.

We obtain the first order necessary conditions for optimality of the preceding point-to-point ballistic reachability maneuvers by appealing to Corollary  \ref{Corollary1}. We define a map \(  \dlieaSO \ni t \mapsto \hat{t} \in \R\). To be more precise, in the preceding operation we identify \(\dlieaSO \) with \(\mathfrak{so}(2)\) and then use the vector space homeomorphism from \(2\times2 \) skew symmetric matrices to \(\R\). Given a scalar \(\NU \in \{-1,0\}\) and \(\SATauxAdj[] \defas (\SATauxAdj[1], \SATauxAdj[2]) \in (\R^2)\dual\), we define the Hamiltonian by
\begin{multline*}
	(\dlieaSO \times \dlieaSO) \times (\R^2)\dual \times (\so(2) \times \so(2)) \times \R^4 \ni (\SATdualg[], \SATdualE[], \SATg, \SATE[] , \SATu[]) \mapsto\\
	\ham[{\SATauxAdj[]}](\SATdualg[], \SATdualE[], \SATg, \SATom , \SATu[]) \defas \sum_{i=1}^{2}\bigg(\NU \frac{(\torq{}{i})^2}{2} + \big\langle\iSATdualg{}{i},  \ithExpMap[i]^{-1}(\SATjump(\Mom{}{i}))\big\rangle_{\mathfrak{so}(2)} + \iSATdualE{}{i}(\Mom{}{i} + h \torq{}{i})\bigg)\\ + \bigg\langle \SATauxAdj[], \begin{pmatrix} \torq{}{1}\torq{}{2}\big(	\torq{}{1}\torq{}{2}+1\big) \\ \torq{}{1}\torq{}{2}\big(	\torq{}{1}\torq{}{2}-1\big) \end{pmatrix}\bigg\rangle,
\end{multline*}
where \(\SATdualg[] \Let (\iSATdualg{}{1}, \iSATdualg{}{2}), \SATdualE[] \Let (\iSATdualE{}{1}, \iSATdualE{}{2}) , U \Let(\torq{}{1}, \torq{}{2})\) and \(\SATom \Let (\Mom{}{1}, \Mom{}{2})\). For \(\timeRange \) we define a transformation
\begin{align*}
\dlieaSO \ni \iSATdualg{t}{i} \mapsto \RhoTI[t] \Let \Big(\mathcal{D}\ithExpMap[]^{-1}\big(\SATjump(\opMom{t}{i})\big) \circ \tanLift{\ithId}{\PhiI_{\SATjump(\opMom{t}{i})}}\Big)\dual\big(\iSATdualg{t}{i}\big) \in \dlieaSO.   
\end{align*}

Let \(\big(\opSATu\big)_{t=0}^{\nS-1}\) be an optimal control that solves \eqref{eq:exmpl}. Corollary \ref{Corollary1} asserts that there exist a trajectory \(t \mapsto \optimalHamState \defas\Big( \big(\iSATdualg{t}{i},  \iSATdualE{t}{i} \big)_{i=1}^{i=2}, \big( \opRot{t}{i}, \opMom{t}{i} \big)_{i=1}^{i=2} \Big)\) on \((\mathfrak{so}(2)\dual \times \mathfrak{so}(2)\dual) \times (\R^2)\dual  \times (\so(2) \times \so(2)) \times \R^2\), a covector trajectory \(\bigl(\ithMu\bigr)_{t=1}^{\nS-1} \subsetTo (\R^2)\dual\) for \(i \in \{ 1,2\} \), a scalar \(\NU\), and a vector \(\SATauxAdj[] \in (\R^2)\dual\), such that:  
\begin{enumerate}[leftmargin=*, label=(V-\roman*), widest=b, align=left]
	\item \label{cond:v1} The non-triviality condition \ref{Non-triviality conds} holds, 
	\item The state and adjoint dynamics, for \(i \in \{1,2\}\) and \(t \in [N-1]\), are given by
	\begin{align*}
		\opRho{t-1}{i}&= \ad{\exp(-\mathcal{D}_{\iSATdualg{}{i}}\ham(\optimalHamState))}\dual \opRho{t}{i} = \SATjump(\opMom{t}{i}) \opRho{t}{i} \SATjump(\opMom{t}{i})^\top =  \opRho{t}{i},\\
		\opdualE{t-1}{i} &= \mathcal{D}_{\Mom{}{i}}\ham(\optimalHamState) + \ithMu\opMom{t}{i} = \frac{h \hatOpdualg{t}{i}}{\sqrt{1-h^2(\opMom{t}{i})^2}} + \opdualE{t}{i} + \ithMu \opMom{t}{i},
	\end{align*}
	Thus, the state and adjoint dynamics are given by
	\begin{equation}\label{eq:adjpmp}
		\text{adjoints} \begin{cases}
			\opRho{t-1}{i} = \opRho{t}{i},\\
			\opdualE{t-1}{i} =  \frac{h \hatOpdualg{t}{i}}{\sqrt{1-h^2(\opMom{t}{i})^2}} + \opdualE{t}{i} +\ithMu \opMom{t}{i},
		\end{cases} 
		\text{states} \begin{cases}
		\opRot{t+1}{i} = \opRot{t}{i} \SATjump(\opMom{t}{i})\ ,\\
		\opMom{t+1}{i}  =  \opMom{t}{i} + h \optorq{t}{i},\\
		\end{cases}
	\end{equation}
	
\item The Hamiltonian maximization condition for \(\NU = -1\) (i.e., the \emph{normal} case): We analyze the implications of \ref{NonPosCondHold} for the above system of two satellites. To this end, we first evaluate the directional derivative of the Hamiltonian function with respect to $\uT[]$ at the point \(\uT = (\torq{t}{1}, \torq{t}{2}) \in \R^2\), which is
		\begin{equation}\label{Hammax_ex}
		\mathcal{D}_{\uT[]} \ham(\cdot, \uT) = \begin{pmatrix}
			-\torq{t}{1} + h \Mom{t}{1} + 2 (\SATauxAdj[1]+\SATauxAdj[2]) \torq{t}{1} (\torq{t}{2})^2 + (\SATauxAdj[1]-\SATauxAdj[2]) \torq{t}{2}\\
			-\torq{t}{2} + h\Mom{t}{2} + 2 (\SATauxAdj[1]+\SATauxAdj[2]) (\torq{t}{1})^2 \torq{t}{2} + (\SATauxAdj[1]-\SATauxAdj[2])\torq{t}{1}
		\end{pmatrix}. 
		\end{equation} 
		We start with the assumption that at time \(t\) we have \(\multiplexFunc(t) = 1\) (that is, \(\optorq{t}{2} = 0\)); the case where \(\multiplexFunc(t) = 2\) follows similarly. Moreover, with a slight abuse of terminology we employ the term ``support cone at any instant'' to be the support cone of the admissible \emph{joint} control action set with apex at the optimal control action for that time instant. The following two cases arise: 
		\begin{itemize}[leftmargin=*]
		\item \textbf{Case I.}   \(\bigl|{\optorq{t}{1}}\bigr| < \torqConstraint[1]\): Since the control magnitude is within the prescribed bounds, the support cone is \(\R^2\). Thus, the Hamiltonian maximization condition \ref{NonPosCondHold} on substituting \eqref{Hammax_ex} gives us
		\begin{equation*}\label{eqCase1}
\begin{aligned}
\optorq{t}{1} &= h \opdualE{t}{1}, \\
h \opdualE{t}{2} &= (\SATauxAdj[1]-\SATauxAdj[2])\optorq{t}{1}.
\end{aligned}
		\end{equation*}
		
		\item \textbf{Case II.}  \(\optorq{t}{1} = \torqConstraint[1]\): This means that the control action constraint is active and the support cone is given by: \(\suppCone{\ustar}{{(\optorq{t}{1}, 0)}} = \big\{(\xvec[1], \xvec[2]) \in \R^2 | \xvec[1] \leq \torqConstraint[1] \big\}\). Appealing to \ref{NonPosCondHold} and substituting \eqref{Hammax_ex} we arrive at
			\begin{equation*}\label{eqCa2}
			\begin{aligned}
		 	h \opdualE{t}{1} &\ge \torqConstraint[1], \\
			h \opdualE{t}{2} &= (\SATauxAdj[1]-\SATauxAdj[2])\optorq{t}{1}.
			\end{aligned}
			\end{equation*}
		 Similarly, if \(\optorq{t}{1} = -\torqConstraint[1]\), we obtain the condition
			\begin{align*}
				h \opdualE{t}{1} & \leq -\torqConstraint[1],\\
				h \opdualE{t}{2} & = (\SATauxAdj[1]-\SATauxAdj[2])\optorq{t}{1}.
			\end{align*}
		\end{itemize}
	\item The Hamiltonian maximization condition for \(\NU = 0\) (i.e., the \emph{abnormal} case): The directional derivative of the Hamiltonian function with respect to \(U\) in this case is: 
	\begin{equation}\label{Hammax_ex2}
	\mathcal{D}_{\uT[]} \ham(\cdot, \uT) = \begin{pmatrix}
	h \Mom{t}{1} + 2 (\SATauxAdj[1]+\SATauxAdj[2]) \torq{t}{1} (\torq{t}{2})^2 + (\SATauxAdj[1]-\SATauxAdj[2]) \torq{t}{2}\\
h\Mom{t}{2} + 2 (\SATauxAdj[1]+\SATauxAdj[2]) (\torq{t}{1})^2 \torq{t}{2} + (\SATauxAdj[1]-\SATauxAdj[2])\torq{t}{1}
	\end{pmatrix}
	\end{equation} 
	for any control action \(\uT = (\torq{t}{1}, \torq{t}{2}) \in \R^2\). As in the case of \(\NU = -1\), we assume that \(\multiplexFunc(t) = 1\) (i.e., at time \(t\), \(\optorq{t}{2} = 0\)) without loss of generality. The following two cases arise:
	\begin{itemize}[leftmargin=*] 
		\item \textbf{Case I.}  \(\bigl|{\optorq{t}{1}}\bigr|< \torqConstraint[1]\): Since the control magnitude is within the prescribed bounds, the support cone is \(\R^2\), and the Hamiltonian maximization condition \ref{NonPosCondHold} upon substituting \eqref{Hammax_ex2} leads to
		\begin{equation*}\label{eqCase11}
		\begin{aligned}
		 \opdualE{t}{1} &= 0, \\
		h \opdualE{t}{2} &= (\SATauxAdj[1]-\SATauxAdj[2])\optorq{t}{1}.
		\end{aligned}
		\end{equation*}

		\item \textbf{Case II.} \(\optorq{t}{1} = \torqConstraint[1]\): The control action constraint is active, which leads to the support cone  \(\suppCone{\ustar}{{(\optorq{t}{1}, 0)}} = \big\{(\xvec[1], \xvec[2]) \in \R^2 | \xvec[1] \leq \torqConstraint[1] \big\}\). Appealing to \ref{NonPosCondHold} and substituting \eqref{Hammax_ex} we arrive at
		\begin{equation*}
		\begin{aligned}
		\opdualE{t}{1} &\ge 0, \\
		h \opdualE{t}{2} &= (\SATauxAdj[1]-\SATauxAdj[2])\optorq{t}{1}.
		\end{aligned}
		\end{equation*}
		Similarly, if \(\optorq{t}{1} = -\torqConstraint[1]\), we obtain the condition
			\begin{align*}
				\opdualE{t}{1} & \leq 0,\\
				h \opdualE{t}{2} & = (\SATauxAdj[1]-\SATauxAdj[2])\optorq{t}{1}.
			\end{align*}
	\end{itemize}
	\item \label{cond:vfinal} The complementary slackness condition and the non-positivity condition give us, in view of \ref{Comp Slackness CondCor}-\ref{NonPos CondCor}, that if the angular momentum constraint is inactive, then the corresponding covector \(\ithMu\) is zero; otherwise it is non-positive.
\end{enumerate}

For the purpose of numerical experiments we take help of the software \textsf{CasADi} \cite{Andersson2013b} where we use Interior Point Optimization (IPOPT) solver. 

The multiplexing constraint is implemented by introducing a new state variable \(\auxVAR \) with the dynamics given by \eqref{additional dynamics} and an additional terminal constraint \(\auxVAR_{\nS} = (0,0)^\top \). The results below correspond to a system of two satellites having moments of inertia \(\SATIner[1] = 800 \ {\si{\kilogram}\si{\meter}^2}\) and \(\SATIner[2] = 1200 \ {\si{\kilogram}\si{\meter}^2}\) about their respective axes of rotation. We report the following maneuver:
\begin{itemize}[leftmargin=*]
	\item \({\SATmaneuver[]:}\) Time of simulation = \(117\si{\second}\), step length = \(0.1\si{\second}\), \(\torqConstraint[1] = \torqConstraint[2] = 0.05\si{\newton}\si{\meter}, \MomConstraint[1] = \MomConstraint[2] = 0.1\si{\newton}\si{\meter}\si{\second}\);
	\begin{itemize}
		\item \(\left(\ang{0}{1},\Mom{0}{1}, \ang{0}{2},\Mom{0}{2} \right)= \left({0}{\degree},0  \si{\newton}\si{\meter}\si{\second} , {0}{\degree}, 0   \si{\newton}\si{\meter}\si{\second} \right)\);
		\item \(\left(\ang{N}{1},\Mom{N}{1}, \ang{N}{2},\Mom{N}{2} \right)= \left({45}{\degree}, 0.015  \si{\newton}\si{\meter}\si{\second} , {90}{\degree}, 0.015   \si{\newton}\si{\meter}\si{\second} \right)\).
	\end{itemize}	
	\end{itemize}	
	The numerical results for this maneuver are presented in Figure \ref{fig: omega_comparison}--\ref{fig: ControlComparisonText1}. {The parameters in the maneuver \(\SATmaneuver[]\) are obtained in a trial and error fashion so that the problem remains feasible. 
		
		As the timescale in \(\SATmaneuver[]\) is moderately large, to satisfy the end point requirements the optimal trajectory happens to take one complete `revolution' around the cylinder (see Figure \ref{fig: omega_comparison}) before attaining its final state. This brings out the benefit of using a geometric controller rather than a single-chart based controllers (such as the ones based on quaternions, Euler angles etc.), which suffer with the issues of singularity. 
		
		In order to complete one `revolution' the angular velocity has to remain almost constant around its peak for some time, which enforces relatively less chatter in control action during that phase (see Figure \ref{fig: ControlComparisonText1}). 
		
	} 
	

\begin{figure}[h]
	\centering
	\includegraphics[scale=1.1]{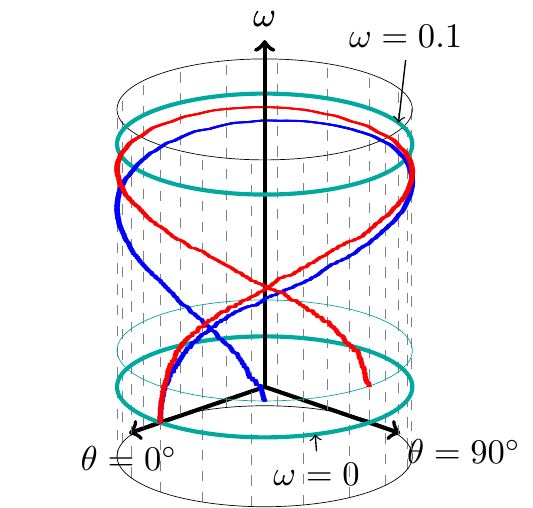}
	\caption{The blue trajectory on the cylinder shows the angular momentum of the satellite 1 and the red trajectory shows the angular momentum of satellite 2 for the maneuver $\SATmaneuver[]$. The animation is available here: \href{https://www.youtube.com/watch?v=EYULn5bG45A}{https://www.youtube.com/watch?v=EYULn5bG45A}} 
	\label{fig: omega_comparison}
\end{figure}
%
%
\begin{figure}[h]
	\centering
	\includegraphics[scale=0.4]{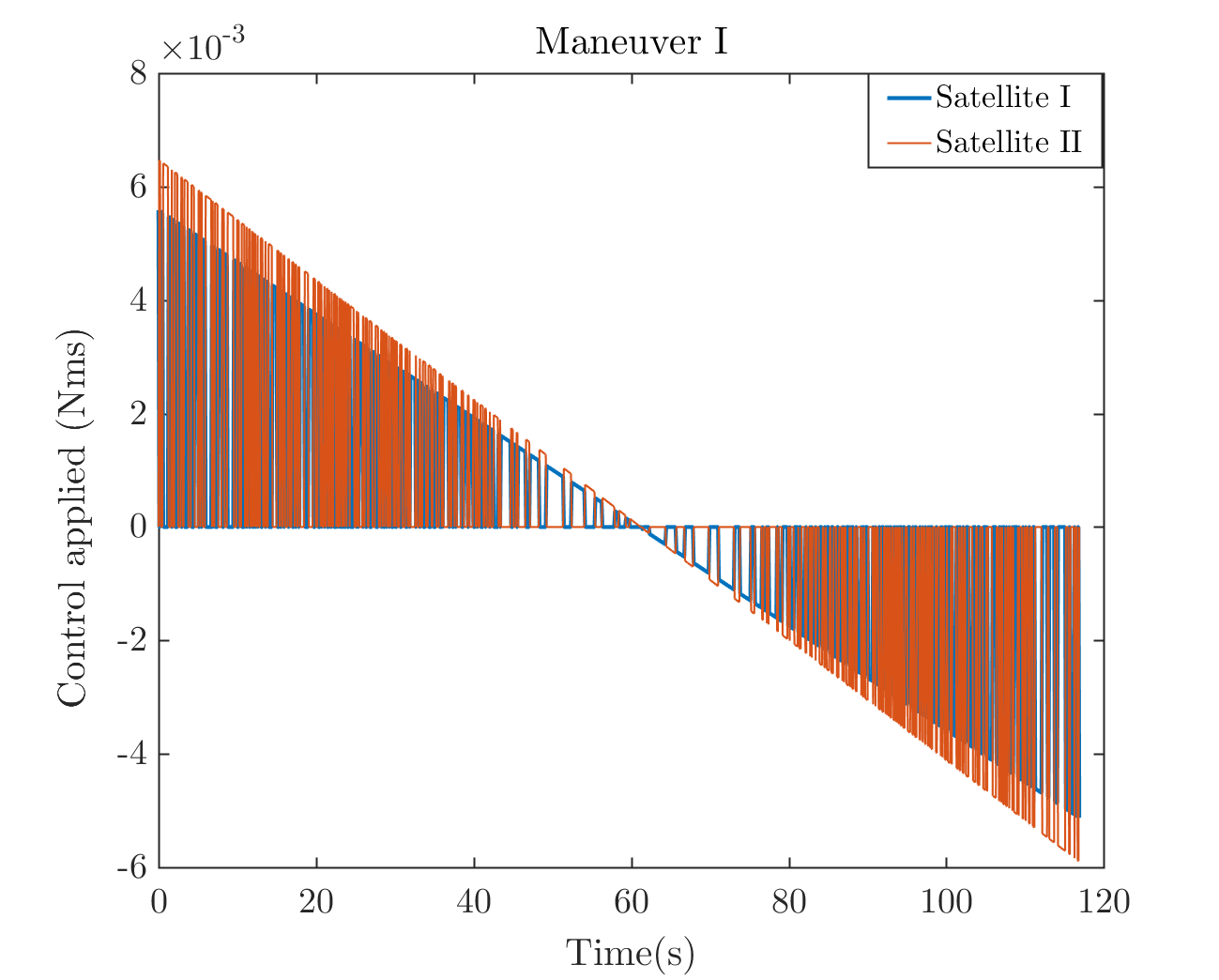}
	\caption{The figure illustrates the optimal control trajectory of both the satellites under maneuver $\SATmaneuver[]$. Note that the multiplexing constraint is followed at every instant of time. }
	
	\label{fig: ControlComparisonText1}
\end{figure}
The first order necessary conditions for optimality \ref{cond:v1}--\ref{cond:vfinal} in our experiments were verified against the numerical results for both of the maneuver \(\SATmaneuver[]\). In the process of obtaining the aforementioned simulation results, we received several outputs that do not satisfy the proposed necessary conditions, from the solver \(\textsf{CasADi}\), which were rejected for not satisfying the first order necessary conditions presented in this article. This brings out the need to develop a solver that utilizes the first order necessary conditions presented in this article to compute the solution trajectories. 

\subsection{Application to the control of underwater(UW) vehicle}

Another interesting setup where the system states evolve on non flat spaces is underwater (UW) vehicles. UW vehicles are becoming popular in research domains for their capabilities in conducting underwater surveillance \cite{shen2017trajectory}. We borrow the discrete time model for dynamics of the UW vehicle from \cite{nordkvist2010lie}. We have restricted the motion of system, for the sake of this article, on SE(2)\(\times\R^5\): 

\begin{equation}\label{eq: SE2dynamics}
\begin{aligned}
\begin{cases}
\Rot{t+1}{i} &= \Rot{t}{i} \SATjump\bigl(\Mom{t}{i}\bigr),\\
\Mom{t+1}{i} &= \Mom{t}{i} + h \TorqSE{t}{i},
\\
\dir{t+1}{i} &= \dir{t}{i}+h\Rot{t}{i}\vel{t}{i},\\
\Mass \vel{t+1}{i} &= \SATjump\bigl(\Mom{t}{i}\bigr)^\top \Mass \vel{t}{i} + h\Force{t}{i}
\end{cases} \quad \quad \text{for} \ \  i \in \{1, 2\}
\end{aligned}
\end{equation}
where \(\SATjump(\SATom) \defas \begin{pmatrix} \sqrt{1-h^2 \SATom^2} & - h \SATom \\  h \SATom & \sqrt{1-h^2 {\SATom^2}} \end{pmatrix}\), \(\Mom{t}{i} \in \R\) is the angular momentum, \(\Rot{t}{i} \in \so(2)\) is the rotation matrix, and \(\TorqSE{t}{i} \in \R\) is the torque applied about the axis of rotation, \(\Force{t}{i} \in \R^2 \) is the force vector applied on the center of mass, \(\dir{t}{i} \in \R^2 \) is the vector specifying position of the center of mass, \(\vel{t}{i} \in \R^2 \) is the velocity of the center of mass of the \(\ith\) UW vehicle at time instant \(t\). Thus, the configuration space for this joint system of two satellites is \(\R^{10} \times \se(2) \times \se(2)\). In this section from this point onwards, for any \(\delta \in \R^2 \) we shall use the notation \(\delta_x\) for the  first component and \(\delta_y \) for the second component of the vector.  

Fix \(\torqConstraint[ij], \MomConstraint[ij] >0\)  for \(i \in \{1,2\}, j \in \{1,2,3 \}\). At each time instant \(t\) we enforce control constraints of the form \(|{\TorqSE{t}{i}}| \leq \torqConstraint[i1], |(\Force{t}{i})_x|\leq \torqConstraint[i2], |(\Force{t}{i})_y|\leq \torqConstraint[i3]  \) and the state constraints on the angular and linear velocities of the form:
 \[\frac{1}{2}\big((\Mom{t}{i})^2 - \MomConstraint[i1]^2 \big) \leq 0, \frac{1}{2}\big(((\vel{t}{i})_x)^2 - \MomConstraint[i2]^2 \big) \leq 0, \frac{1}{2}\big(((\vel{t}{i})_y)^2 - \MomConstraint[i3]^2 \big) \leq 0\] of the \(\ith\) satellite. Similar to \eqref{eq:exmpl}, we formulate the optimal control problem as follows
\begin{equation}
\label{eq:exmpl2}
\begin{aligned}
& \minimize_{} && \mathscr{J} \left(\tau,\phi\right) \defas \sum_{i=1}^{2}\sum_{t=0}^{\nS-1} \frac{|\TorqSE{t}{i}|^2 + \|\Force{t}{i} \|_2^2}{2} \\
& \text{subject to} &&
\begin{cases}
& \text{dynamics} \ \eqref{eq: SE2dynamics} \\
& \text{state constraints}
\\
&\text{boundary conditions}
\\
& \text{multiplexing constraints \eqref{eqn: ustar2} }
\end{cases} 
\end{aligned}
\end{equation}

We refrain, here, from stating the necessary conditions for this example to limit the size of article. We directly present the simulation results which satisfy the necessary conditions for optimality. 

We perform the following maneuver for a pair of UW vehicle:
\begin{itemize}[leftmargin=*]
	\item \({\SATmaneuver[\star]:}\) Time of simulation = \(27\si{\second}\), step length = \(0.05\si{\second}\), \(\torqConstraint[11] = \torqConstraint[21] = 0.025\si{\newton}\si{\meter}, \torqConstraint[12] = \torqConstraint[13] = \torqConstraint[22] = \torqConstraint[23] = 0.05 \si{\newton},
	\MomConstraint[11] = \MomConstraint[21] = 0.085\si{\newton}\si{\meter}\si{\second}, \MomConstraint[12] = \MomConstraint[22]= 0.02\si{\newton}\si{\meter}\si{\second}, \MomConstraint[13] = \MomConstraint[23]= 0.1\si{\newton}\si{\meter}\si{\second}\);
	\begin{itemize}
		\item Initial conditions are set to zero;
		\item Final conditions:  \(\left(\ang{N}{1},\dir{N}{1},\Mom{N}{1},\vel{N}{1},\ang{N}{2},\dir{N}{2},\Mom{N}{2}, \vel{N}{2} \right)= \\ \left({85}{\degree}, (0.25,0.5)^\top\si{\meter}, 0.08  \si{\newton}\si{\meter}\si{\second}, (0.01,0.01)^\top\si{\meter\per\sec}, {85}{\degree}, (0.5,0.25)^\top\si{\meter}, 0.08  \si{\newton}\si{\meter}\si{\second}, (0.01,0.01)^\top\si{\meter\per\sec} \right)\).
	\end{itemize}
\end{itemize}	
	The numerical results for this maneuver are presented in Figures \ref{fig: SE2torq_comparison}-\ref{fig: Traj comparison2}. The multiplexing constraints are clearly satisfied as only one out of two systems is injected with the control action at every time step. Because of the hard boundary constraints the switching frequency is relatively high at the boundary (This is even true for the previous example). 
	
	Since the force vector is zero for large time periods in middle the trajectories are relatively linear in those domains (Figure \ref{fig: Traj comparison2}). 

\begin{figure}[h]
	\centering
	\includegraphics[scale=0.4]{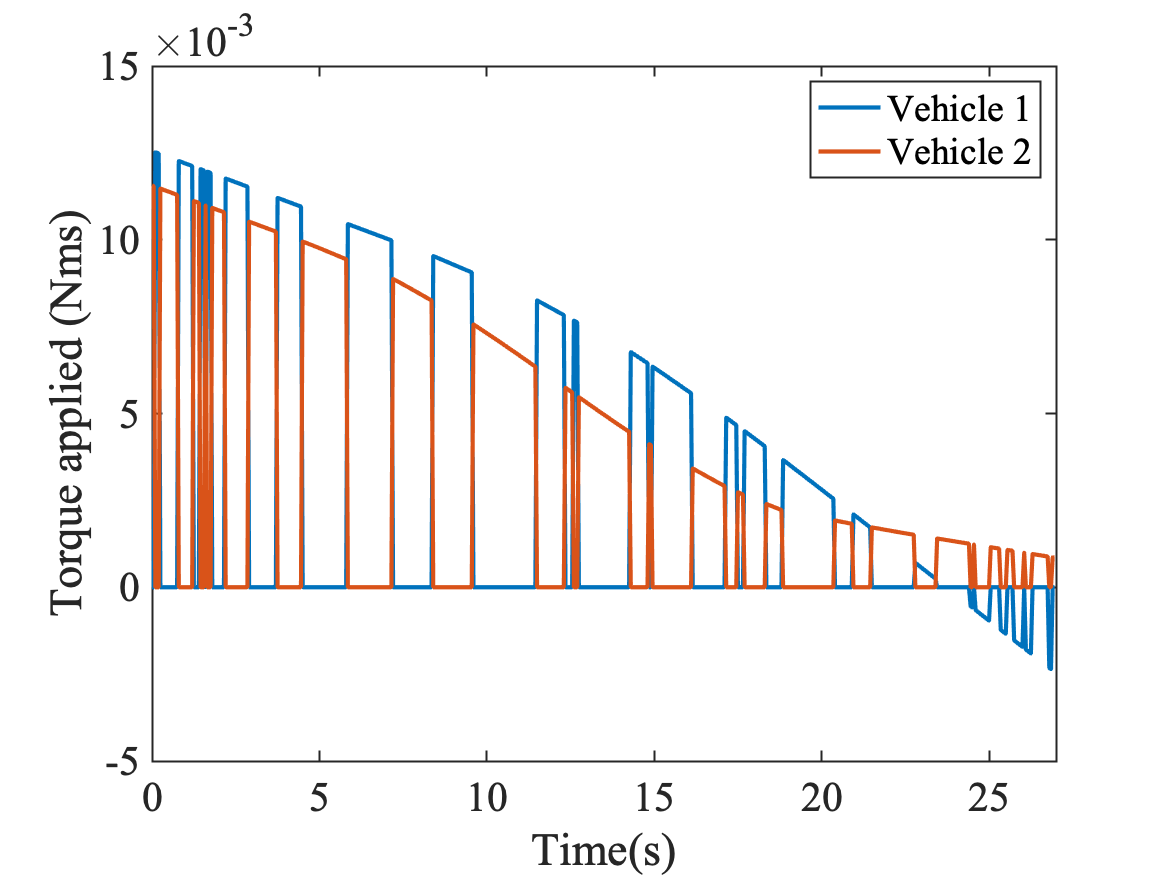}
	\caption{The blue trajectory denotes the torque profile for UW vehicle 1 while the red trajectory denotes the torque profile for UW vehicle 2 under the maneuver \SATmaneuver[\star]  } 
	\label{fig: SE2torq_comparison}
\end{figure}

\begin{figure}[h]
	\centering
	\includegraphics[scale=0.4]{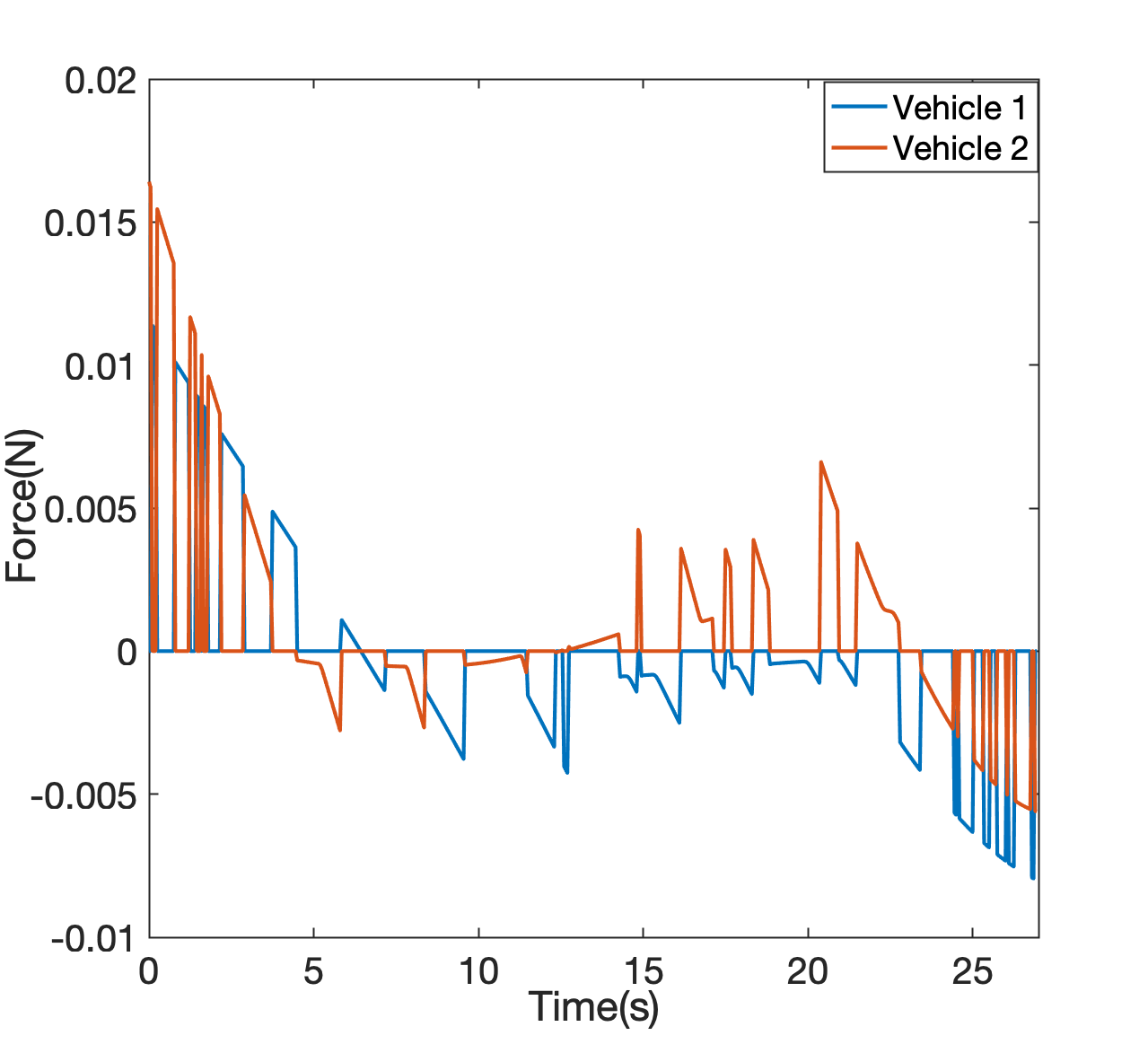}
	\caption{The blue trajectory denotes the force applied in the x-direction on UW vehicle 1 while the red trajectory denotes the same quantity for UW vehicle 2 under the maneuver \SATmaneuver[\star] }
	
	\label{fig: fX comparison}
\end{figure}
\begin{figure}[h]
	\centering
	\includegraphics[scale=0.4]{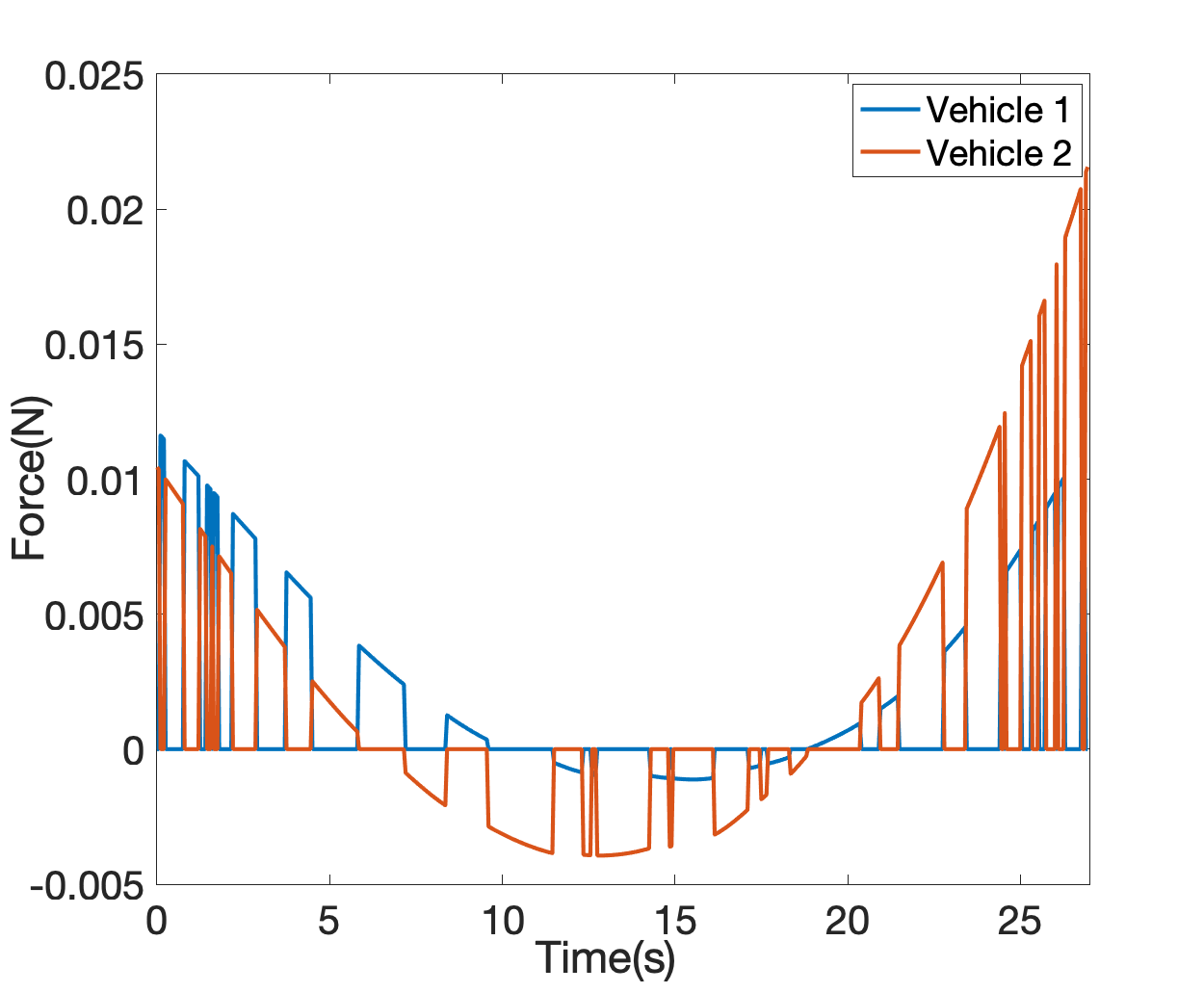}
	\caption{The blue trajectory denotes the force applied in the y-direction on UW vehicle 1 while the red trajectory denotes the same quantity for UW vehicle 2 under the maneuver \SATmaneuver[\star] }
	
	\label{fig: fY}
\end{figure}
\begin{figure}[h]
	\centering
	\includegraphics[scale=0.4]{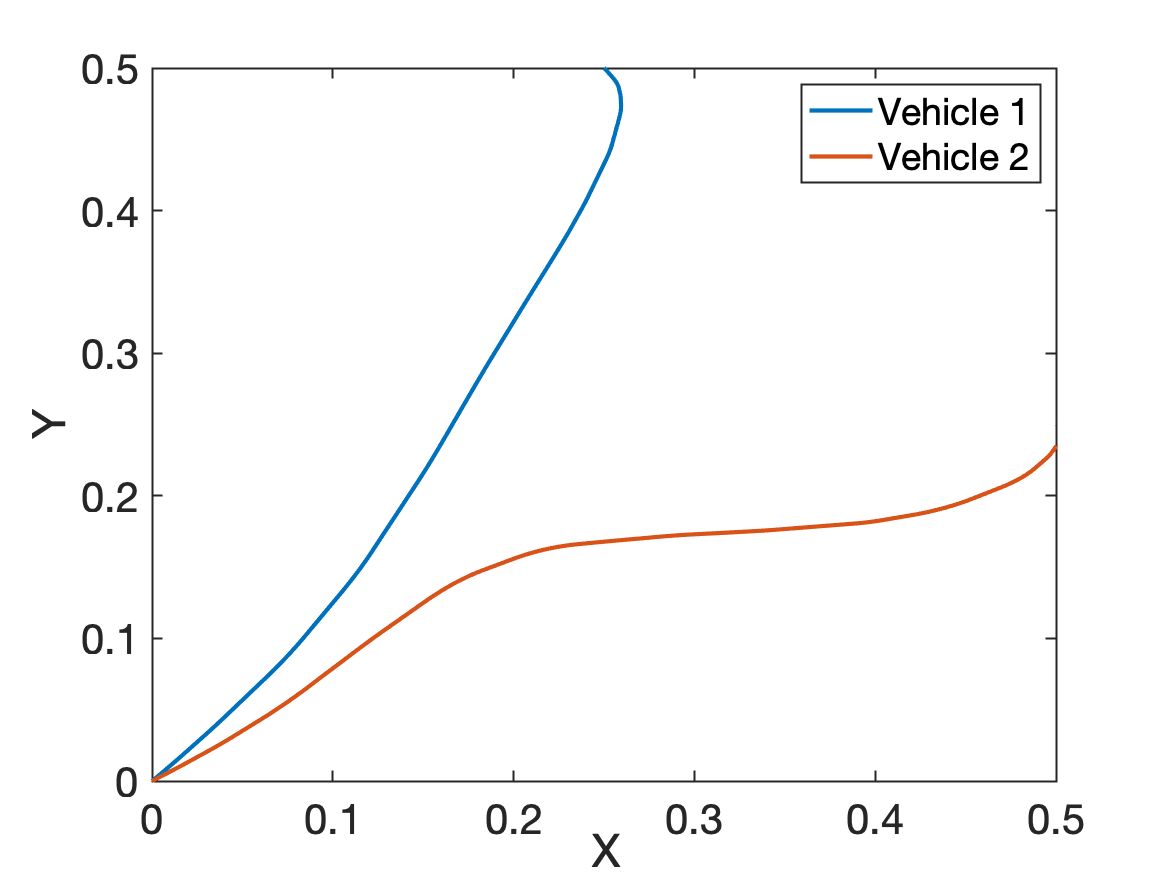}
	\caption{The blue trajectory denotes the path followed by the UW vehicle 1 while the red trajectory depicts the path followed by UW vehicle 2 under the maneuver \SATmaneuver[\star].}
	\label{fig: Traj comparison2}
\end{figure}

	\bibliography{references}
\bibliographystyle{alpha}

\appendix
\section{Discrete-time Pontryagin maximum principle on matrix Lie group}\label{app: DPMP}
 Consider a matrix Lie group \(\liegKP\) and Euclidean space \(\RPower[\dimStateKP]\) for some positive integer \(\dimStateKP \).  Let us define the following discrete-time control system that evolves on the configuration space \(\liegKP\times \RPower[\dimStateKP]\) and is described by 
\begin{equation}\label{eq: KPSystem}
\begin{aligned}
	\begin{cases} \qTKP[t+1] = \qTKP[t]  \jumpKP[t] \left(\qTKP,\xTKP\right)\\
	\xTKP[t+1] = \stateDynKP[t] \left(\qTKP,\xTKP,\uTKP\right)
	\end{cases} \quad \quad \text{for all} \ \timeRange,
\end{aligned}
\end{equation}
where 
\begin{enumerate}[label=(\alph*), leftmargin=*, widest=b, align=left]
	\item \(\qTKP \in \liegKP\), \(\xTKP \in \RPower[\dimStateKP]\) and \(\uTKP \in \controlKP \subsetTo \RPower[\dimControlKP]\) for some positive integer \(\dimControlKP\);
	\item \(\jumpKP: \liegKP \times \RPower[\dimStateKP] \rightarrow \liegKP\) is a smooth map describing the part of the dynamics on the matrix Lie group \(\liegKP \);
	\item \(\stateDynKP[t]: \liegKP \times \RPower[\dimStateKP]\times \RPower[\dimControlKP] \rightarrow \RPower[\dimStateKP] \) is a smooth map describing the part of the dynamics on the Euclidean space \(\RPower[\dimStateKP] \). 
\end{enumerate}

For the preceding control system, we recall a result \cite[Theorem 2.5]{phogat2016discrete} that solves the following optimal control problem: 
\begin{equation}
\label{eq:sopt}
\begin{aligned}
\minimize_{\left(\uTKP\right)_{t=0}^{\nS-1}} &&& \mathscr{J} \left(\qTKP[],\xTKP[],\uTKP[]\right) \defas \sum_{t=0}^{\nS-1} \costKP[t] \left(\qTKP,\xTKP,\uTKP\right) + \costKP[\nS] \left(\qTKP[\nS],\xTKP[\nS]\right)\\
\text{subject to} &&&
\begin{cases}
\text{dynamics \eqref{eq: KPSystem}} & \text{for all} \ \timeRange, \\
\uTKP \in \controlKP \subsetTo \RPower[m] &
\text{for all}\ \timeRange, \\
\constraintKP \left(\qTKP,\xTKP\right) \leq 0 & \text{for all} \ \timeRangeFullSt, \\ 
\left(\qTKP[0],\xTKP[0]\right)=\left(\bar{q}_0,\bar{x}_0\right)
\end{cases} 
\end{aligned} 
\end{equation}
with the following data:
\begin{enumerate}[label=(\alph*), leftmargin=*, widest=b, align=left]
\item \(\qTKP[] \defas \big(\qTKP[0], \qTKP[1], \dots, \qTKP[\nS] \big), \) \(\xTKP[] \defas  (\xTKP[0], \xTKP[1], \dots, \xTKP[\nS])\), \(\uTKP[] \defas (\uTKP[0], \uTKP[1], \dots, \uTKP[\nS]) \);
\item \(\costKP[t] : \liegKP \times \RPower[\dimStateKP] \times \RPower[m] \ra \R\) denotes the cost incurred at each time instant \(\timeRange \);
\item \(\costKP[\nS] : \liegKP \times \RPower[\dimStateKP] \ra \R\) denotes the cost incurred at the final instant \(t = \nS \);
\item \(\constraintKP: \liegKP \times \RPower[\dimStateKP] \ra \RPower[\dimConstraintKP] \) denotes the state constraints that needs to be satisfied at each \(\timeRangeFullSt \), for some positive integer \(\dimConstraintKP \); 
\item \(\left(\bar{q}_0,\bar{x}_0\right)\) denotes the user-defined initial conditions.  
\end{enumerate}
 
 The maps \(\jumpKP, \costKP, \costKP[\nS], \stateDynKP, \constraintKP\), the admissible control action set \(\controlKP\) and the Lie algebra \(\lieaKP\) in the above optimal control problem are required to satisfy Assumption \ref{ass: Existence} to ensure existence of the multipliers that appear in the Theorem \ref{thm:DMP} below:  
\begin{theorem}\cite[Theorem 2.5]{phogat2016discrete}
	\label{thm:DMP}
	Let \(\big(\opuTKP\big)_{t=0}^{\nS-1}\) be an optimal controller that solves the problem \eqref{eq:sopt} with \(\big(\opqTKP, \opxTKP \big)_{t=0}^\nS\) being the corresponding state trajectory. We define the Hamiltonian function, for \(\NU \in \{-1,0\}\), by
	\begin{equation}
	\begin{aligned}\label{main: HamDefinition}
	& \hamdefkp \ni \hamvarkp \mapsto \\
	& \hamkp{\hamvarkp} \defas\NU \costKP[\tau] \left(\qTKP[],\xTKP[],\uTKP[]\right) + \left\langle\zetaKP,\exp^{-1}\left(\jumpKP[\tau] \left(\qTKP[],\xTKP[]\right)\right)\right\rangle_{\lieaKP} + \left\langle \xiKP,\stateDynKP[\tau] \left(\qTKP[],\xTKP[],\uTKP[]\right) \right\rangle  \in \R, 
	\end{aligned}
	\end{equation}
	For \(\timeRangeFullSt\), we define the transformation  
	\begin{align*}
	\lieaKP\dual \ni \zetaKP_t \mapsto \RhoTKP = \Big(\mathcal{D}\ithExpMap[]^{-1}\big((\opqTKP)^{-1}\opqTKP[t+1]\big) \circ \tanLift{e}{\lftAcnKP_{(\opqTKP)^{-1}\opqTKP[t+1]}}\Big)\dual(\zetaKP_t) \in \lieaKP\dual, 
	\end{align*}
	where \(\IdEle[]\) is the identity element of the matrix Lie group \(\liegKP\) and \(\lftAcnKP\) is the left action on the Lie group \(\liegKP\).
	We denote the extremal lift of the state-action trajectory \((\opqTKP,\opxTKP, \opuTKP) \) under the optimal control \(\opuTKP\) at every time instant \(t\) by \(\optraj{t}\), where
	\[\optraj{t}  \defas \left(t,\zetaKP_t,\xiKP_t,\opqTKP,\opxTKP, \opuTKP \right),\]   
 Then there exist an adjoint trajectory \(\big( \zetaKP_t,\xiKP_t\big)_{t=0}^{\nS-1} \subsetTo \lieaKP\dual\times \left(\R^{\dimStateKP}\right)\dual\) , and covectors  \(\big(\muKP\big)_{t=1}^{\nS} \subsetTo \left(\R^{\dimConstraintKP}\right)\dual\), such that the following conditions hold:
	\begin{enumerate}[leftmargin=*, label={\rm (DMP-\roman*)}, widest=iii, align=left]
		\item \label{main:ntriv} non-triviality: the adjoint variables $(\zetaKP_t,\xiKP_t)$ for all \(\timeRange\), the covectors $ \muKP $ for all \(\timeRangeFullSt \), and the scalar ${\nu}$ do not vanish simultaneously;
		
		\item \label{main:dyn} state and adjoint system dynamics:
		\begin{align*}
			\text{states} & \begin{cases}
		\opqTKP[t+1] = \opqTKP[t] \exp\big({\mathcal{D}_{\zetaKP} \hamkp{(\optraj{t} )}}\big)\\
		\opxTKP[t+1] = \mathcal{D}_{\xiKP} \hamkp{(\optraj{t})}
		\end{cases}\\
			\text{adjoints} & \begin{cases}
		\RhoTKP[t-1] = \coAd{\exp\big({-\mathcal{D}_{\zetaKP} \hamkp{(\optraj{t})}}\big)} \RhoTKP + \cotanLift{\IdEle[]}{\lftAcnKP_{\opqTKP}} \Big(\mathcal{D}_{\qTKP[]} \hamkp{(\optraj{t})} + \muKP \mathcal{D}_{\qTKP[]} \constraintKP\left(\opqTKP, \opxTKP\right) \Big)\\
		\xiKP_{t-1} = \mathcal{D}_{\xTKP[]} \hamkp{(\optraj{t})} + \muKP \mathcal{D}_{\xTKP[]} \constraintKP\left(\opqTKP, \opxTKP\right);
		\end{cases}
		\end{align*}
		\item \label{main:trans} transversality:
		\begin{align*}
		\RhoTKP[\nS-1]&= \cotanLift{\IdEle[]}{\lftAcnKP_{\opqTKP[\nS]}}\Big(\nu \mathcal{D}_{\qTKP[]} \costKP[\nS] \left(\opqTKP[\nS], \opxTKP[\nS] \right) + \muKP[\nS] \mathcal{D}_{\qTKP[]} \constraintKP[\nS] \left(\opqTKP[\nS],\opxTKP[\nS] \right)\Big)\\
		\xiKP_{\nS-1}&= \nu \mathcal{D}_{\xTKP[]} \costKP[\nS] \left(\opqTKP[\nS], \opxTKP[\nS] \right) + \muKP[\nS] \mathcal{D}_{\xTKP[]} \constraintKP[\nS] \left(\opqTKP[\nS], \opxTKP[\nS]\right);
		\end{align*}
		
		\item \label{main:hmax} Hamiltonian non-positive gradient: \label{it:hamnpg}
		\[\Big\langle{\mathcal{D}_{\uTKP[]} \hamkp{(\optraj{t})}},{\tilW- \opuTKP}\Big\rangle \leq 0 \quad \text{for all} \ \tilW  \in \suppCone{\controlKP}{\opuTKP},\]
		where \(\suppCone{\controlKP}{\opuTKP}\) is the support cone of \(\controlKP\) with apex at \(\opuTKP\);
		\item \label{main:comp} complementary slackness:
		\begin{align*}
		\muKP \HadPr \constraintKP[t](\opqTKP,\opxTKP) = 0 \in \RPower[\dimConstraintKP] \quad \text{for all} \ \timeRangeFullSt; 
		\end{align*}
		
		\item \label{main:npos} non-positivity:
		\[ \vecCompare{\muKP}{0} \quad \text{for all}\ \timeRangeFullSt\]
	\end{enumerate}
\end{theorem}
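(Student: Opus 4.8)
The plan is to follow Boltyanskii's method of tents \cite{bolt1975method}, adapted to the Lie-group setting as in \cite{phogat2016discrete}. The central difficulty is that $\qTKP[t]$ evolves on the non-flat manifold $\liegKP$, so the first-order variational analysis underlying the method must be recast over a vector space. I would accomplish this by left-trivializing the tangent bundle of $\liegKP$ and using the exponential coordinates guaranteed by Assumption \ref{ass: Existence}\ref{asm:2}. Once the problem is posed over the Lie algebra $\lieaKP$, the construction of tents and the separation argument are standard; all the geometry is confined to setting up and dualizing the variational dynamics.

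First I would pass to exponential coordinates. By \ref{asm:2}, for each $t$ the jump $\jumpKP[t](\qTKP, \xTKP)$ lies in the range of $\exp$ on which the latter is a diffeomorphism onto its image, so setting $\eta_t \defas \exp^{-1}(\jumpKP[t](\qTKP, \xTKP)) \in \lieaKP$ converts the group recursion \eqref{eq: KPSystem} into $\qTKP[t+1] = \qTKP[t]\exp(\eta_t)$. This identifies an admissible group trajectory with a finite sequence of Lie-algebra elements, reducing \eqref{eq:sopt} to an optimization over vectors. Next I would derive the first-order variational recursion: representing a state variation at time $t$ by $v_t \in \lieaKP$ through $\delta\qTKP[t] = \tanLift{\IdEle[]}{\lftAcnKP_{\qTKP[t]}}(v_t)$ and differentiating $\qTKP[t+1] = \qTKP[t]\exp(\eta_t)$ together with $\xTKP[t+1] = \stateDynKP[t](\qTKP, \xTKP, \uTKP)$, the left-trivialized product rule yields a linear recursion in $(v_t, \delta\xTKP[t])$ forced by an admissible control variation $\delta\uTKP[t] \in \suppCone{\controlKP}{\opuTKP}$; pushing the variation through the right factor $\exp(\eta_t)$ is precisely where the adjoint action $\ad{\exp(-\eta_t)}$ enters.

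With the linearization in hand I would assemble the tents at the optimal point $\optraj{t}$: a descent half-space for the cost functional $\mathscr{J}$, a tangent cone for each \emph{active} state-inequality constraint $\constraintKP(\qTKP, \xTKP) \le 0$ (inactive constraints yielding trivial tents), and the support cone $\suppCone{\controlKP}{\opuTKP}$ for the control set. Local optimality forces this system of tents to be separable, and Boltyanskii's tent intersection theorem \cite{bolt1975method} produces a nontrivial family of separating functionals. I would then identify these functionals with the multipliers and read off the conclusions. The scalar on the cost tent becomes $\NU \in \{-1,0\}$, and the guaranteed nonvanishing of the functionals as a family is exactly non-triviality \ref{main:ntriv}; the covectors dual to the variational recursion become the adjoint trajectory $(\zetaKP_t, \xiKP_t)$; the nonnegative multipliers on the constraint tents become $\muKP$, which immediately gives non-positivity \ref{main:npos} and, since inactive constraints carry zero multipliers, complementary slackness \ref{main:comp}. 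Dualizing the forward variational recursion gives the backward adjoint dynamics \ref{main:dyn}: transposing the left-trivialized product rule produces the cotangent-lift term $\cotanLift{\IdEle[]}{\lftAcnKP_{\opqTKP}}$, dualizing the $\ad{\exp(-\eta_t)}$ factor produces $\coAd{\exp(-\mathcal{D}_{\zetaKP}\hamkp{(\optraj{t})})}$ (using $\eta_t = \mathcal{D}_{\zetaKP}\hamkp{(\optraj{t})}$ from the state equation), and the change of variables defining $\RhoTKP$ is exactly what absorbs the residual $\mathcal{D}\exp^{-1}$ factor. The terminal tent functionals give transversality \ref{main:trans}, and the control-variation component of the separation inequality, restricted to $\suppCone{\controlKP}{\opuTKP}$, gives the Hamiltonian non-positive gradient condition \ref{main:hmax}.

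The main obstacle is the coupled pair of steps consisting of trivializing the variational dynamics on $\liegKP$ and then dualizing them so that the coadjoint action and cotangent lift appear with precisely the right arguments. This bookkeeping is where the non-flatness of the configuration space could introduce sign or ordering errors; care is needed to track how left translation, the exponential map, and its derivative $\mathcal{D}\exp^{-1}$ interact, which is exactly what the definition of $\RhoTKP$ is engineered to manage. By contrast, the tent construction, the separation, and the extraction of the sign conditions are routine once the dynamics have been correctly linearized and dualized over the Lie algebra.
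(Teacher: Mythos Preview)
The paper does not prove this theorem. Theorem~\ref{thm:DMP} is quoted verbatim from \cite[Theorem~2.5]{phogat2016discrete} and stated in Appendix~\ref{app: DPMP} purely as an ingredient to be invoked in the proof of Theorem~\ref{prop: ApplyDMP}; no argument is supplied or even sketched in the present paper. So there is no ``paper's own proof'' to compare against.

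Your proposal is a faithful high-level outline of the argument actually carried out in \cite{phogat2016discrete}: pass to exponential coordinates via Assumption~\ref{ass: Existence}\ref{asm:2}, left-trivialize the variations, build Boltyanskii tents for the cost, the active state constraints, and the control set, invoke the separation theorem, and read off the multipliers; the coadjoint and cotangent-lift terms in \ref{main:dyn} arise exactly as you say, by dualizing the linearized recursion, with the transformation $\RhoTKP$ absorbing the $\mathcal{D}\exp^{-1}$ factor. As a sketch this is correct and hits the genuine technical point (the trivialization/dualization bookkeeping on $\liegKP$). For the purposes of this paper, however, a one-line citation to \cite[Theorem~2.5]{phogat2016discrete} is all that is expected.
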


\section{Properties of the direct product of matrix Lie groups}\label{app: DiPr}
Here we provide several important properties of the direct product of Lie groups which are utilized in \S\ref{sec: Proof}.

\begin{definition}[{\cite[p.\ 259]{bullo2004geometric}}]
	\label{ProjMap}
	If \(\grpG\) and \(\grpH\) are two groups, then a \textit{group homomorphism} is a map \(\grpG \ni \grp \mapsto \homo(\grp) \in \grpH\) that satisfies \(\homo(\grp_1 \grp_2) = \homo(\grp_1) \homo(\grp_2)\) for all \(\grp_1, \grp_2 \in \grpG\). A \textit{Lie group homomorphism} is a \emph{smooth} group homomorphism between Lie groups. 
\end{definition}

By Definition \ref{ProjMap} the projection map \kappaMap defined in \S \ref{sec: Main Result} is a Lie group homomorphism from \(\gTot\) to \(\grpMfdIndex\) for each \(\iRange\); a fact that follows from the direct product group structure on \(\gTot\). 

\begin{lemma}[{\cite[Proposition 5.3.6]{ref:RudSch-13}}]
	\label{lem: HomomorphismLemma}
	Consider two Lie groups \(\grpG\) and \(\grpH\) with identity elements \(\ithId[\grpG]\) and \(\ithId[\grpH]\), respectively, and let \(\lieAlg\) be the Lie algebra of \(\grpG\). For a Lie group homomorphism \( \lgHomo: \grpG \ra \grpH\) and for any \(\XI \in \lieAlg\) we have
	\begin{align}
	\lgHomo \circ \ithExpMap[\grpG](\XI)  = \ithExpMap[\grpH]  \circ\ \tanLift{\ithId[\grpG]}{\lgHomo(\XI)},
	\end{align}
	where \(\tanLift{\ithId[\grpG]}{\lgHomo}\) is the tangent map of \(\lgHomo\) at \(\ithId[\grpG]\) and \(\ithExpMap[\grpG]\) and \(\ithExpMap[\grpH]\) are the exponential maps of the Lie groups \(\grpG\) and \(\grpH\), respectively. 
\end{lemma}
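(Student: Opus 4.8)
The plan is to exploit the characterization of the exponential map through integral curves of canonical left-invariant vector fields, together with the elementary fact that a Lie group homomorphism intertwines left translations. Write $\eta \defas \tanLift{\ithId[\grpG]}{\lgHomo}(\XI)$ for the image of $\XI$ under the tangent map at the identity, an element of the Lie algebra of $\grpH$, and let $\vf[\XI]$ and $\vf[\eta]$ denote the canonical left-invariant vector fields determined by $\XI$ on $\grpG$ and by $\eta$ on $\grpH$, with integral curves $\intCurve_{\XI}$ and $\intCurve_{\eta}$ issuing from the respective identities. Since $\ithExpMap[\grpG](\XI) = \intCurve_{\XI}(1)$ and $\ithExpMap[\grpH](\eta) = \intCurve_{\eta}(1)$ by the definition of $\exp$, it suffices to establish the curve identity $\lgHomo \circ \intCurve_{\XI} = \intCurve_{\eta}$ on all of $\R$ and then evaluate at $t = 1$.

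The key step is to show that $\vf[\XI]$ and $\vf[\eta]$ are $\lgHomo$-related. First I would record that the homomorphism property of $\lgHomo$ is exactly the intertwining relation $\lgHomo \circ \leftAction_{\grp} = \leftAction_{\lgHomo(\grp)} \circ \lgHomo$ for every $\grp \in \grpG$. Differentiating this identity of maps at $\ithId[\grpG]$, using $\leftAction_{\grp}(\ithId[\grpG]) = \grp$ and $\lgHomo(\ithId[\grpG]) = \ithId[\grpH]$, and applying the resulting composed tangent maps to $\XI$ yields, in view of the defining relation $\vf[\XI](\grp) = \tanLift{\ithId[\grpG]}{\leftAction_{\grp}}(\XI)$,
\[
	\tanLift{\grp}{\lgHomo}\bigl(\vf[\XI](\grp)\bigr) = \tanLift{\ithId[\grpH]}{\leftAction_{\lgHomo(\grp)}}\bigl(\tanLift{\ithId[\grpG]}{\lgHomo}(\XI)\bigr) = \vf[\eta]\bigl(\lgHomo(\grp)\bigr),
\]
where the last equality is the definition of $\vf[\eta]$ evaluated at $\lgHomo(\grp)$ together with $\eta = \tanLift{\ithId[\grpG]}{\lgHomo}(\XI)$. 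This is precisely the assertion of $\lgHomo$-relatedness.

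With relatedness in hand I would apply it along the integral curve. The curve $t \mapsto \lgHomo(\intCurve_{\XI}(t))$ starts at $\lgHomo(\ithId[\grpG]) = \ithId[\grpH]$, and its velocity at each $t$ is $\tanLift{\intCurve_{\XI}(t)}{\lgHomo}\bigl(\vf[\XI](\intCurve_{\XI}(t))\bigr) = \vf[\eta]\bigl(\lgHomo(\intCurve_{\XI}(t))\bigr)$ by the displayed identity. Hence $\lgHomo \circ \intCurve_{\XI}$ is an integral curve of $\vf[\eta]$ through $\ithId[\grpH]$; by uniqueness of integral curves it coincides with $\intCurve_{\eta}$, and evaluating at $t = 1$ gives the claimed formula.

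I expect the main obstacle to be the bookkeeping in the relatedness computation of the second paragraph: one must differentiate $\lgHomo \circ \leftAction_{\grp} = \leftAction_{\lgHomo(\grp)} \circ \lgHomo$ at the correct base point and compose the tangent maps so that left-invariance on both groups lines up with the definition of $\eta$. An equivalent but perhaps slicker route avoids vector fields altogether: the curve $t \mapsto \lgHomo\bigl(\ithExpMap[\grpG](t\XI)\bigr)$ is a one-parameter subgroup of $\grpH$, being the composition of the one-parameter subgroup $t \mapsto \ithExpMap[\grpG](t\XI)$ with the homomorphism $\lgHomo$; it therefore equals $t \mapsto \ithExpMap[\grpH](t\zeta)$ for a unique $\zeta$ in the Lie algebra of $\grpH$, and differentiating at $t = 0$ forces $\zeta = \tanLift{\ithId[\grpG]}{\lgHomo}(\XI) = \eta$, whence the result follows at $t = 1$.
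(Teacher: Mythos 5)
The paper never proves this lemma: it imports it verbatim as \cite[Proposition 5.3.6]{ref:RudSch-13}, so there is no in-paper argument to compare yours against. Your proof is correct and is the standard textbook argument for the naturality of the exponential map. The relatedness computation is right: differentiating \(\lgHomo \circ \leftAction_{\grp} = \leftAction_{\lgHomo(\grp)} \circ \lgHomo\) at \(\ithId[\grpG]\) and evaluating on \(\XI\) gives exactly \(\tanLift{\grp}{\lgHomo}\bigl(\vf[{\XI}](\grp)\bigr) = \vf[\eta]\bigl(\lgHomo(\grp)\bigr)\) with \(\eta = \tanLift{\ithId[\grpG]}{\lgHomo}(\XI)\), which matches the paper's definition of the canonical left-invariant vector field; the integral-curve step then closes the argument by uniqueness, and evaluating \(\lgHomo \circ \intCurve_{\XI} = \intCurve_{\eta}\) at \(t=1\) is precisely the claimed identity, given the paper's definition \(\ithExpMap[\grpG](\XI) = \intCurve_{\XI}(1)\). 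Your alternative route via one-parameter subgroups is also valid, but note it is not really ``slicker'' in terms of logical content: the classification of one-parameter subgroups of a Lie group (every smooth homomorphism \(\R \ra \grpH\) has the form \(t \mapsto \ithExpMap[\grpH](t\zeta)\)) is itself usually proved by the same relatedness-plus-uniqueness argument, so the first route is the self-contained one. Either version would serve as a proof of the cited result, and both apply in particular to the matrix Lie groups considered in the paper.
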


\begin{lemma}[{\cite[Example 5.3.16]{ref:RudSch-13}}]
	\label{lem: expdist}
	Let \(\grpGNum[1]\) and \(\grpGNum[2]\) be two Lie groups with Lie algebras \(\lieAlgNum[1]\) and \(\lieAlgNum[2]\), respectively. Consider the Lie group \(\grpG \defas \grpGNum[1] \times \grpGNum[2]\) equipped with the direct product group structure and let \(\grpG \ni \grp \mapsto \projmap(\grp) \in \grpGNum\) denote the projection for \(i \in \{1,2\}\). Under the natural identification of the Lie algebra \(\lieAlg\) of \(\grpG\) with \(\lieAlgNum[1] \DirSum \lieAlgNum[2]\), we have
	\begin{align*}
	\ithExpMap[\grpG](\vf[1], \vf[2]) = \big(\ithExpMap[{\grpGNum[1]}](\vf[1]), \ithExpMap[{\grpGNum[2]}](\vf[2])\big)\quad\text{for }\vf[i] \in \lieAlgNum, i \in \{1,2\}. 
	\end{align*}
\end{lemma}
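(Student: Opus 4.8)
The plan is to obtain the claim as a direct consequence of the homomorphism relation in Lemma \ref{lem: HomomorphismLemma}, applied separately to the two canonical projections, together with the elementary fact that an element of a product group is uniquely determined by its two projections. First I would record that, for the direct product group structure \((g_1,g_2)(h_1,h_2)=(g_1h_1,g_2h_2)\), each projection \(\grpG \ni (g_1,g_2) \mapsto \projmap[i](g_1,g_2) = g_i \in \grpGNum[i]\) is a smooth group homomorphism, hence a Lie group homomorphism in the sense of Definition \ref{ProjMap}: indeed \(\projmap[i]\big((g_1,g_2)(h_1,h_2)\big) = g_ih_i = \projmap[i](g_1,g_2)\,\projmap[i](h_1,h_2)\), and smoothness is immediate from the product manifold structure. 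This is exactly the observation already made in the text for the maps \(\kappaMap\) on the joint group \(\gTot\), specialised to two factors.

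The one step deserving care is the identification of the tangent map at the identity. Under the natural identification of the Lie algebra \(\lieAlg\) of \(\grpG\) with \(\lieAlgNum[1] \DirSum \lieAlgNum[2]\) — which arises by differentiating the canonical isomorphism \(\tanLift{\ithId[\grpG]}{\grpG} \cong \tanLift{}{\grpGNum[1]} \DirSum \tanLift{}{\grpGNum[2]}\) coming from the product manifold structure — I would verify that the tangent map \(\tanLift{\ithId[\grpG]}{\projmap[i]}\) is precisely the canonical coordinate projection \((\vf[1],\vf[2]) \mapsto \vf[i]\). This follows because the derivative of a coordinate projection of a product manifold is again the corresponding coordinate projection on tangent spaces; the content here is purely bookkeeping, ensuring that the chosen identification of \(\lieAlg\) with the direct sum is the same one used to express \(\tanLift{\ithId[\grpG]}{\projmap[i]}\).

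With these two facts in hand I would invoke Lemma \ref{lem: HomomorphismLemma} with \(\lgHomo = \projmap[i]\) to obtain, for each \(i \in \{1,2\}\),
\[
\projmap[i]\big(\ithExpMap[\grpG](\vf[1],\vf[2])\big)
= \ithExpMap[{\grpGNum[i]}]\big(\tanLift{\ithId[\grpG]}{\projmap[i]}(\vf[1],\vf[2])\big)
= \ithExpMap[{\grpGNum[i]}](\vf[i]).
\]
Since the pair of projections \((\projmap[1],\projmap[2])\) separates points of \(\grpG\) — an element \((a_1,a_2)\) being uniquely determined by \(a_1 = \projmap[1](a_1,a_2)\) and \(a_2 = \projmap[2](a_1,a_2)\) — the two identities for \(i=1\) and \(i=2\) together pin down \(\ithExpMap[\grpG](\vf[1],\vf[2]) = \big(\ithExpMap[{\grpGNum[1]}](\vf[1]), \ithExpMap[{\grpGNum[2]}](\vf[2])\big)\), which is the desired equality.

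The computation is otherwise routine, and I expect the only genuine obstacle to be making the identification \(\lieAlg \cong \lieAlgNum[1] \DirSum \lieAlgNum[2]\) explicit enough to be certain that \(\tanLift{\ithId[\grpG]}{\projmap[i]}\) really is the algebra projection, rather than merely \emph{some} surjection onto \(\lieAlgNum[i]\); once that is fixed, Lemma \ref{lem: HomomorphismLemma} does all the work. As an alternative route that bypasses Lemma \ref{lem: HomomorphismLemma}, one may observe that the canonical left-invariant vector field on \(\grpG\) associated with \((\vf[1],\vf[2])\) splits as the product of the canonical left-invariant vector fields on \(\grpGNum[1]\) and \(\grpGNum[2]\), so the defining ODE decouples; by uniqueness of the integral curve through \(\ithId[\grpG]\) its flow is the product of the two flows, and evaluation at time \(1\) yields the claim. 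I would nonetheless keep the homomorphism-based argument as the primary one, since Lemma \ref{lem: HomomorphismLemma} is already available.
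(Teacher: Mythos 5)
Your proof is correct. Note, however, that the paper itself gives no proof of Lemma \ref{lem: expdist}: it is imported verbatim from \cite[Example 5.3.16]{ref:RudSch-13}, so there is no in-paper argument to compare against. Your primary route --- applying Lemma \ref{lem: HomomorphismLemma} to each canonical projection, verifying that under the natural identification of \(\lieAlg\) with \(\lieAlgNum[1] \DirSum \lieAlgNum[2]\) the tangent map of the projection at the identity is the corresponding coordinate projection, and then concluding because the two projections jointly separate points of \(\grpG\) --- is sound, and it has the virtue of reusing a lemma the paper already states, which keeps the argument internal to the paper's toolkit. The bookkeeping step you flag is indeed the only delicate point, and you resolve it correctly: the natural identification is \emph{defined} by the pair of tangent maps of the projections, so under it each tangent map is tautologically the coordinate projection \((\vf[1],\vf[2]) \mapsto \vf[i]\), not merely some surjection. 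Your alternative argument is also valid and is essentially the standard textbook proof of the cited example: the curve \(t \mapsto \big(\ithExpMap[{\grpGNum[1]}](t\vf[1]), \ithExpMap[{\grpGNum[2]}](t\vf[2])\big)\) is the integral curve through the identity of the left-invariant vector field determined by \((\vf[1],\vf[2])\), since the product group operation acts factorwise, and uniqueness of integral curves plus evaluation at \(t=1\) gives the claim; this route is marginally more self-contained because it does not rely on Lemma \ref{lem: HomomorphismLemma}, which the paper also states without proof. Either way, the result holds and your write-up has no gaps.
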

On the basis of previous lemma, we can establish that the inverse of the exponential map also splits into factors:

\begin{lemma}\label{lem: invexpdist}
	Let \(\grpGNum[1]\) and \(\grpGNum[2]\) be two Lie groups with Lie algebras \(\lieAlgNum[1]\) and \(\lieAlgNum[2]\), respectively. Consider the Lie group \(\grpG \defas \grpGNum[1] \times \grpGNum[2]\) equipped with the direct product group structure and let \(\grpG \ni \grp \mapsto \projmap(\grp) \in \grpGNum\) denote the projection for \(i \in \{1,2\}\). Under the natural identification of the Lie algebra \(\lieAlg\) of \(\grpG\) with \(\lieAlgNum[1] \DirSum \lieAlgNum[2]\), we have
	\begin{align*}
	\ithExpMap[\grpG]^{-1}(\grp_1, \grp_2) = \big(\ithExpMap[{\grpGNum[1]}]^{-1}(\grp_1), \ithExpMap[{\grpGNum[2]}]^{-1}(\grp_2)\big)\quad\text{for }\grp_i \in \grpGNum[i], i \in \{1,2\}.
	\end{align*}
	
\end{lemma}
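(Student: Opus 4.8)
The plan is to derive the factorization of \(\ithExpMap[\grpG]^{-1}\) directly from the factorization of \(\ithExpMap[\grpG]\) furnished by Lemma \ref{lem: expdist}, using only that the relevant exponential maps are invertible where \(\ithExpMap[{\grpGNum[1]}]^{-1}\) and \(\ithExpMap[{\grpGNum[2]}]^{-1}\) are meaningful. Concretely, by Assumption \ref{ass: Existence}\ref{asm:2} there exist open sets \(\OpenSet[1] \subsetTo \lieAlgNum[1]\) and \(\OpenSet[2] \subsetTo \lieAlgNum[2]\) on which \(\ithExpMap[{\grpGNum[1]}]\) and \(\ithExpMap[{\grpGNum[2]}]\) restrict to diffeomorphisms onto their images. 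First I would record that, under the identification \(\lieAlg \cong \lieAlgNum[1] \DirSum \lieAlgNum[2]\), the map \(\ithExpMap[\grpG]\) coincides with the product map \(\ithExpMap[{\grpGNum[1]}] \times \ithExpMap[{\grpGNum[2]}]\) by Lemma \ref{lem: expdist}; since a finite product of diffeomorphisms is a diffeomorphism, \(\ithExpMap[\grpG]\) is invertible on \(\ithExpMap[{\grpGNum[1]}](\OpenSet[1]) \times \ithExpMap[{\grpGNum[2]}](\OpenSet[2])\), so both sides of the claimed identity are well-defined there.

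Next I would carry out the short verification. Fix \(\grp_i \belongsTo \ithExpMap[{\grpGNum[i]}](\OpenSet[i])\) for \(i \in \{1,2\}\) and set \(\vf[i] \defas \ithExpMap[{\grpGNum[i]}]^{-1}(\grp_i) \belongsTo \lieAlgNum[i]\), so that \(\ithExpMap[{\grpGNum[i]}](\vf[i]) = \grp_i\) by construction. Applying Lemma \ref{lem: expdist} to the pair \((\vf[1], \vf[2])\) gives
\begin{align*}
	\ithExpMap[\grpG](\vf[1], \vf[2]) = \big(\ithExpMap[{\grpGNum[1]}](\vf[1]), \ithExpMap[{\grpGNum[2]}](\vf[2])\big) = (\grp_1, \grp_2).
\end{align*}
Composing both sides on the left with \(\ithExpMap[\grpG]^{-1}\), which exists on this region by the previous paragraph, yields \((\vf[1], \vf[2]) = \ithExpMap[\grpG]^{-1}(\grp_1, \grp_2)\); substituting back the definitions of \(\vf[1]\) and \(\vf[2]\) produces exactly the asserted identity
\begin{align*}
	\ithExpMap[\grpG]^{-1}(\grp_1, \grp_2) = \big(\ithExpMap[{\grpGNum[1]}]^{-1}(\grp_1), \ithExpMap[{\grpGNum[2]}]^{-1}(\grp_2)\big).
\end{align*}

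I do not expect a substantial obstacle here: the entire content is carried by Lemma \ref{lem: expdist}, and the proof amounts to inverting a product of diffeomorphisms coordinatewise. The only point requiring mild care is bookkeeping the domains, namely ensuring that the identification \(\lieAlg \cong \lieAlgNum[1] \DirSum \lieAlgNum[2]\) is a linear isomorphism that intertwines \(\ithExpMap[\grpG]\) with \(\ithExpMap[{\grpGNum[1]}] \times \ithExpMap[{\grpGNum[2]}]\), so that the locally defined inverse \(\ithExpMap[\grpG]^{-1}\) is a genuine two-sided inverse on the product of the images rather than merely a formal expression. Once this identification is fixed, the factorization is immediate.
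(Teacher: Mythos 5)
Your proposal is correct and follows essentially the same route as the paper's proof: both arguments reduce the claim to Lemma \ref{lem: expdist} together with the unique decomposition under the identification \(\lieAlg \cong \lieAlgNum[1] \DirSum \lieAlgNum[2]\), and invoke Assumption \ref{asm:2} for well-definedness of the inverses. The only cosmetic difference is the direction of verification (you invert the component maps first and then pass through \(\ithExpMap[\grpG]\), while the paper names the components of \(\ithExpMap[\grpG]^{-1}(\grp_1,\grp_2)\) and exponentiates), plus your slightly more explicit bookkeeping of domains.
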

\begin{proof}
	As a consequence of the direct product group structure, for any \(\vect[] \in \lieAlg \) we can find some \(\vect[1] \in \lieAlgNum[1] \) and \(\vect[2] \in \lieAlgNum[2] \) such that \(v = (\vect[1], \vect[2]) \); moreover, this representation is unique \cite[Chapter 5]{ref:RudSch-13}. Therefore, we can split the inverse exponential map as follows
	\begin{align*}
	\ithExpMap[\grpG]^{-1}(\grp_1, \grp_2) = (\vect[1], \vect[2]),
	\end{align*}
	assuming \((\grp_1,\grp_2) \) lies in a set where the inverse of the exponential map is well defined (refer \ref{asm:2}). The preceding equation implies that
	\begin{align*}
	(\grp_1, \grp_2) =  \ithExpMap[\grpG](\vect[1], \vect[2]) = \big(\ithExpMap[{\grpGNum[1]}](\vect[1]), \ithExpMap[{\grpGNum[2]}](\vect[2])\big).
	\end{align*}
	It follows that \(\vect[1] = \ithExpMap[{\grpGNum[1]}]^{-1}(\grp_1) \) and \(\vect[2] = \ithExpMap[{\grpGNum[2]}]^{-1}(\grp_2) \). 
\end{proof}

Similarly, the tangent maps also exhibit analogous splitting as is proved in the following lemma.
\begin{lemma}\label{lem: TanMapDist}
	Let \(\grpGNum[1], \grpGNum[2]\) be two Lie groups with Lie algebras \(\lieAlgNum[1], \lieAlgNum[2]\) respectively. Consider the Lie group \(\grpG \defas \grpGNum[1] \times \grpGNum[2]\) equipped with the direct product group structure and let \(\grpG \ni \grp \mapsto \projmap(\grp) \in \grpGNum\) denote the natural projection operation for \(i \in \{1,2\}\). The identity element of the Lie group \(\grpG\) is denoted by \(\IdEle[] \) and that of the Lie group \(\grpGNum[i]\) by \(\IdEle[i]\) \((\) for every \(i \in \{1,2\} )\). Also, the left action on the Lie group \(\grpG\) is denoted by \(\lftAcn[]\) and on the Lie group \(\grpGNum[i]\) by \(\lftAcn[(i)]\) \((\)where \(i \in \{1,2\} )\). Under the natural identification of the Lie algebra of \(\grpG\) $($denoted by \(\lieAlg\)$)$ with \(\lieAlgNum[1] \DirSum \lieAlgNum[2]\), we have
	\begin{align*}
	\tanLift{{\IdEle[]}}{{\lftAcn[]_{(\grp_1,\grp_2)}}}\big( \vect[1], \vect[2] \big) = \Big(\tanLift{{\IdEle[1]}}{{\lftAcn[(1)]_{(\grp_1)}}}\big( \vect[1]\big),\tanLift{{\IdEle[2]}}{{\lftAcn[(2)]_{(\grp_2)}}}\big( \vect[2]\big) \Big),
	\end{align*}
	for \(\grp_i \in \grpGNum[i]\), \(\vect[i] \in \lieAlgNum[i] \) for \(i\in \{1,2\} \). 
\end{lemma}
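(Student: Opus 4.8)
The plan is to reduce everything to two elementary facts: that left translation on a direct product factors as a product of the left translations on the two factors, and that the tangent functor carries such a product map, at a product point, to the componentwise direct sum of the tangent maps of the factors. First I would record the factorization of the action. By the direct product group structure, for any $(\gp_1,\gp_2)\in\grpG$ we have
\begin{align*}
	\lftAcn[]_{(\grp_1,\grp_2)}(\gp_1,\gp_2) = (\grp_1\gp_1,\grp_2\gp_2) = \big(\lftAcn[(1)]_{\grp_1}(\gp_1),\lftAcn[(2)]_{\grp_2}(\gp_2)\big),
\end{align*}
so that $\lftAcn[]_{(\grp_1,\grp_2)} = \lftAcn[(1)]_{\grp_1}\times\lftAcn[(2)]_{\grp_2}$ as a map $\grpGNum[1]\times\grpGNum[2]\ra\grpGNum[1]\times\grpGNum[2]$.

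Next I would differentiate this factored map through curves, working under the natural identification $\tanLift{\IdEle[]}{\grpG}\cong\tanLift{\IdEle[1]}{\grpGNum[1]}\DirSum\tanLift{\IdEle[2]}{\grpGNum[2]}$ (equivalently $\lieAlg\cong\lieAlgNum[1]\DirSum\lieAlgNum[2]$) invoked in the statement. Given $(\vect[1],\vect[2])\in\lieAlgNum[1]\DirSum\lieAlgNum[2]$, choose smooth curves $\curveMfd_i:\R\ra\grpGNum[i]$ with $\curveMfd_i(0)=\IdEle[i]$ and $\left.\frac{d}{dt}\right|_{t=0}\curveMfd_i(t)=\vect[i]$ for $i\in\{1,2\}$; then $t\mapsto(\curveMfd_1(t),\curveMfd_2(t))$ is a curve in $\grpG$ through $\IdEle[]$ whose velocity at $t=0$ is $(\vect[1],\vect[2])$ under the identification. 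Applying the definition of the directional derivative and using that differentiation of a curve in a product manifold is performed componentwise, I obtain
\begin{align*}
	\tanLift{\IdEle[]}{\lftAcn[]_{(\grp_1,\grp_2)}}(\vect[1],\vect[2]) &= \left.\frac{d}{dt}\right|_{t=0}\big(\grp_1\curveMfd_1(t),\grp_2\curveMfd_2(t)\big) \\
	&= \Big(\left.\frac{d}{dt}\right|_{t=0}\grp_1\curveMfd_1(t),\ \left.\frac{d}{dt}\right|_{t=0}\grp_2\curveMfd_2(t)\Big),
\end{align*}
and each component is, by definition, $\tanLift{\IdEle[i]}{\lftAcn[(i)]_{\grp_i}}(\vect[i])$, which is exactly the claimed identity.

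I do not expect a genuine obstacle here: the only point requiring care is the consistency of the isomorphism $\tanLift{\IdEle[]}{\grpG}\cong\tanLift{\IdEle[1]}{\grpGNum[1]}\DirSum\tanLift{\IdEle[2]}{\grpGNum[2]}$ with the componentwise differentiation of product curves. This is precisely the canonical identification of the tangent space of a product with the direct sum of the factor tangent spaces, and it is the same identification already used throughout the paper (and in Lemma \ref{lem: expdist} and Lemma \ref{lem: invexpdist}), so the argument remains purely formal. If one prefers to avoid curves entirely, the same conclusion follows from the product rule for tangent maps, $\tanLift{(\IdEle[1],\IdEle[2])}{(f_1\times f_2)} = \tanLift{\IdEle[1]}{f_1}\DirSum\tanLift{\IdEle[2]}{f_2}$, applied to $f_i=\lftAcn[(i)]_{\grp_i}$.
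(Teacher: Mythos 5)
Your proof is correct and follows essentially the same route as the paper: both arguments differentiate the left action along a curve through the identity and use the fact that the action and the differentiation split componentwise under the direct product structure (you build the curve in \(\grpG\) from factor curves, the paper decomposes a curve in \(\grpG\) into its factors — an immaterial difference in direction). Your closing remark about the product rule for tangent maps is a valid, equally standard shortcut but does not change the substance.
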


\begin{proof}
	For some \(\epsilon > 0 \), consider a smooth curve \(]-\epsilon,\epsilon[ \ni t \mapsto \curv(t) \in \grpG \) such that \(\curv(0) = \IdEle[]\) and \(\left.\frac{d}{dt}\right|_{t=0}\curv(t) = \vect[] \in \tanLift{{\IdEle[]}}{\grpG} \). As a consequence of the direct product structure on the Lie group \(\grpG \), we have a unique decomposition of \(\curv(t)\) into \( \big(\curv_1(t), \curv_2(t) \big) \) where \(\curv_i(\cdot) \in \grpGNum[i]\), \(\curv_i(0) = \IdEle[i] \) and \(\left. \frac{d}{dt} \right|_{t=0}\curv_i(t) = \vect[i] \in \tanLift{{\IdEle[]}}{\grpGNum[i]} \) for \(i \in \{1,2\}\) such that \(\vect[] = (\vect[1], \vect[2]) \). We apply the left action \(\lftAcn[]_{\grp}(\cdot) \) to the trajectory \(\curv(\cdot) \) and compute the derivative at \(t = 0 \) we get  
	\begin{align*}
	\tanLift{{\IdEle[]}}{{\lftAcn[]_{(\grp_1,\grp_2)}}}\big( \vect[1], \vect[2] \big) &= \frac{d}{dt}\Big|_{t=0} \lftAcn[]_{(\grp_1,\grp_2)} \big(\curv(t)\big) \\
	&= \frac{d}{dt}\Big|_{t=0} \Big(\lftAcn[(1)]_{\grp_1}\big(\curv_1(t)\big),\lftAcn[(2)]_{\grp_2}\big(\curv_2(t)\big) \Big) \\
	&= \Big(\tanLift{{\IdEle[1]}}{{\lftAcn[(1)]_{(\grp_1)}}}\big( \vect[1]\big),\tanLift{{\IdEle[2]}}{{\lftAcn[(2)]_{(\grp_2)}}}\big( \vect[2]\big) \Big).
	\end{align*}
\end{proof}

\begin{lemma}{\cite[Example 5.4.8]{ref:RudSch-13}}\label{lem: adjDist}
	Let \(\grpGNum[1], \grpGNum[2]\) be two Lie groups with Lie algebras \(\lieAlgNum[1], \lieAlgNum[2]\) respectively and let \(\grpG \defas \grpGNum[1] \times \grpGNum[2]\) has \(\lieAlg\) as its Lie algebra. Under the natural identification of \(\lieAlg\) with \(\lieAlgNum[1] \DirSum \lieAlgNum[2]\), we have 
	\begin{align*}
	\ad{(\grp_1,\grp_2)}{(\vf[1], \vf[2])} = (\ad{\grp_1}{\vf[1]}, \ad{\grp_2}{\vf[2]}), 
	\end{align*} 
	where \(\grp_i \in \grpGNum[i]\) and \(\vf \in \lieAlgNum[i]\) for \(i \in \{1,2\} \).
\end{lemma}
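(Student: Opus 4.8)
The plan is to compute directly from the definition of the adjoint action (Definition \ref{def: AdjointDef}), reducing everything to the componentwise splitting of the exponential map already recorded in Lemma \ref{lem: expdist}. Fixing \((\grp_1, \grp_2) \in \grpG\) and identifying \(\lieAlg\) with \(\lieAlgNum[1] \DirSum \lieAlgNum[2]\), an arbitrary Lie algebra element is \((\vf[1], \vf[2])\) with \(\vf \in \lieAlgNum[i]\), and by Definition \ref{def: AdjointDef} I would start from
\[
	\ad{(\grp_1,\grp_2)}{(\vf[1], \vf[2])} = \left.\frac{d}{ds}\right|_{s=0} (\grp_1, \grp_2)\, \ithExpMap[\grpG]\big(s\vf[1], s\vf[2]\big)\, (\grp_1, \grp_2)^{-1}.
\]

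First I would rewrite the conjugated curve in product form. Applying Lemma \ref{lem: expdist} gives \(\ithExpMap[\grpG](s\vf[1], s\vf[2]) = \big(\ithExpMap[{\grpGNum[1]}](s\vf[1]), \ithExpMap[{\grpGNum[2]}](s\vf[2])\big)\); then the direct product group operation \((a_1,a_2)(b_1,b_2) = (a_1 b_1, a_2 b_2)\) together with \((\grp_1,\grp_2)^{-1} = (\grp_1^{-1}, \grp_2^{-1})\) makes the conjugation collapse factorwise, so that
\[
	(\grp_1, \grp_2)\, \ithExpMap[\grpG](s\vf[1], s\vf[2])\, (\grp_1, \grp_2)^{-1} = \Big(\grp_1\, \ithExpMap[{\grpGNum[1]}](s\vf[1])\, \grp_1^{-1},\ \grp_2\, \ithExpMap[{\grpGNum[2]}](s\vf[2])\, \grp_2^{-1}\Big).
\]

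Next I would differentiate at \(s = 0\). Because the tangent space of the product manifold \(\grpG = \grpGNum[1] \times \grpGNum[2]\) is canonically \(\tanLift{\grp_1}{\grpGNum[1]} \DirSum \tanLift{\grp_2}{\grpGNum[2]}\), the derivative of a product-valued curve is taken in each factor separately; hence
\[
	\left.\frac{d}{ds}\right|_{s=0} \Big(\grp_1\, \ithExpMap[{\grpGNum[1]}](s\vf[1])\, \grp_1^{-1},\ \grp_2\, \ithExpMap[{\grpGNum[2]}](s\vf[2])\, \grp_2^{-1}\Big) = \big(\ad{\grp_1}{\vf[1]},\ \ad{\grp_2}{\vf[2]}\big),
\]
where each component is recognized, again by Definition \ref{def: AdjointDef}, as the adjoint action on the respective factor. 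This is precisely the claimed identity.

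Since both the exponential-map splitting and the product group structure are already in hand, the argument is essentially bookkeeping. The only point that genuinely needs care is the justification that the derivative of the product curve is computed componentwise, which rests on the canonical identification of \(\tanLift{(\grp_1,\grp_2)}{\grpG}\) with \(\tanLift{\grp_1}{\grpGNum[1]} \DirSum \tanLift{\grp_2}{\grpGNum[2]}\); beyond invoking this identification I anticipate no substantive obstacle.
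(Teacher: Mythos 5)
Your proof is correct. Note that the paper itself gives no proof of this lemma --- it is quoted verbatim from Example 5.4.8 of the cited reference --- so there is no in-paper argument to compare against; your computation fills that gap. Each step checks out: the splitting of the exponential via Lemma \ref{lem: expdist}, the factorwise collapse of the conjugation coming from the direct-product group operation and \((\grp_1,\grp_2)^{-1} = (\grp_1^{-1},\grp_2^{-1})\), and the componentwise differentiation justified by the canonical identification of the tangent space of a product with the direct sum of the factor tangent spaces. Your argument is also stylistically consistent with how the paper proves the companion results: the proof of Lemma \ref{lem: TanMapDist} uses exactly the same device of differentiating a product-valued curve componentwise, and Lemma \ref{Prop: coadjoint} is then derived by dualizing the identity you have just established.
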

\begin{lemma}\label{Prop: coadjoint}
	Let \(\grpGNum[1], \grpGNum[2]\) be two Lie groups with Lie algebras \(\lieAlgNum[1], \lieAlgNum[2]\) respectively and let \(\grpG = \grpGNum[1] \times \grpGNum[2]\) has \(\lieAlg\) as its Lie algebra. Under the natural identification of \(\lieAlg\) with \(\lieAlgNum[1] \DirSum \lieAlgNum[2]\), we have  
	\begin{align*}
	\coAd{(\grp_1, \grp_2)}{(\dAlg[1], \dAlg[2])}  =  \big(\coAd{\grp_1}(\dAlg[1]), \coAd{\grp_2}(\dAlg[2])\big),
	\end{align*}
	where \((\dAlg[1], \dAlg[2]) \in \dlieAlgNum[1] \DirSum \dlieAlgNum[2]\), \(\grp_i \in \grpGNum[i]\) for \(i \in \{1,2\} \).
\end{lemma}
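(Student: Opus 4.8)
The plan is to obtain the splitting of the coadjoint action by dualizing the splitting of the adjoint action already established in Lemma~\ref{lem: adjDist}, using the defining relation of the coadjoint action from Definition~\ref{def: AdjointDef}. Recall that \(\coAd{h}\) is, by that definition, precisely the dual (transpose) of \(\ad{h}\) with respect to the canonical pairing between a Lie algebra and its dual; hence it suffices to test the claimed identity against an arbitrary element \((\vf[1], \vf[2]) \in \lieAlg\) and then invoke the nondegeneracy of that pairing to remove the test vector.

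The one intermediate fact I would record first is that, under the identification \(\lieAlg \cong \lieAlgNum[1] \DirSum \lieAlgNum[2]\) and the dual identification \(\dLieAlg \cong \dlieAlgNum[1] \DirSum \dlieAlgNum[2]\) (the latter guaranteed by \cite[\S 20]{halmos2017finite}), the canonical pairing decomposes as the sum of the factorwise pairings,
\[
\big\langle (\dAlg[1], \dAlg[2]), (\vf[1], \vf[2]) \big\rangle = \big\langle \dAlg[1], \vf[1] \big\rangle_{\lieAlgNum[1]} + \big\langle \dAlg[2], \vf[2] \big\rangle_{\lieAlgNum[2]}.
\]
With this in hand, the main computation chains together the defining relation of \(\coAd{}\) on \(\grpG\), the adjoint splitting of Lemma~\ref{lem: adjDist}, the pairing decomposition above, and finally the defining relation of \(\coAd{}\) applied on each factor:
\begin{align*}
\big\langle \coAd{(\grp_1, \grp_2)}{(\dAlg[1], \dAlg[2])}, (\vf[1], \vf[2]) \big\rangle
&= \big\langle (\dAlg[1], \dAlg[2]), \ad{(\grp_1, \grp_2)}{(\vf[1], \vf[2])} \big\rangle \\
&= \big\langle (\dAlg[1], \dAlg[2]), (\ad{\grp_1}{\vf[1]}, \ad{\grp_2}{\vf[2]}) \big\rangle \\
&= \big\langle \dAlg[1], \ad{\grp_1}{\vf[1]} \big\rangle_{\lieAlgNum[1]} + \big\langle \dAlg[2], \ad{\grp_2}{\vf[2]} \big\rangle_{\lieAlgNum[2]} \\
&= \big\langle \coAd{\grp_1}{\dAlg[1]}, \vf[1] \big\rangle_{\lieAlgNum[1]} + \big\langle \coAd{\grp_2}{\dAlg[2]}, \vf[2] \big\rangle_{\lieAlgNum[2]} \\
&= \big\langle (\coAd{\grp_1}{\dAlg[1]}, \coAd{\grp_2}{\dAlg[2]}), (\vf[1], \vf[2]) \big\rangle.
\end{align*}

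Since the resulting identity holds for every \((\vf[1], \vf[2]) \in \lieAlg\) and the pairing between \(\lieAlg\) and \(\dLieAlg\) is nondegenerate, I would conclude that \(\coAd{(\grp_1, \grp_2)}{(\dAlg[1], \dAlg[2])} = (\coAd{\grp_1}{\dAlg[1]}, \coAd{\grp_2}{\dAlg[2]})\), as claimed. There is no genuine obstacle in this argument; the only points that require care are bookkeeping ones. First, one must honour the inverse-convention of Definition~\ref{def: AdjointDef} consistently, so that \(\coAd{h}\) and \(\ad{h}\) are transpose to one another for the \emph{same} group argument \(h\) (here \(h = (\grp_1,\grp_2)\) on the left and \(h = \grp_i\) on each factor). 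Second, one must verify that the dual pairing genuinely splits as displayed above, not merely that the underlying vector spaces are isomorphic; this is exactly what the cited isomorphism of duals, together with the direct product group structure, provides.
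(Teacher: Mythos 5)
Your proof is correct and follows essentially the same route as the paper: both arguments pair \(\coAd{(\grp_1,\grp_2)}(\dAlg[1],\dAlg[2])\) against an arbitrary \((\vf[1],\vf[2])\), invoke the adjoint splitting of Lemma~\ref{lem: adjDist} together with the defining duality of Definition~\ref{def: AdjointDef}, and read off the factorwise coadjoint actions. The only difference is that you spell out the factorwise decomposition of the pairing and the final nondegeneracy step, which the paper leaves implicit.
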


\begin{proof}
	Using Lemma \ref{lem: adjDist} and by definition of co-adjoint action (in Definition \ref{def: AdjointDef}) we have
	\begin{align*}
	\Big \langle \coAd{(\grp_1, \grp_2)} (\dAlg[1], \dAlg[2]) , (\vf[1], \vf[2])\Big\rangle &= \Big \langle (\dAlg[1], \dAlg[2]), \ad{(\grp_1, \grp_2)}(\vf[1], \vf[2]) \Big\rangle \\
	&= \Big \langle (\dAlg[1], \dAlg[2]), (\ad{\grp_1}(\vf[1]), \ad{\grp_2}(\vf[2])) \Big \rangle \\
	&= \Big \langle \big(\ad{\grp_1}^{*}(\dAlg[1]), \ad{\grp_2}^{*}(\dAlg[2])\big), (\vf[1], \vf[2]) \Big \rangle,
	\end{align*}
	for all \((\vf[1], \vf[2]) \in \lieAlgNum[1] \oplus \lieAlgNum[2]\).
\end{proof}

\section{Properties of the map \(z\)}\label{app: app_z}
In this part, we present some of the useful observations about the mapping \(z\) defined in \eqref{eq: constraintEq}. 
\begin{lemma}\label{rem: IFFRemark}
	For the mapping \(z \) defined in \eqref{eq: constraintEq}, we have \(\auxDyn = (0,0)^\top \)  if and only if \(U \in \uhash\), where \(\uhash\) is defined in \eqref{eq: AdmissibleActionSet}. 
\end{lemma}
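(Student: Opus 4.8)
The plan is to unwind the two defining equations \eqref{eq: constraintEq} and \eqref{eq: AdmissibleActionSet} and reduce the vector identity $\auxDyn = (0,0)^\top$ to a single scalar condition on the squared norms $\bigl\|\uSig[i]\bigr\|^2$. Throughout, write $U = (\uSig[1], \dots, \uSig[\numPlants]) \belongsTo \ustar$ and denote by $z_1(U)$ and $z_2(U)$ the two scalar components of $\auxDyn$. Each summand of $\auxDyn$ in \eqref{eq: constraintEq} is built from a nonnegative scalar $\bigl\|\uSig[i]\bigr\|^2\bigl\|\uSig[j]\bigr\|^2$ carried along $(1,1)^\top$ together with the scalar $\uSig[i] \opr \uSig[j]$ carried along $(1,-1)^\top$. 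Adding the two components therefore cancels every $\opr$-contribution and leaves the key identity
\[
	z_1(U) + z_2(U) = 2 \sum_{i=1}^{\numPlants - 1} \sum_{j=i+1}^{\numPlants} \bigl\|\uSig[i]\bigr\|^2 \bigl\|\uSig[j]\bigr\|^2 .
\]

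For the forward implication, I would suppose $\auxDyn = (0,0)^\top$. Then $z_1(U) = z_2(U) = 0$, so the right-hand side above is zero. Being a sum of nonnegative reals, it can vanish only if each term vanishes; hence $\bigl\|\uSig[i]\bigr\|^2\bigl\|\uSig[j]\bigr\|^2 = 0$, i.e.\ $\uSig[i] = 0$ or $\uSig[j] = 0$, for every pair $i < j$. Equivalently, no two distinct blocks of $U$ are simultaneously nonzero, so at most one $\uSig[k]$ differs from $0$. Since $0 \belongsTo \ithAdmControl[i]$ for each $i$ and $U \belongsTo \ustar$, this is precisely the membership $U \belongsTo \uhash$ prescribed by \eqref{eq: AdmissibleActionSet}.

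For the reverse implication, I would suppose $U \belongsTo \uhash$. By \eqref{eq: AdmissibleActionSet} there is at most one index $k$ with $\uSig[k] \ne 0$, while $\uSig[i] = 0$ for all $i \ne k$. Hence in every pair $i < j$ at least one factor is the zero vector, which forces $\bigl\|\uSig[i]\bigr\|^2\bigl\|\uSig[j]\bigr\|^2 = 0$; moreover $\uSig[i] \opr \uSig[j] = 0$ as well, because $\opr$ is bilinear --- it is an inner product precomposed with the linear map $\linTF$ and a linear stacking --- and so vanishes whenever either argument is zero. Every summand in \eqref{eq: constraintEq} thus vanishes and $\auxDyn = (0,0)^\top$.

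The only substantive point is the nonnegativity argument in the forward direction: it is essential that the $(1,1)^\top$ component of $z$ accumulates the manifestly nonnegative products $\bigl\|\uSig[i]\bigr\|^2\bigl\|\uSig[j]\bigr\|^2$, so that the vanishing of a single sum forces each product to vanish termwise. The $\opr$-term alone could not play this role, since the component-sums it computes may be negative and could cancel across distinct pairs; the squared-norm term is exactly what rules this out. Everything else is routine bookkeeping with the definitions.
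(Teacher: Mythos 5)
Your proof is correct and takes essentially the same route as the paper's: adding the two components of $z$ to cancel the $\opr$-terms and force each nonnegative product $\bigl\|\uSig[i]\bigr\|^2\bigl\|\uSig[j]\bigr\|^2$ to vanish termwise is precisely the mechanism the paper invokes (and spells out explicitly in its Lemma~\ref{lem: zProp}), while the converse is the same routine check that every summand vanishes when at most one block of $U$ is nonzero. If anything, your write-up makes explicit the component-addition step and the bilinearity of $\opr$, both of which the paper leaves implicit in this particular lemma.
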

\begin{proof}
	For any \(U \in \uhash \), by definition of \(\uhash \) we have \(U = \big(\uSig[i]\big)_{i=1}^{\numPlants}\) such that atmost one of \(\uSig[i] \), for \(\iRange \), will be non-zero. This implies that \(\auxDyn = (0,0)^\top\). 
	Now we prove that if \(\auxDyn[V] = (0,0)^\top\), for some \(V \in \ustar \), then \(V\) belongs to \(\uhash \subset \ustar\). If  \(\auxDyn[V] = (0,0)^\top\), then 
	\begin{align}
		\begin{pmatrix}
		0 \\ 0
		\end{pmatrix} = \auxDyn[V] = \sum_{i = 1}^{\numPlants-1} \sum_{j = i+1}^{\numPlants} \bigg( \bigl\|\vSig[i]\bigr\|^2 \bigl\|\vSig[j]\bigr\|^2 \begin{pmatrix} 1 \\ 1 \end{pmatrix} + \vSig[i] \opr \vSig[j] \begin{pmatrix} 1 \\ -1 \end{pmatrix}   \bigg) 
	\end{align}
	where \(V = \big(\vSig\big)_{i=1}^{\numPlants} \) such that \(\vSig[i] \in \ithAdmControl \) for \(\iRange \). As a consequence of preceding equation, we get \(\|\vSig[i]\|^2\|\vSig[j]\|^2 = 0 \) for all \(i, j \in [\numPlants]\dual \) and \(i \neq j \), which indicates that \(V \) lies on one of the branches of ``star''-shaped admissible control action set \(\uhash \). 
\end{proof}
\begin{lemma}\label{lem: zProp}
	For some positive integer \(R\), consider a set of vectors \(\sU[1], \sU[2],\dots, \sU[R] \in \ustar \) such that \(\sum_{k=1}^{R}\auxDyn[{\sU[k]}] = (0, 0)^\top \). Then \(\auxDyn[{\sU[k]}] = (0,0)^\top \) for all \(k \in [R]\dual\).
\end{lemma}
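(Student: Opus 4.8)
The plan is to exploit a single scalar projection of the vector-valued map \(z\) that collapses the indefinite \(\opr\)-terms and leaves a manifestly nonnegative quantity. Concretely, for any \(U = \big(\uSig[1], \dots, \uSig[\numPlants]\big) \in \ustar\), I would first pair \(\auxDyn[U]\) with the constant vector \(\begin{pmatrix} 1 \\ 1\end{pmatrix}\). Since \(\big\langle (1,1)^\top, (1,-1)^\top\big\rangle = 0\) while \(\big\langle (1,1)^\top, (1,1)^\top\big\rangle = 2\), every contribution carrying the factor \(\begin{pmatrix}1\\-1\end{pmatrix}\) in the definition \eqref{eq: constraintEq} is annihilated, and one is left with
\[
\Big\langle \begin{pmatrix} 1 \\ 1\end{pmatrix}, \auxDyn[U] \Big\rangle = 2\sum_{i=1}^{\numPlants-1}\sum_{j=i+1}^{\numPlants} \bigl\|\uSig[i]\bigr\|^2\bigl\|\uSig[j]\bigr\|^2 \geq 0 .
\]
This identity is the crux of the argument: it converts the (a priori indefinite) requirement \(\auxDyn[U] = (0,0)^\top\) into a coercive, nonnegative expression.

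Next I would apply this scalar functional to the hypothesis. By linearity of the inner product and the assumption \(\sum_{k=1}^{R}\auxDyn[{\sU[k]}] = (0,0)^\top\),
\[
\sum_{k=1}^{R} \Big\langle \begin{pmatrix} 1 \\ 1\end{pmatrix}, \auxDyn[{\sU[k]}]\Big\rangle = \Big\langle \begin{pmatrix} 1 \\ 1\end{pmatrix}, \sum_{k=1}^{R}\auxDyn[{\sU[k]}]\Big\rangle = 0 .
\]
By the previous paragraph each summand on the left is nonnegative, and a sum of nonnegative reals equal to zero forces every summand to vanish. Hence \(\big\langle (1,1)^\top, \auxDyn[{\sU[k]}]\big\rangle = 0\), that is, \(\sum_{i<j}\bigl\|\uSig[i]\bigr\|^2\bigl\|\uSig[j]\bigr\|^2 = 0\) for the components \(\uSig[i]\) of each \(\sU[k]\), \(k \in [R]\dual\).

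It then remains to upgrade ``the \((1,1)^\top\)-projection vanishes'' to ``the whole vector \(\auxDyn[{\sU[k]}]\) vanishes.'' Here I would invoke nonnegativity once more: since \(\sum_{i<j}\bigl\|\uSig[i]\bigr\|^2\bigl\|\uSig[j]\bigr\|^2 = 0\) is a sum of nonnegative terms, every individual product \(\bigl\|\uSig[i]\bigr\|^2\bigl\|\uSig[j]\bigr\|^2\) must be zero for each \(\sU[k]\). This is precisely the condition that characterizes membership in \(\uhash\) in Lemma \ref{rem: IFFRemark}, so \(\auxDyn[{\sU[k]}] = (0,0)^\top\) for every \(k \in [R]\dual\), as claimed. (Equivalently, without citing Lemma \ref{rem: IFFRemark}, once each cross product \(\bigl\|\uSig[i]\bigr\|^2\bigl\|\uSig[j]\bigr\|^2\) vanishes, one of the factors \(\uSig[i]\) or \(\uSig[j]\) is the zero vector, so the corresponding \(\opr\)-term also vanishes, and both components of \(z\) reduce to sums of zeros.)

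I do not anticipate a genuine obstacle here. The only point requiring care is precisely this last step: verifying that the vanishing of the squared-norm products also kills the off-diagonal \(\opr\)-contributions, so that \(z\) itself --- and not merely its \((1,1)^\top\)-projection --- is zero. The real content of the proof is the choice of the projection direction \((1,1)^\top\), engineered so that the indefinite \(\opr\)-terms cancel and a nonnegative expression survives; everything after that reduces to the elementary fact that nonnegative summands that sum to zero must each vanish.
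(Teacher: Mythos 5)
Your proposal is correct and is essentially the paper's own argument: pairing \(z\) with \((1,1)^\top\) is exactly the paper's step of combining the two scalar component equations \eqref{eq: Consequence}, which cancels the indefinite \(\opr\)-terms and leaves the nonnegative sum of products \(\bigl\|\uSig[i]_k\bigr\|^2\bigl\|\uSig[j]_k\bigr\|^2\), forced to vanish term by term. Your concluding observation that the vanishing of these norm-products also annihilates the \(\opr\)-terms (since one factor of each is then the zero vector) is precisely the paper's passage from \eqref{eq: zFirst} to \eqref{eq: zSec}.
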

\begin{proof}
	Using the definition of the mapping \(z \) as given in \eqref{eq: constraintEq}, we get 
	\begin{align*}
		\begin{pmatrix}
			0 \\ 0
		\end{pmatrix} = \sum_{k=1}^{R}\auxDyn[{\sU[k]}] =\sum_{k=1}^{R} \sum_{i = 1}^{\numPlants-1} \sum_{j = i+1}^{\numPlants} \bigg( \bigl\|\uSig[i]_k\bigr\|^2 \bigl\|\uSig[j]_k\bigr\|^2 \begin{pmatrix} 1 \\ 1 \end{pmatrix} + \uSig[i]_k \opr \uSig[j]_k \begin{pmatrix} 1 \\ -1 \end{pmatrix}   \bigg), 
	\end{align*}
	where \(\uSig[i]_k \in \ithAdmControl[i] \) for all \(k \in [R]\dual \), \(\iRange\). The preceding vector equation gives us the following set of equations 
	\begin{equation}\label{eq: Consequence}
	\begin{aligned}
		\sum_{k=1}^{R} \sum_{i = 1}^{\numPlants-1} \sum_{j = i+1}^{\numPlants}  \bigl\|\uSig[i]_k\bigr\|^2 \bigl\|\uSig[j]_k\bigr\|^2 &= \ \ \ \sum_{k=1}^{R} \sum_{i = 1}^{\numPlants-1} \sum_{j = i+1}^{\numPlants} \uSig[i]_k \opr \uSig[j]_k, \\\sum_{k=1}^{R} \sum_{i = 1}^{\numPlants-1} \sum_{j = i+1}^{\numPlants}  \bigl\|\uSig[i]_k\bigr\|^2 \bigl\|\uSig[j]_k\bigr\|^2 &= -\sum_{k=1}^{R} \sum_{i = 1}^{\numPlants-1} \sum_{j = i+1}^{\numPlants} \uSig[i]_k \opr \uSig[j]_k .
	\end{aligned}
	\end{equation}
	In view of \eqref{eq: Consequence}, we have 
	\begin{align}\label{eq: zFirst}
		 \bigl\|\uSig[i]_k\bigr\|^2 \bigl\|\uSig[j]_k\bigr\|^2 = 0 \quad \text{for all}\ k \in [R]\dual, i, j \in [\numPlants]\dual \ \text{and} \ i \neq j, 
	\end{align}
	which by the definition of the product \(\opr \) renders 
	\begin{align}\label{eq: zSec}
		\uSig[i]_k \opr \uSig[j]_k = 0 \quad \text{for all}\ k \in [R]\dual, i, j \in [\numPlants]\dual \ \text{and} \ i \neq j. 
	\end{align}
	As a consequence of \eqref{eq: zFirst} and \eqref{eq: zSec}, we have \(\auxDyn[{\sU[k]}] = (0, 0)^\top \) for all \(k \in [R]\dual\). 
\end{proof}

\end{document}